\documentclass[11pt,a4paper]{article}
\usepackage[margin=27mm]{geometry}
\usepackage{amsmath,amssymb,amsthm,booktabs,fancyhdr,graphicx}
\usepackage[round]{natbib}
\usepackage{float}
\usepackage{hyperref}
\graphicspath{{figures/}}
\hypersetup{colorlinks=true,linkcolor=blue,citecolor=blue,pdftitle={Covariance-Calibrated Prediction-Powered Inference},pdfauthor={Siyuan Chen, Yang Bai, Huichen Zhu}}
\newtheorem{theorem}{Theorem}[section]
\newtheorem{corollary}[theorem]{Corollary}
\newtheorem{proposition}[theorem]{Proposition}
\begin{document}

\title{Covariance-Calibrated Prediction-Powered Inference}
\author{Siyuan Chen$^{1}$ \quad Yang Bai$^{1}$ \quad Huichen Zhu$^{2}$\\[0.5em]
\small $^{1}$School of Statistics and Data Science\\
\small Shanghai University of Finance and Economics\\[0.3em]
\small $^{2}$Department of Statistics and Data Science\\
\small The Chinese University of Hong Kong}
\date{}
\maketitle

\begin{abstract}
Covariance-calibrated prediction-powered inference (CC-PPI) uses unlabeled prediction scores to calibrate the covariance weight in a GMM estimator. The raw target preserves the prediction-corrected channel and replaces the labeled prediction-covariance contribution in the auxiliary residual metric with its unlabeled counterpart. Spectral protection ensures positive definiteness, and the estimator uses an equal blend of the original covariance and the protected target. We establish first-order efficiency and derive exact finite-sample risk identities for linear regression, including a design-based improvement theorem with tolerance to prediction error. A quadratic decision model gives a midpoint minimax-regret interpretation of the half-strength default. Exact strength-path formulas and convergent risk approximations bound its distance from the best fixed strength on that path. Simulations and four real-data studies show substantial reductions in parameter MSE. Independent fixed-strength references yield only small further gains on the calibration paths studied.
\end{abstract}

\noindent\textbf{Keywords:} prediction-powered inference; generalized method of moments; covariance calibration; finite-sample inference.

\section{Introduction}
\label{sec:intro}

Prediction-powered inference (PPI) combines a small labeled sample with predictions on a larger unlabeled sample to estimate a population parameter \citep{angelopoulos2023ppi}. The predictor may be imperfect: a labeled correction removes its population bias when labeled and unlabeled covariates share a distribution. PPI++ tunes a scalar correction \citep{angelopoulos2024ppiplus}, PSPA uses coordinate corrections \citep{miao2025pspa}, and prediction de-correlated inference (PDC) uses a full correction matrix \citep{gan2024prediction}. Their first-order behavior can be compared through the influence maps generated by a common set of supervised and auxiliary moments.

Estimators with the same first-order correction can differ at finite sample sizes. In a stacked generalized method of moments (GMM) formulation \citep{hansen1982large}, the criterion is $\widehat g(\theta)^\top W\widehat g(\theta)$ with $\widehat g=(m_1^\top,m_2^\top)^\top$ and a $2d\times2d$ weight. Its limiting influence map depends on the coefficient $W_{11}^{-1}W_{12}$, but its finite-sample stationarity equation also retains the auxiliary sample Jacobian. Minimizing this criterion can therefore give a different estimate from solving a coefficient-corrected equation. Estimated-weight and moment/Jacobian effects are classical sources of finite-sample differences \citep{neweysmith2003,windmeijer2005}.

We formulate a GMM-PPI family by blending the estimated joint covariance with a target and introduce \emph{covariance-calibrated PPI} (CC-PPI). At a supervised pilot, CC-PPI uses the unlabeled prediction-score covariance to calibrate the auxiliary residual metric while preserving the prediction-corrected channel. We clip the relative covariance update to preserve positive definiteness and use a half-strength blend. Section~\ref{subsec:ccppi} gives the four-step procedure; Section~\ref{subsec:calibrated-target} derives its raw target from channel preservation and unlabeled covariance calibration.

We study the resulting estimator through its construction, efficiency, finite-sample risk and empirical performance.
\begin{enumerate}
\item[(i)] \textbf{Covariance calibration.} We characterize channel-preserving targets (Proposition~\ref{prop:fs-calibration-class}), derive the selected raw target with the unlabeled covariance pool (Proposition~\ref{prop:fs-target-forcing}), and establish that spectral protection retains the structure and positive definiteness. The construction requires only the supplied data and predictions.
\item[(ii)] \textbf{Efficiency and finite-sample risk.} We establish the fixed-moment GMM efficiency bound and conditions under which CC-PPI retains it. In linear regression, an exact signed risk-transfer identity analyzes the complete same-data estimator without Gaussian errors (Theorem~\ref{thm:fs-signed-transfer}). A primitive design theorem proves strict improvement and quantifies tolerance to prediction error (Theorem~\ref{thm:fs-primitive-improvement}).
\item[(iii)] \textbf{Default strength and path-risk comparison.} The protected path admits exact realized-loss analysis and convergent population-risk approximations. We derive deterministic and conditional regret bounds, and derive a midpoint minimax-regret interpretation of half strength under quadratic excess risk. Independent reference experiments quantify the small remaining gains along the calibration path.
\item[(iv)] \textbf{Simulations and data applications.} Linear and logistic simulations show substantial reductions in parameter MSE, with covariance calibration improving the full GMM fit in many configurations. An exact noisy-surrogate example illustrates the efficiency advantage of full correction. Four real-data studies compare the method using externally trained predictors and an independently trained fixed-strength reference.
\end{enumerate}

The construction applies to smooth moments. We develop the exact parameter-risk theory for linear regression and evaluate nonlinear logistic moments empirically. Supplementary Section~\ref{app:empirical-diagnostics} contains the extended numerical diagnostics and implementation details.

Section~\ref{sec:related} reviews related work; Section~\ref{sec:setup} defines the family and method; Sections~\ref{sec:asymptotic} and~\ref{sec:finite-sample} develop the theory; Sections~\ref{sec:simulation} and~\ref{sec:real-data} report the experiments. Proofs and reproducibility details are in the supplementary material.

\section{Related Work}
\label{sec:related}

\paragraph{Prediction-powered inference and its variants.}
PPI \citep{angelopoulos2023ppi} and PPI++ \citep{angelopoulos2024ppiplus} combine labeled and predicted outcomes through a scalar correction; PSPA \citep{miao2025pspa} allows a per-coordinate weight; PDC \citep{gan2024prediction} uses a full-matrix de-correlation; cross-PPI \citep{zrnic2024cross} and task-specific extensions \citep{miao2024task,motwani2023revisiting} broaden the setting. \citet{mozer2026ppi} observes that the correction is the survey-sampling difference estimator. CC-PPI works with the same correction structure and calibrates the \emph{covariance} of the stacked moments before inversion.

\paragraph{GMM with auxiliary data and estimated weights.}
Efficient two-sample GMM and its semiparametric efficiency are classical \citep{hansen1982large,neweymcfadden1994,chen2008semiparametric}; combining samples of unequal length or with missing data is treated in \citep{lynch2013using,chaudhuri2016gmm,muris2010efficient,franguridi2025generalized}. The finite-sample cost of estimated weights is well documented \citep{altonji1996small,windmeijer2005,neweysmith2003,hwangsun2018}. Our finite-sample risk analysis keeps the joint dependence of the pilot, moments and estimated weight.

\paragraph{Finite-sample shrinkage and tuning.}
Normal-mean shrinkage separates direction from magnitude \citep{tan2015}; aggregation of affine estimators and of ordered smoothers shows tuning can be nearly cost-free \citep{bellec2018,bellecyang2020}; Stein's unbiased risk estimate and its optimism are studied in \citep{belleczhang2021,tibshiranirosset2019,cauchois2021}. Efficient shrinkage in parametric GMM and the risk of averaging estimators are treated in \citep{hansen2016shrinkage,cheng2019gmm}. For CC-PPI, we derive a conditional scalar optimum under the stated sign assumptions and a midpoint minimax-regret rule under quadratic excess risk. The full strength path is generally rational, so its risk requires a separate analysis.

\paragraph{Surrogate, imputation, and synthetic data.}
Prediction- and surrogate-powered inference \citep{chen2025surrogate,ji2025predictions}, synthetic-data inference \citep{byun2025valid}, and imputation-powered inference \citep{duan2025imputation,zhao2025imputation} all replace a missing or expensive quantity by a predicted one. The non-asymptotic limits of PPI are analyzed in \citep{mani2025no}. CC-PPI studies how covariance regularization affects the finite-sample risk of the corrected estimator. The prediction function may be misspecified.

\section{Problem Setup and the GMM-PPI Family}
\label{sec:setup}

Members of the GMM-PPI family use the same labeled data, unlabeled covariates and prediction function, but regularize the joint moment covariance differently before inversion. We define the moments and covariance weights below, then specify CC-PPI in four steps in Section~\ref{subsec:ccppi}.

\subsection{Data, target, and prediction-assisted moments}
\label{subsec:data-moments}

Let $(X,Y)\sim\mathbb P$. We observe a labeled sample $\mathcal L=\{(X_i,Y_i)\}_{i=1}^n$ and an independent unlabeled sample $\mathcal U=\{\widetilde X_j\}_{j=1}^N$. Each sample is independently and identically distributed, and both covariate samples follow the same marginal distribution. We have access to a prediction function $f$ constructed independently of both inference samples. In typical applications, $f$ is supplied as a black box: only its predictions are required, with no access to its internal structure or training procedure. We condition on $f$ throughout the analysis.

The target $\theta_*\in\Theta\subseteq\mathbb R^d$ is the unique solution of
\begin{equation}
 \mathbb E\{\psi(X,Y;\theta_*)\}=0,\label{eq:target}
\end{equation}
where $\psi$ is a known $d$-dimensional estimating function. For differentiable loss minimization, $\psi(x,y;\theta)=\nabla_\theta\ell_\theta(x,y)$, with $\theta_*$ an interior population minimizer. Estimating equations without an underlying loss are also permitted.

Let $h(x;\theta)$ be a $d$-dimensional auxiliary function computable from $x$ and $f(x)$. Our main choice is
\begin{equation}
 h(x;\theta)=\psi(x,f(x);\theta).\label{eq:auxiliary-score}
\end{equation}
This substitution is appropriate when the estimating function is defined for the prediction output. More generally, a specified transformation of $x$ and $f(x)$ can be used. Once chosen, the same auxiliary function is used throughout the family. Define
\begin{align}
 m_1(\theta)&=\frac1n\sum_{i=1}^n\psi(X_i,Y_i;\theta),\label{eq:m1}\\
 m_2(\theta)&=\frac1N\sum_{j=1}^N h(\widetilde X_j;\theta)-\frac1n\sum_{i=1}^n h(X_i;\theta),\label{eq:m2}\\
 \widehat g(\theta)&=\bigl(m_1(\theta)^\top,m_2(\theta)^\top\bigr)^\top.\label{eq:stacked}
\end{align}
For every fixed $\theta$ at which the expectations exist,
\begin{equation}
 \mathbb E\{m_2(\theta)\}=0.\label{eq:auxiliary-zero}
\end{equation}
Prediction misspecification therefore does not change the population identifying equation. The auxiliary moment supplies information through its dependence with the supervised moment. The sampling, integrability and regularity conditions stated below still apply. The difference of auxiliary means is the prediction-based correction familiar from PPI \citep{angelopoulos2023ppi}; here the two moment vectors are retained separately in a joint quadratic criterion.

\subsection{Joint moment covariance}
\label{subsec:covariance}

Set $M=\min(n,N)$, $c_{1,n}=M/n$, and $c_{2,n}=M/N$. At $\theta_*$, write
\begin{equation}
 \Sigma_{11}=\operatorname{Var}(\psi),\quad
 \Sigma_{1h}=\operatorname{Cov}(\psi,h),\quad
 \Sigma_{hh}=\operatorname{Var}(h),\label{eq:population-blocks}
\end{equation}
with arguments suppressed. Independence gives
\begin{equation}
 \operatorname{Var}\{\sqrt M\,\widehat g(\theta_*)\}=S_n=
 \begin{pmatrix}
 c_{1,n}\Sigma_{11}&-c_{1,n}\Sigma_{1h}\\
 -c_{1,n}\Sigma_{1h}^\top&(c_{1,n}+c_{2,n})\Sigma_{hh}
 \end{pmatrix}.\label{eq:population-covariance}
\end{equation}
The negative cross block follows from the sign of the labeled auxiliary mean. Under a stable sample-size ratio, $S_n$ has a limit $S$. Positive definiteness of this limit is a condition for the regular efficiency analysis, not a restriction on every local sequence studied later.

Obtain a supervised initial estimate $t$, typically by solving $m_1(t)=0$, and evaluate
\begin{equation}
 s_i=\psi(X_i,Y_i;t),\quad h_i=h(X_i;t),\quad h_j^U=h(\widetilde X_j;t).\label{eq:initial-scores}
\end{equation}
For a sample of size $k$, use the centered covariance
\begin{equation}
 \widehat{\operatorname{Cov}}_k(a,b)=\frac1k\sum_{i=1}^k(a_i-\bar a)(b_i-\bar b)^\top.\label{eq:sample-covariance}
\end{equation}
If $m_1(t)=0$ exactly, centering supervised scores changes neither their covariance nor the cross covariance with auxiliary scores. With a numerical regularizer $\varepsilon_M\geq0$, define
\begin{align}
 \widehat V&=c_{1,n}\widehat{\operatorname{Cov}}_{\mathcal L}(s,s)+\varepsilon_M I_d,\label{eq:Vhat}\\
 \widehat R&=c_{1,n}\widehat{\operatorname{Cov}}_{\mathcal L}(s,h),\label{eq:Rhat}\\
 \widehat D&=c_{1,n}\widehat{\operatorname{Cov}}_{\mathcal L}(h,h)
 +c_{2,n}\widehat{\operatorname{Cov}}_{\mathcal U}(h^U,h^U)+\varepsilon_M I_d,\label{eq:Dhat}\\
 \widehat S&=\begin{pmatrix}\widehat V&-\widehat R\\-\widehat R^\top&\widehat D\end{pmatrix}.\label{eq:Shat}
\end{align}
Before regularization, this is a scaled labeled covariance of $(s_i^\top,-h_i^\top)^\top$ plus a positive semidefinite unlabeled contribution. It is therefore positive semidefinite; $\varepsilon_M>0$ makes it positive definite. The ridge value is part of the estimator specification. Retaining the unregularized covariance limit requires $\varepsilon_M\to0$.

\subsection{A family of covariance-weighted estimators}
\label{subsec:family}

Let $S_{0,n}(\alpha)$ be a symmetric positive definite target, where $\alpha$ indexes a specified target collection. Targets may depend on the data, initial estimate, and sample sizes. For $0\leq\lambda\leq1$, set
\begin{align}
 \widehat S_{\lambda,\alpha}&=(1-\lambda)\widehat S+\lambda S_{0,n}(\alpha),\label{eq:covariance-family}\\
 W_{\lambda,\alpha}&=\widehat S_{\lambda,\alpha}^{-1},\label{eq:weight-family}\\
 Q_n(\theta;\lambda,\alpha)&=\widehat g(\theta)^\top W_{\lambda,\alpha}\widehat g(\theta),\label{eq:criterion}\\
 \widehat\theta_{\lambda,\alpha}&\in\operatorname*{arg\,min}_{\theta\in\Theta}Q_n(\theta;\lambda,\alpha).\label{eq:family-estimator}
\end{align}
If $\widehat S\succ0$, the blend is positive definite for all $\lambda\in[0,1]$. If the unregularized $\widehat S$ is singular, a positive definite target gives an invertible blend for $\lambda>0$; the endpoint $\lambda=0$ requires separate treatment.

Both tuning quantities may be data dependent. After selection, $t$, $\widehat S$, the target, and the tuning quantities are held fixed during minimization. Recomputing the weight at each candidate $\theta$ defines a different criterion with additional derivative terms.

The target specifies which covariance features are regularized, and $\lambda$ sets the strength. The finite-sample analysis studies the risk of these choices.

The $2d\times2d$ matrix $W_{\lambda,\alpha}$ weights the stacked moments; the $d\times d$ matrix $\Lambda$ in $m_1+\Lambda m_2=0$ is a correction coefficient. The relation between them is derived in the next section.

CC-PPI supplies the concrete covariance target developed below. Cross-block, spherical and diagonal alternatives, together with a sample-size benchmark strength, are specified in Supplementary Section~\ref{subsec:targets} as additional members of the GMM-PPI family.

\subsection{Covariance-calibrated PPI: construction and estimation}
\label{subsec:ccppi}

Covariance-calibrated PPI (CC-PPI) chooses a target using the unlabeled prediction scores. Given $m_1$ and $m_2$, the following four steps define the estimator. Section~\ref{sec:finite-sample} analyzes its risk.

\begin{enumerate}
\item \textbf{Obtain the pilot and evaluate the scores.}
Compute a supervised estimate $t$ by solving $m_1(t)=0$, or by the specified supervised fitting procedure. Evaluate $s_i=\psi(X_i,Y_i;t)$, $h_i=h(X_i;t)$ and $h_j^U=h(\widetilde X_j;t)$ as in \eqref{eq:initial-scores}. Here $h(x;\theta)=\psi(x,f(x);\theta)$ is a vector of prediction-based scores, not the prediction itself. For squared-loss linear regression, for example, $s_i=X_i(X_i^\top t-Y_i)$ and $h_i=X_i(X_i^\top t-f(X_i))$, with the same substitution on the unlabeled sample. All quantities used to construct the weight are evaluated at this pilot.

\item \textbf{Estimate the joint moment covariance.}
Use the centered sample covariance with divisor equal to sample size in \eqref{eq:sample-covariance}. Define
\begin{equation}
 C_L=\widehat{\operatorname{Cov}}_{\mathcal L}(h,h),\qquad
 C_U=\widehat{\operatorname{Cov}}_{\mathcal U}(h^U,h^U).
 \label{eq:ccppi-cl-cu}
\end{equation}
These are covariances of the $d$-dimensional prediction scores at $t$. Form $\widehat S$ using \eqref{eq:Vhat}--\eqref{eq:Shat}, with $\varepsilon_M>0$. To keep the calibration on one scale, put $\rho=n/N$, $\eta=\varepsilon_M/c_{1,n}$ and $\mathsf S=\widehat S/c_{1,n}$, so that
\begin{equation}
 \mathsf S=
 \begin{pmatrix}
 \widehat{\operatorname{Cov}}_{\mathcal L}(s,s)+\eta I_d
 &-\widehat{\operatorname{Cov}}_{\mathcal L}(s,h)\\
 -\widehat{\operatorname{Cov}}_{\mathcal L}(s,h)^\top
 &C_L+\rho C_U+\eta I_d
 \end{pmatrix}.
 \label{eq:ccppi-nscale}
\end{equation}
Thus $\mathsf S$ is the estimated covariance on the $n$ scale; multiplying the whole covariance by $c_{1,n}>0$ does not change the GMM minimizer. The experiments use $N\ge n$ and $\eta=n^{-2}$, for which $c_{1,n}=1$.

\item \textbf{Calibrate and protect the covariance.}
The full GMM criterion contains an auxiliary residual penalty in addition to the prediction-corrected component. We use the unlabeled prediction-score covariance to calibrate the metric of this penalty while preserving the covariance structure of the prediction-corrected channel; Section~\ref{sec:finite-sample} develops the finite-sample rationale. The raw difference and its joint-moment update are
\begin{equation}
 K=C_U-C_L,\qquad T=\begin{pmatrix}I_d\\-I_d\end{pmatrix},\qquad
 \Delta=TKT^\top=
 \begin{pmatrix}K&-K\\-K&K\end{pmatrix}.
 \label{eq:ccppi-raw-update}
\end{equation}
In the basis $(c,m_2)$, where $c=m_1+m_2$ is the prediction-corrected moment, let $B_c=\left(\begin{smallmatrix}I_d&I_d\\0&I_d\end{smallmatrix}\right)$. Then $B_c\Delta B_c^\top=\operatorname{diag}(0,K)$: the covariance of $c$ and its cross covariance with $m_2$ are preserved, while the auxiliary covariance changes from $C_L+\rho C_U+\eta I_d$ to $(1+\rho)C_U+\eta I_d$. The update therefore replaces the labeled prediction-covariance contribution with its unlabeled counterpart while preserving the corrected channel.

Under common covariate sampling, $C_L$ and $C_U$ estimate the same population score covariance in the regular limit. A large unlabeled sample supplies additional information about that covariance. This motivates the calibration, but does not imply that $C_U$ is always more accurate at the fitted pilot or that the raw update always lowers parameter risk.

The raw update may be too large to preserve positive definiteness. Protect it by whitening with $\mathsf S$, clipping eigenvalues, and returning to the original scale:
\begin{align}
 E&=\mathsf S^{-1/2}\Delta\mathsf S^{-1/2}
   =Q\operatorname{diag}(e_1,\ldots,e_{2d})Q^\top,\nonumber\\
 \widetilde e_k&=\max\{-\kappa,\min(e_k,\kappa)\},\qquad
 E_p=Q\operatorname{diag}(\widetilde e_1,\ldots,\widetilde e_{2d})Q^\top,\nonumber\\
 \Delta_p&=\mathsf S^{1/2}E_p\mathsf S^{1/2},\qquad \kappa=1/2.
 \label{eq:ccppi-spectral-protection}
\end{align}
All square roots are symmetric positive definite matrix square roots. Clipping acts on eigenvalues, not individual matrix entries. It bounds the change relative to the current covariance, giving $(1-\lambda\kappa)\mathsf S\preceq\mathsf S+\lambda\Delta_p\preceq(1+\lambda\kappa)\mathsf S$ for $0\le\lambda\le1$.

\item \textbf{Blend at half strength and fit with the frozen weight.}
The protected covariance target and its blend, expressed on the original $M$ scale, are
\begin{equation}
 S_{0,n}^{\rm cal}=c_{1,n}(\mathsf S+\Delta_p),\qquad
 \widehat S_\lambda=(1-\lambda)\widehat S+\lambda S_{0,n}^{\rm cal}
 =c_{1,n}(\mathsf S+\lambda\Delta_p).
 \label{eq:ccppi-scaled-target}
\end{equation}
CC-PPI uses the fixed default
\begin{equation}
 \lambda_{\rm CC}=1/2,
 \label{eq:fs-default-choice}
\end{equation}
and estimates the parameter by
\begin{equation}
 \widehat\theta_{\rm CC}\in\operatorname*{arg\,min}_{\theta\in\Theta}
 \widehat g(\theta)^\top\widehat S_{\lambda_{\rm CC}}^{-1}\widehat g(\theta).
 \label{eq:fs-calibrated-member}
\end{equation}
The pilot, covariances and protected update remain fixed throughout optimization; only the moments vary with $\theta$. Section~\ref{subsec:implementation} gives the closed-form solution for affine moments and the optimization requirements for nonlinear moments. The cap $\kappa=1/2$ controls the largest relative perturbation, whereas $\lambda_{\rm CC}=1/2$ blends the original covariance and the protected target equally; these are distinct choices. Their combination puts the relative covariance eigenvalues in $[3/4,5/4]$. Half strength is a default, not a universally optimal tuning parameter. Section~\ref{sec:finite-sample} uses $h$ to denote this fixed strength, separately from the auxiliary function $h(x;\theta)$.
\end{enumerate}

An equivalent implementation of Step 3 requires a $d\times d$ spectral decomposition. Define
\begin{equation}
 F=T^\top\mathsf S^{-1}T,
 \label{eq:fs-calibration-ingredients}
\end{equation}
and apply the same eigenvalue clipping to obtain
\begin{equation}
 K_p=F^{-1/2}\operatorname{clip}_{[-\kappa,\kappa]}\bigl(F^{1/2}(C_U-C_L)F^{1/2}\bigr)F^{-1/2},
 \qquad \Delta_p=TK_pT^\top.
 \label{eq:fs-target-construction}
\end{equation}
This equals \eqref{eq:ccppi-spectral-protection} because $\mathsf S^{-1/2}TF^{-1/2}$ has orthonormal columns and clipping maps zero to zero. In particular, protection retains the channel-preserving structure of the raw update.

The calibration of $\widehat S_{\lambda_{\rm CC}}$ is designed for GMM parameter estimation; minimizing covariance-estimation error is a different objective. Section~\ref{subsec:calibrated-target} formalizes the two principles defining the raw target. With the pool fixed to $C_U$, they determine the raw target, and spectral protection completes the method specification. Under the regular conditions of Section~\ref{sec:asymptotic}, $C_U-C_L\to_p0$, so the calibration vanishes and CC-PPI retains first-order efficiency. Section~\ref{sec:finite-sample} studies its finite-sample parameter risk, which depends on the design and predictor.

\subsection{Frozen-weight optimization and implementation}
\label{subsec:implementation}

For differentiable moments, write
\begin{equation}
 A_1(\theta)=\frac{\partial m_1(\theta)}{\partial\theta^\top},\quad
 A_5(\theta)=\frac{\partial m_2(\theta)}{\partial\theta^\top},\quad
 \widehat J(\theta)=\begin{pmatrix}A_1(\theta)\\A_5(\theta)\end{pmatrix}.\label{eq:sample-jacobians}
\end{equation}
An interior minimizer satisfies
\begin{equation}
 \widehat J(\widehat\theta)^\top W_{\lambda,\alpha}\widehat g(\widehat\theta)=0.\label{eq:gmm-stationarity}
\end{equation}
Although the auxiliary population moment is identically zero, its sample derivative $A_5$ need not vanish. The criterion retains the sample moment path; its risk consequences and anchor interpretation are studied in the finite-sample section.

For affine moments $\widehat g(\theta)=H\theta-b$, with $H$ of full column rank, the unconstrained solution is
\begin{equation}
 \widehat\theta_{\lambda,\alpha}=(H^\top W_{\lambda,\alpha}H)^{-1}H^\top W_{\lambda,\alpha}b.\label{eq:affine-solution}
\end{equation}
For nonlinear moments, the criterion need not be convex. Implementations must specify initialization and stopping rules, and check objective values and first-order residuals. A stationary point alone does not certify a global minimum.

Supplementary Section~\ref{supp:one-step} gives an optional one-step approximation and its first-order equivalence to the full fit. For nonlinear moments, the approximation and the frozen-criterion minimizer can differ at finite sample sizes.

Relative to inverse-covariance GMM with the same moments and pilot $t$, the additional computation after forming $\widehat S$ is target construction and strength selection. The displayed targets and benchmark strength require no additional labels, true parameters, or access to the internals of $f$. A learned auxiliary function or a separate pilot split defines a different procedure.

\subsection{Scope of the construction}
\label{subsec:scope}

GMM determines how the joint moments enter the quadratic criterion \citep{hansen1982large}. First-order efficiency depends on the limiting weight, while finite-sample parameter risk also depends on weight estimation and optimization. We analyze these two questions in turn.

The next section gives the regular efficiency bound and the conditions under which weight perturbations preserve it. The finite-sample analysis then examines the correction and auxiliary penalty, including the randomness of estimated weights and selected strengths. Degenerating auxiliary covariances fall outside the fixed positive definite limit used for the regular theory.

A $q$-dimensional auxiliary function is also permitted: the stacked vector has dimension $d+q$, the cross block is $d\times q$, and each spherical block uses its own dimension. Efficiency comparisons must keep the chosen auxiliary information fixed.

PPI++ and PSPA provide related approaches to prediction-assisted adaptation \citep{angelopoulos2024ppiplus,miao2025pspa}. Supplementary Section~\ref{app:literature-protocol} describes their relation to this construction. Supplementary Sections~\ref{app:asymptotic-proofs} and \ref{app:calibration-proofs} contain the probability and matrix-algebra arguments.

\section{First-Order Efficiency and Optimal Weighting}
\label{sec:asymptotic}

For the moments in Section~\ref{sec:setup}, we compare limiting parameter covariances in the Loewner order, conditional on the externally constructed predictor $f$. The classical GMM bound \citep{hansen1982large} identifies which family members are first-order efficient. The comparison holds the auxiliary information fixed and concerns parameter covariances, rather than an ordering of the weight matrices.

\subsection{Regularity and the joint moment limit}
\label{subsec:regularity}

Throughout this section, $d$ and $f$ are fixed, $n,N\to\infty$, and $n/N\to r\in[0,\infty)$. With the normalization of Section~\ref{subsec:covariance}, write $c_1=\lim M/n>0$ and $c_2=\lim M/N\geq0$. Thus $c_2/c_1=r$. All expectations and probability limits below are conditional on $f$ when it is random. Define
\begin{equation}
 \mu(\theta)=\mathbb E\{\psi(X,Y;\theta)\},\qquad
 g(\theta)=\begin{pmatrix}\mu(\theta)\\0\end{pmatrix},\qquad
 A=\left.\frac{\partial\mu(\theta)}{\partial\theta^\top}\right|_{\theta=\theta_*},\qquad
 G=\begin{pmatrix}A\\0\end{pmatrix}.\label{eq:population-jacobian}
\end{equation}
The zero lower block follows from the identical covariate distributions; it is not a statement that the sample derivative $A_5$ vanishes.

We use the following regular conditions.
\begin{enumerate}
 \item[(R1)] $\Theta$ is compact, $\theta_*$ is an interior point, and $\mu$ is continuous with unique zero $\theta_*$. The uniform law $\sup_{\theta\in\Theta}\|\widehat g(\theta)-g(\theta)\|\overset p\to0$ holds. Compactness may be replaced by another condition that yields consistency of the selected minimizer.
 \item[(R2)] On an open neighborhood of $\theta_*$, the sample moments are continuously differentiable and their Jacobians converge uniformly in probability to the continuous population Jacobian. Differentiation under expectation is valid, and $A$ is nonsingular.
 \item[(R3)] $\psi(X,Y;\theta_*)$ and $h(X;\theta_*)$ have finite second moments. Their joint covariance gives
 \begin{equation}
 S=\begin{pmatrix}V&-R\\-R^\top&D\end{pmatrix}\succ0,\qquad
 V=c_1\Sigma_{11},\quad R=c_1\Sigma_{1h},\quad
 D=(c_1+c_2)\Sigma_{hh}.\label{eq:S-limit}
 \end{equation}
 \item[(R4)] The frozen weight satisfies $W_n\overset p\to W\succ0$. The estimator is a measurable global minimizer of its frozen criterion. A numerical approximation is also allowed if it satisfies $Q_n(\widehat\theta)\leq\inf_{\Theta}Q_n+o_p(1)$ and $\widehat J(\widehat\theta)^\top W_n\widehat g(\widehat\theta)=o_p(M^{-1/2})$.
\end{enumerate}
These fixed-dimension conditions require a positive definite limiting moment covariance. Sequences with singular limits need a different analysis.

Under (R3), the independent-sample central limit theorem gives
\begin{equation}
 \sqrt M\,\widehat g(\theta_*)\Rightarrow N(0,S).\label{eq:moment-clt}
\end{equation}
Indeed, the labeled contribution is the centered average of $(\psi^\top,-h^\top)^\top$, and the unlabeled contribution is the centered average of $(0^\top,h^\top)^\top$. Their variances add to \eqref{eq:S-limit}. If $c_2=0$, the unlabeled fluctuation vanishes in probability by its second moment bound.

The consistency of the covariance estimator requires a further condition when it is evaluated at the random pilot $t$. We assume $t\overset p\to\theta_*$, uniform laws for the score first and second products near $\theta_*$, continuity of their expectations, and $\varepsilon_M\to0$. These imply $\widehat S\overset p\to S$. A sufficient route is continuity of the scores together with square-integrable local envelopes for $\psi$ and $h$. This covariance condition is imposed wherever the feasible inverse-covariance weight or a plug-in covariance is used.

\subsection{Asymptotic distribution for a general weight}
\label{subsec:general-weight}

\begin{theorem}[General frozen weight]\label{thm:general-weight}
Under (R1)--(R4), the estimator is consistent and
\begin{align}
 \sqrt M(\widehat\theta-\theta_*)&=-K(W)\sqrt M\,\widehat g(\theta_*)+o_p(1),\label{eq:influence-general}\\
 K(W)&=(G^\top WG)^{-1}G^\top W,\label{eq:K-general}\\
 \sqrt M(\widehat\theta-\theta_*)&\Rightarrow N\{0,\mathcal V(W)\},\label{eq:normal-general}\\
 \mathcal V(W)&=K(W)SK(W)^\top.\label{eq:sandwich-general}
\end{align}
The weight may be estimated from the same inference samples; independence between $W_n$ and the moments is not required.
\end{theorem}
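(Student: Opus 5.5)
The proof follows the classical extremum-estimator route of \citet{hansen1982large} and \citet{neweymcfadden1994}, in two stages: consistency, then a mean-value expansion of the first-order condition \eqref{eq:gmm-stationarity}.

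\textbf{Consistency.} By (R1) and $W_n\overset p\to W$, the frozen criterion satisfies
\[
\sup_{\theta\in\Theta}\bigl|Q_n(\theta)-Q(\theta)\bigr|\overset p\to0,\qquad Q(\theta)=g(\theta)^\top Wg(\theta).
\]
The difference is controlled by $\|\widehat g-g\|\,\|W_n\|\,(\|\widehat g\|+\|g\|)$ plus $\|g\|^2\|W_n-W\|$. The term $\sup_\Theta\|g\|$ is finite because $\mu$ is continuous on the compact set $\Theta$. Since $W\succ0$ and $g(\theta)=(\mu(\theta)^\top,0)^\top$, we have $Q(\theta)=0$ exactly when $\mu(\theta)=0$. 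By the unique-zero condition in (R1), $\theta_*$ is therefore the unique minimizer of the continuous function $Q$ on a compact set. The standard argmin theorem, which allows the $o_p(1)$ near-minimization in (R4), then gives $\widehat\theta\overset p\to\theta_*$.

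\textbf{Linearization.} With probability tending to one, $\widehat\theta$ lies in the neighborhood of (R2). Expand $\widehat g(\widehat\theta)=\widehat g(\theta_*)+\bar J(\widehat\theta-\theta_*)$, where $\bar J$ has rows evaluated at coordinatewise intermediate points. Using the integral form of the mean-value theorem avoids measurability issues. Uniform Jacobian convergence, continuity of the population Jacobian and consistency give
\[
\bar J\overset p\to G,\qquad \widehat J(\widehat\theta)\overset p\to G.
\]
The important point is that $A_5\overset p\to0$ because the population auxiliary moment is identically zero, so the lower block of $G$ vanishes even though $A_5$ does not vanish at finite $n$. Substituting into the condition $\widehat J(\widehat\theta)^\top W_n\widehat g(\widehat\theta)=o_p(M^{-1/2})$ from (R4) gives
\[
\bigl\{\widehat J(\widehat\theta)^\top W_n\bar J\bigr\}\sqrt M(\widehat\theta-\theta_*)=-\widehat J(\widehat\theta)^\top W_n\sqrt M\,\widehat g(\theta_*)+o_p(1).
\]
The bracketed matrix converges to $G^\top WG=A^\top W_{11}A$. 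This limit is nonsingular because $A$ is nonsingular by (R2) and $W_{11}\succ0$ as a principal block of a positive definite matrix. Hence the bracket is invertible with probability tending to one.

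\textbf{Expansion and limit law.} By \eqref{eq:moment-clt}, $\sqrt M\,\widehat g(\theta_*)=O_p(1)$. Slutsky's lemma then lets us replace the random matrices by their limits at an $o_p(1)$ cost, which yields \eqref{eq:influence-general} with $K(W)$ as in \eqref{eq:K-general}. The CLT \eqref{eq:moment-clt} then gives \eqref{eq:normal-general} with the sandwich \eqref{eq:sandwich-general}. Independence between $W_n$ and the moments is never used: only $W_n\overset p\to W$ enters, through products with $O_p(1)$ terms. This is why the weight may be estimated from the same inference samples.

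\textbf{Main obstacle.} The delicate step is the uniform criterion convergence and the localization needed to apply (R2). Both must allow the data-dependent $W_n$ and an approximate minimizer. If compactness is replaced by another condition, as (R1) permits, consistency would instead have to be taken as given from that condition.
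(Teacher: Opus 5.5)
Your proposal is correct and follows essentially the same route as the paper's proof. Consistency comes from uniform convergence of the frozen criterion to $g(\theta)^\top Wg(\theta)$, whose unique minimizer is $\theta_*$, allowing the $o_p(1)$ near-minimization; the integral-form expansion $\widehat g(\widehat\theta)=\widehat g(\theta_*)+\overline J_n(\widehat\theta-\theta_*)$ is then substituted into the approximate first-order condition, $G^\top WG=A^\top W_{11}A$ is nonsingular, and Slutsky plus the joint CLT finish the argument using only $W_n\overset{p}{\to}W$.
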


\begin{proof}
The frozen criterion converges uniformly to $g(\theta)^\top Wg(\theta)$. Positive definiteness of $W$ and (R1) give its unique minimizer $\theta_*$, hence consistency. Uniform Jacobian convergence and a Taylor expansion around $\theta_*$ give
\begin{equation}
 0=G^\top W\widehat g(\theta_*)+
 \{G^\top WG+o_p(1)\}(\widehat\theta-\theta_*)+o_p(M^{-1/2}).\label{eq:stationarity-expansion}
\end{equation}
Here the coefficient of $\widehat g(\theta_*)$ differs from $G^\top W$ by $o_p(1)$, whose product with the $O_p(M^{-1/2})$ moment is negligible. Nonsingularity of $G^\top WG$ gives \eqref{eq:influence-general}. The central limit theorem and Slutsky's theorem complete the proof. The argument uses convergence in probability of the estimated weight, not independence from the moments.
\end{proof}

\subsection{Loewner optimality and equality conditions}
\label{subsec:loewner}

\begin{theorem}[Optimal parameter covariance]\label{thm:loewner}
For $S\succ0$ and $G$ of full column rank, define
\begin{equation}
 K_*=(G^\top S^{-1}G)^{-1}G^\top S^{-1},\qquad
 \mathcal V_*=(G^\top S^{-1}G)^{-1}.\label{eq:optimal-covariance}
\end{equation}
For every symmetric positive definite $W$,
\begin{equation}
 \mathcal V(W)-\mathcal V_*
 =\{K(W)-K_*\}S\{K(W)-K_*\}^\top\succeq0.\label{eq:loewner-gap}
\end{equation}
Equality of the covariance matrices holds if and only if $K(W)=K_*$. For a specified contrast $a^\top\theta$, equality of asymptotic variances holds if and only if $a^\top\{K(W)-K_*\}=0$.
\end{theorem}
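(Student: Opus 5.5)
The argument is the standard GMM projection calculation: expand the sandwich $\mathcal V(W)=K(W)SK(W)^\top$ from Theorem~\ref{thm:general-weight} around $K_*$ and show that the cross terms vanish.

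\textbf{Step 1: the left-inverse identities.} We check that $K(W)G=I_d$ for every symmetric positive definite $W$, and likewise $K_*G=I_d$. Both follow directly from \eqref{eq:K-general} and \eqref{eq:optimal-covariance}. The matrices $G^\top WG$ and $G^\top S^{-1}G$ are nonsingular because $G$ has full column rank and $W,S^{-1}\succ0$. Hence $\{K(W)-K_*\}G=0$.

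\textbf{Step 2: orthogonality and the decomposition.} The key computation is
\[
\{K(W)-K_*\}SK_*^\top=\{K(W)-K_*\}G(G^\top S^{-1}G)^{-1}=0,
\]
using $SK_*^\top=SS^{-1}G(G^\top S^{-1}G)^{-1}=G(G^\top S^{-1}G)^{-1}$ and Step 1. Writing $K(W)=K_*+\{K(W)-K_*\}$ and expanding $K(W)SK(W)^\top$, both cross terms vanish by this orthogonality. We are left with
\[
K_*SK_*^\top+\{K(W)-K_*\}S\{K(W)-K_*\}^\top.
\]
Finally, $K_*SK_*^\top=(G^\top S^{-1}G)^{-1}G^\top S^{-1}G(G^\top S^{-1}G)^{-1}=\mathcal V_*$. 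This gives \eqref{eq:loewner-gap}, and the right-hand side is positive semidefinite because $S\succ0$.

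\textbf{Step 3: equality conditions.} Set $\Delta_K=K(W)-K_*$. The gap $\Delta_KS\Delta_K^\top$ is zero if and only if $\operatorname{tr}(\Delta_KS\Delta_K^\top)=\|S^{1/2}\Delta_K^\top\|_F^2=0$. Since $S^{1/2}$ is nonsingular, this holds if and only if $\Delta_K=0$. For a contrast, the variance gap is $a^\top\Delta_KS\Delta_K^\top a=\|S^{1/2}\Delta_K^\top a\|^2$, which vanishes if and only if $a^\top\Delta_K=0$.

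None of these steps is a real obstacle. The only point needing care is the orthogonality identity in Step 2, which relies on the left-inverse property from Step 1 and on $S$ being symmetric, so that $SS^{-1}=I$ applies cleanly.
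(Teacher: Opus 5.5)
Your proof is correct and follows the paper's argument essentially step for step. Like the paper, you use the left-inverse identities $K(W)G=K_*G=I_d$ and the orthogonality $\{K(W)-K_*\}SK_*^\top=0$ obtained from $SK_*^\top=G\mathcal V_*$, expand the sandwich, and read off both equality conditions by factoring the gap through $S^{1/2}$. One small wording point: symmetry of $S$ is actually used to get $(S^{-1})^\top=S^{-1}$ and to make the transposed cross term vanish, not for $SS^{-1}=I$, which holds for any invertible $S$; this does not affect the argument.
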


\begin{proof}
Both $K(W)G$ and $K_*G$ equal $I_d$. Moreover, $K_*S=\mathcal V_*G^\top$. Consequently, $(K(W)-K_*)SK_*^\top=0$. Expanding $K(W)SK(W)^\top$ gives \eqref{eq:loewner-gap}. Positive definiteness of $S$ yields the matrix and contrast equality conditions.
\end{proof}

The unrestricted matrix class attains the bound. Scalar and diagonal restrictions can attain it only when their influence maps agree with the unrestricted optimum. Otherwise the covariance gap is positive semidefinite, though it may vanish for some contrasts. Several weight matrices can therefore attain the same bound.

For the PPI moment structure, the comparison has a more explicit form. Partition a general weight as $W=(W_{ab})_{a,b=1}^2$ and set
\begin{equation}
 L(W)=W_{11}^{-1}W_{12},\qquad
 L_*=RD^{-1},\qquad C=V-RD^{-1}R^\top.\label{eq:optimal-correction}
\end{equation}
Since $G=(A^\top,0^\top)^\top$, direct multiplication gives
\begin{align}
 K(W)&=A^{-1}\begin{pmatrix}I_d&L(W)\end{pmatrix},\label{eq:K-block}\\
 \mathcal V_*&=A^{-1}CA^{-\top},\label{eq:bound-block}\\
 \mathcal V(W)-\mathcal V_*&=
 A^{-1}\{L(W)-L_*\}D\{L(W)-L_*\}^\top A^{-\top}.\label{eq:correction-gap}
\end{align}
Here $A^{-\top}=(A^{-1})^\top$. Thus first-order efficiency depends on the limiting correction coefficient, even though the finite-sample criterion depends on the entire weight.

A scalar or diagonal \emph{stacked GMM weight} has $W_{12}=0$ and hence $L(W)=0$, giving the supervised first-order covariance. A scalar or diagonal \emph{correction coefficient} instead restricts $L$ to $\ell I_d$ or to a diagonal matrix. For any such coefficient $L$, the asymptotic covariance of a regular estimator with influence map $A^{-1}(I_d,L)$ obeys \eqref{eq:correction-gap}. The full correction attains equality; a restricted correction does so exactly when it can represent $RD^{-1}$. The auxiliary information is held fixed throughout this comparison.

The nonuniqueness is explicit. For any $P,T\succ0$, the matrix
\begin{equation}
 W(P,T)=\begin{pmatrix}
 P&PL_*\\L_*^\top P&L_*^\top PL_*+T
 \end{pmatrix}\succ0\label{eq:efficient-weight-class}
\end{equation}
has $L(W)=L_*$ and attains $\mathcal V_*$. Its Schur complement is $T$. Different choices may give different finite-sample criteria, without changing first-order efficiency. Equation~\eqref{eq:efficient-weight-class} describes efficient limiting weights in the general GMM class; it does not assert that every such weight belongs to a prespecified collection of shrinkage targets in Section~\ref{subsec:family}.

\subsection{Efficiency-preserving covariance weights}
\label{subsec:preserve-efficiency}

\begin{corollary}[A common inverse-covariance limit]\label{cor:common-limit}
Suppose the covariance consistency condition holds and the selected family member satisfies
\begin{equation}
 \Delta_n=\lambda_n\{S_{0,n}(\alpha_n)-\widehat S\}
 \overset p\longrightarrow0.\label{eq:vanishing-perturbation}
\end{equation}
Then $\widehat S_{\lambda_n,\alpha_n}\overset p\to S$ and $W_{\lambda_n,\alpha_n}\overset p\to S^{-1}$. Under the remaining regular conditions, its limiting parameter covariance is $\mathcal V_*$. Any two such estimators computed from the same data satisfy
\begin{equation}
 \sqrt M\{\widehat\theta^{(a)}-\widehat\theta^{(b)}\}=o_p(1).\label{eq:first-order-equivalence}
\end{equation}
\end{corollary}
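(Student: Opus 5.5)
The argument is a direct combination of the covariance consistency condition, continuity of matrix inversion, and Theorems~\ref{thm:general-weight} and~\ref{thm:loewner}.

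First, decompose the blended covariance as
\[
\widehat S_{\lambda_n,\alpha_n}=(1-\lambda_n)\widehat S+\lambda_n S_{0,n}(\alpha_n)=\widehat S+\Delta_n .
\]
The covariance consistency condition of Section~\ref{subsec:regularity} gives $\widehat S\overset p\to S$, and \eqref{eq:vanishing-perturbation} gives $\Delta_n\overset p\to0$, so $\widehat S_{\lambda_n,\alpha_n}\overset p\to S$. Since $S\succ0$ by (R3), the smallest eigenvalue of $\widehat S_{\lambda_n,\alpha_n}$ is bounded away from zero with probability tending to one. On that event the blend is invertible; we may define $W$ arbitrarily off it without affecting limits in probability. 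Matrix inversion is continuous at $S$, so the continuous mapping theorem yields $W_{\lambda_n,\alpha_n}\overset p\to S^{-1}\succ0$.

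This verifies the weight part of (R4) with $W=S^{-1}$. The remaining parts of (R1)--(R4) are among the assumed regular conditions. Theorem~\ref{thm:general-weight} therefore applies and gives the influence representation \eqref{eq:influence-general} with $K(S^{-1})=K_*$. Its limiting covariance is
\[
K_*SK_*^\top=(G^\top S^{-1}G)^{-1}G^\top S^{-1}SS^{-1}G(G^\top S^{-1}G)^{-1}=\mathcal V_*,
\]
which is the bound of Theorem~\ref{thm:loewner}. It is worth stressing that Theorem~\ref{thm:general-weight} does not require independence between the weight and the moments. This is what allows $\lambda_n$, $\alpha_n$, and the target to be data dependent and computed from the same samples.

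For \eqref{eq:first-order-equivalence}, apply \eqref{eq:influence-general} to each estimator. Both share the same limiting coefficient $K_*$ and the same moment vector $\sqrt M\,\widehat g(\theta_*)$, since they are computed from the same data. Hence
\[
\sqrt M\{\widehat\theta^{(a)}-\widehat\theta^{(b)}\}=-K_*\sqrt M\,\widehat g(\theta_*)+o_p(1)+K_*\sqrt M\,\widehat g(\theta_*)-o_p(1)=o_p(1).
\]

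The main point needing care is the step from the stated expansion to the claim that its $o_p(1)$ remainder is genuinely negligible for a random weight. Inside the proof of Theorem~\ref{thm:general-weight}, the coefficient $\widehat J^\top W_n$ must differ from $G^\top S^{-1}$ by $o_p(1)$. That product then multiplies an $O_p(M^{-1/2})$ moment, which holds by \eqref{eq:moment-clt}. Since $W_n\overset p\to S^{-1}$ alone suffices for this, no rate on $\Delta_n$ is required.
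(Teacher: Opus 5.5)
Your proof is correct and follows the paper's argument. You write the blend as $\widehat S+\Delta_n$, use continuity of inversion at $S\succ0$ to get $W_{\lambda_n,\alpha_n}\to_p S^{-1}$, invoke Theorem~\ref{thm:general-weight} to obtain the common influence map $K_*$, and subtract the two expansions. Your explicit check that $K_*SK_*^\top=\mathcal V_*$ and your remark that no rate on $\Delta_n$ is needed spell out details the paper leaves implicit.
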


\begin{proof}
The blended covariance is $\widehat S+\Delta_n$. Continuity of inversion at $S\succ0$ gives the weight limit. Theorem~\ref{thm:general-weight} gives the common influence map $K_*$. Subtracting the two expansions proves \eqref{eq:first-order-equivalence}.
\end{proof}

Condition~\eqref{eq:vanishing-perturbation} is satisfied if $\lambda_n\overset p\to0$ and the selected target is $O_p(1)$. Boundedness must hold for the \emph{selected} target when its index is data dependent. No independence of the tuning quantities is needed. Alternatively, if $S_{0,n}(\alpha_n)\overset p\to S$, every $\lambda_n\in[0,1]$ satisfies the condition, even when $\lambda_n$ does not converge to zero. If the target diverges, vanishing $\lambda_n$ alone is insufficient.

For the cross-block, spherical, and diagonal targets in Supplementary Section~\ref{subsec:targets}, covariance consistency implies $O_p(1)$ targets. At fixed $d$, the benchmark $\lambda_n=\min\{0.8,d/n\}$ therefore preserves $S^{-1}$ and $\mathcal V_*$, provided $\varepsilon_M\to0$. This is an efficiency-preservation result, not a finite-sample optimality result. A dimension that increases with $n$, or a fixed nonvanishing numerical ridge, requires a different limit analysis.

Sharing $S^{-1}$ is sufficient but not necessary for sharing the efficiency bound. If $W_n\overset p\to\overline W\succ0$, equations~\eqref{eq:K-block}--\eqref{eq:correction-gap} show that efficiency is preserved precisely when
\begin{equation}
 \overline W_{11}^{-1}\overline W_{12}=RD^{-1}.\label{eq:efficient-limit-condition}
\end{equation}
As a concrete distinction, if the cross-block target is used and $\lambda_n\overset p\to\lambda_0$, then the limiting covariance has cross block $-(1-\lambda_0)R$. Its inverse satisfies $L(\overline W)=(1-\lambda_0)RD^{-1}$, so
\begin{equation}
 \mathcal V(\overline W)-\mathcal V_*
 =\lambda_0^2 A^{-1}RD^{-1}R^\top A^{-\top}.\label{eq:fixed-cross-gap}
\end{equation}
Thus a nonvanishing cross-block shrinkage loses first-order efficiency when $R\ne0$, although some contrasts may retain equality. If $R=0$, it preserves the bound. For other target collections, the equality test is \eqref{eq:efficient-limit-condition}, rather than a universal requirement that $\lambda_n\to0$.

\subsection{Feasible inference}
\label{subsec:inference}

Under the covariance consistency condition, let $\widehat G=\widehat J(t)$ with $t\overset p\to\theta_*$, and define
\begin{equation}
 \widehat K=(\widehat G^\top W_n\widehat G)^{-1}\widehat G^\top W_n,
 \qquad \widehat{\mathcal V}=\widehat K\widehat S\widehat K^\top.\label{eq:feasible-sandwich}
\end{equation}
Then $\widehat{\mathcal V}\overset p\to\mathcal V(W)$. The middle matrix is the consistent estimate of the \emph{actual joint moment covariance}, rather than an arbitrary shrinkage target. For efficient family members, $(\widehat G^\top\widehat S^{-1}\widehat G)^{-1}$ is also consistent for $\mathcal V_*$. A Wald interval for a fixed nonzero contrast is
\begin{equation}
 a^\top\widehat\theta\ \pm\ z_{1-\eta/2}
 \sqrt{a^\top\widehat{\mathcal V}a/M},\qquad \eta\in(0,1).\label{eq:wald-interval}
\end{equation}
This interval has conditional asymptotic coverage $1-\eta$ under the stated regular conditions.

The optional one-step approximation is first-order equivalent under the pilot conditions in Supplementary Section~\ref{supp:one-step}. The estimators in this paper are defined by the frozen-criterion minimization in Section~\ref{subsec:implementation}.

Supplementary Section~\ref{app:asymptotic-proofs} gives detailed proofs of random-pilot covariance consistency, the influence expansion, all equality conditions, and the inference and one-step statements; further regular extremum-estimation background is provided by \citep{neweymcfadden1994}.

\section{Finite-Sample Calibration and the Half-Strength Default}
\label{sec:finite-sample}

Estimated covariance weights can give different finite-sample risks even when their limiting influence maps agree. For the CC-PPI estimator of Section~\ref{subsec:ccppi}, we examine the auxiliary penalty, the effect of covariance calibration on risk, and the choice of strength. Supplementary Sections~\ref{app:calibration-proofs}--\ref{app:path-proofs} contain the supporting results and proofs; numerical protocols are in Supplementary Section~\ref{app:literature-protocol}.

The exact parameter-risk analysis conditions on a fixed full-rank regression design and externally supplied predictions. Its structural calibration identities also apply to nonlinear moments. In correctly specified single-outcome regression with a covariate-measurable predictor, the population supervised--auxiliary cross covariance is zero, so calibration acts through the finite-sample criterion. The noisy-surrogate example in Section~\ref{subsec:sim-multi} illustrates the complementary regime with informative population cross covariance.

\subsection{Auxiliary penalty in the GMM criterion}
\label{subsec:finite-choice}

By Corollary~\ref{cor:common-limit}, two regular family members with vanishing covariance perturbations satisfy $\sqrt M(\widehat\theta^{(a)}-\widehat\theta^{(b)})=o_p(1)$. Comparing their finite-sample MSEs requires the higher-order contributions of covariance estimation, the sample auxiliary Jacobian and strength selection. Passing from distributional expansions to risk also requires moment control.

Higher-order GMM/GEL analyses retain contributions from moment/Jacobian dependence and estimated weights \citep{neweysmith2003}; variance estimation for linear two-step GMM may also need a finite-sample correction \citep{windmeijer2005}. These results motivate retaining same-sample weight effects.

An exact decomposition explains one source of these differences. Partition any frozen positive definite weight as $W=(W_{jk})_{j,k=1}^2$, and put
\begin{equation}
 \Lambda_W=W_{11}^{-1}W_{12},\qquad
 \Pi_W=W_{22}-W_{21}W_{11}^{-1}W_{12}\succ0.
 \label{eq:fs-weight-schur}
\end{equation}
Completing the square yields
\begin{equation}
 Q_W(\theta)=(m_1+\Lambda_Wm_2)^\top W_{11}(m_1+\Lambda_Wm_2)
                 +m_2^\top\Pi_Wm_2.
 \label{eq:fs-anchor-decomposition}
\end{equation}
For affine $m_2(\theta)=A_5\theta-b_2$, the second term has curvature $A_5^\top\Pi_WA_5$. When $A_5$ is invertible, it is a quadratic penalty around the root of $m_2=0$; otherwise it remains a penalty on the auxiliary moment, without requiring a unique anchor. Thus this penalty is a property of the full quadratic criterion. An equivalent estimating-equation representation can encode the same effect through its coefficients. Both $\Lambda_W$ and $\Pi_W$ change with the weight, so the penalty alone does not determine parameter risk. Under regular identical-covariate sampling, $A_5\to_p0$; its finite-sample contribution is consequently not characterized by the limiting Jacobian $G=(A^\top,0^\top)^\top$.

For exact risk calculations, condition on full-column-rank $X\in\mathbb R^{n\times d}$, arbitrary $U\in\mathbb R^{N\times d}$, and externally supplied finite predictions $f_L,f_U$. Let
\begin{align}
 Y&=X\beta+\epsilon,\quad C=(X^\top X)^{-1}X^\top,\quad t=CY,\label{eq:fs-regression}\\
 A_L&=X^\top X/n,\quad A_U=U^\top U/N,\quad
 H=\begin{pmatrix}A_L\\A_U-A_L\end{pmatrix},\label{eq:fs-regression-H}\\
 b&=\begin{pmatrix}X^\top Y/n\\U^\top f_U/N-X^\top f_L/n\end{pmatrix}.
 \label{eq:fs-regression-b}
\end{align}
Then $\widehat g(\theta)=H\theta-b$, and the complete frozen-weight solution is \eqref{eq:affine-solution}. The matrix $H$ has full column rank even if $A_U-A_L$ is singular. The predictor may be nonlinear in the covariates. Conditional parameter risk is $R(\lambda)=\mathbb E_\epsilon\|\widehat\beta_\lambda-\beta\|^2$. Random-design conclusions require averaging these conditional quantities with appropriate integrability; no dimension-growth claim is made here.

\subsection{Covariance calibration and finite-sample risk}
\label{subsec:calibrated-target}

Write $\rho=n/N$ and use the $n$-scaled covariance $\mathsf S=\widehat S/c_{1,n}$. Its scores are those in \eqref{eq:initial-scores}. With $C_L=\widehat{\operatorname{Cov}}_{\mathcal L}(h,h)$, $C_U=\widehat{\operatorname{Cov}}_{\mathcal U}(h^U,h^U)$, and $\eta=\varepsilon_M/c_{1,n}>0$,
\begin{equation}
 \mathsf S=\widehat{\operatorname{Cov}}_{\mathcal L}
       \begin{pmatrix}s\\-h\end{pmatrix}
       +\operatorname{diag}(0,\rho C_U)+\eta I_{2d}.
 \label{eq:fs-nscale-cov}
\end{equation}
Multiplication of the full covariance by a positive scalar leaves its GMM minimizer unchanged. In this section, $Q$ denotes the $n$-scaled criterion, namely $c_{1,n}$ times the criterion in Section~\ref{sec:setup}. In the experiments below, $N\ge n$, $c_{1,n}=1$, and $\eta=n^{-2}$.

Let $c=m_1+m_2$ denote the prediction-corrected moment, and introduce
\begin{equation}
 B_c=\begin{pmatrix}I&I\\0&I\end{pmatrix},\quad T=\begin{pmatrix}I\\-I\end{pmatrix},\quad
 B_c\mathsf S B_c^\top=
 \begin{pmatrix}V_c&R_c\\R_c^\top&D_c\end{pmatrix},\quad
 \Omega=D_c-R_c^\top V_c^{-1}R_c.
 \label{eq:fs-corrected-basis}
\end{equation}
Both $V_c$ and $\Omega$ are positive definite: the calibration principle is to retain the corrected channel $(V_c,R_c)$ while changing the auxiliary residual metric.

\subsubsection{Unlabeled prediction-score covariance}
\label{subsubsec:why-calibration}
The unlabeled sample supplies information for calibrating the prediction-score covariance within the auxiliary block. Under the regular pilot laws, $C_U-C_L\to_p0$, preserving the efficient limit. At finite samples, its joint dependence with the pilot and estimated weight is retained in the signed transfer analysis of Section~\ref{subsec:finite-risk-selection}. Centered divisors also enter that analysis: at an external deterministic pilot, $\mathbb EC_L=(1-1/n)\Sigma_{hh}$ and $\mathbb EC_U=(1-1/N)\Sigma_{hh}$. The target preserves the common population limit while changing the estimated auxiliary covariance at finite sample sizes.

\subsubsection{Channel-preserving targets}
\label{subsubsec:calibration-class}

We first characterize targets that preserve the corrected channel.

\begin{proposition}[Channel-preserving calibration]\label{prop:fs-calibration-class}
A symmetric positive definite target $\mathsf S_* $ preserves the first block row $(V_c,R_c)$ in \eqref{eq:fs-corrected-basis} if and only if
\begin{equation}
 \mathsf S_* =\mathsf S+TKT^\top,\qquad K=K^\top,\qquad\Omega+K\succ0.
 \label{eq:fs-calibration-class}
\end{equation}
For every such target, including nonlinear moments, its criterion is exactly
\begin{equation}
 Q_K(\theta)=c(\theta)^\top V_c^{-1}c(\theta)
       +r_c(\theta)^\top(\Omega+K)^{-1}r_c(\theta),\quad
 r_c=m_2-R_c^\top V_c^{-1}c.
 \label{eq:fs-residual-criterion}
\end{equation}
\end{proposition}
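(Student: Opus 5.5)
The plan is to work entirely in the corrected basis given by the invertible congruence $B_c$. Any symmetric target $\mathsf S_*$ corresponds to $B_c\mathsf S_*B_c^\top$. Preserving the first block row $(V_c,R_c)$ means exactly that $B_c(\mathsf S_*-\mathsf S)B_c^\top$ has vanishing first block row, and by symmetry also vanishing first block column. So the difference must have the form $\operatorname{diag}(0,K)$ with $K=K^\top$.

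\textbf{Characterization.} I will compute $B_c^{-1}=\left(\begin{smallmatrix}I&-I\\0&I\end{smallmatrix}\right)$ and check that $B_c^{-1}\operatorname{diag}(0,K)B_c^{-\top}=TKT^\top$. This is a direct multiplication: the second block column of $B_c^{-1}$ is $(-I,I)^\top=-T$, so $B_c^{-1}\operatorname{diag}(0,K)B_c^{-\top}=(-T)K(-T)^\top=TKT^\top$. It is consistent with the paper's statement $B_c\Delta B_c^\top=\operatorname{diag}(0,K)$ after \eqref{eq:ccppi-raw-update}. This gives the representation $\mathsf S_*=\mathsf S+TKT^\top$; conversely, every such target clearly preserves $(V_c,R_c)$.

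\textbf{Positive definiteness.} Since the congruence preserves definiteness, $\mathsf S_*\succ0$ holds if and only if
\[
\begin{pmatrix}V_c&R_c\\R_c^\top&D_c+K\end{pmatrix}\succ0 .
\]
Because $V_c\succ0$ (as a principal block of the positive definite $B_c\mathsf S B_c^\top$), the Schur complement criterion shows this is equivalent to $D_c+K-R_c^\top V_c^{-1}R_c=\Omega+K\succ0$. This settles \eqref{eq:fs-calibration-class}.

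\textbf{Criterion identity.} The quadratic form is invariant under the change of basis:
\[
\widehat g^\top\mathsf S_*^{-1}\widehat g=(B_c\widehat g)^\top(B_c\mathsf S_*B_c^\top)^{-1}(B_c\widehat g),
\qquad B_c\widehat g=(c^\top,m_2^\top)^\top .
\]
I then complete the square in the $2\times2$ block inverse using the first block as pivot, as in \eqref{eq:fs-anchor-decomposition} but with the covariance rather than the weight. The standard identity gives
\[
(c,m_2)\mapsto c^\top V_c^{-1}c+(m_2-R_c^\top V_c^{-1}c)^\top(\Omega+K)^{-1}(m_2-R_c^\top V_c^{-1}c),
\]
which is \eqref{eq:fs-residual-criterion}. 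This step is pointwise in $\theta$ and uses only the matrices, so it holds for nonlinear moments as well.

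The only point needing care is the invertibility of $B_c$ and the exact form of $B_c^{-1}$, needed to match $TKT^\top$ rather than some other sign pattern. I do not expect a real obstacle.
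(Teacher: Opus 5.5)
Your proposal is correct and follows essentially the same route as the paper: pass to the corrected basis via $B_c$, note that the difference must be $\operatorname{diag}(0,K)$ with $K$ symmetric, and transform back. Your identity $B_c^{-1}(0,I)^\top=-T$ is equivalent to the paper's $B_cT=(0,-I)^\top$; you then use the Schur complement for positive definiteness and the block inverse, after invariance of the quadratic form under the congruence, to obtain the criterion. Your version simply spells out the details that the paper's three-line proof leaves implicit.
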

\begin{proof}
The transformed covariance difference must be $\operatorname{diag}(0,K)$. Since $B_cT=(0,-I)^\top$, transforming back gives \eqref{eq:fs-calibration-class}, with unique $K$. Positive definiteness is equivalent to $\Omega+K\succ0$ by the Schur complement. The block inverse gives \eqref{eq:fs-residual-criterion}.
\end{proof}

We choose within this class using two requirements on the covariance information. Together they identify a unique raw target once the pool is fixed. They specify the construction rather than a risk-optimality criterion or a data-budget allocation.

\begin{description}
\item[P1 (preserve the corrected channel).] The target keeps the first block row $(V_c,R_c)$ of the corrected criterion; only the metric attached to the auxiliary residual may change.
\item[P2 (calibrate that metric with the unlabeled prediction covariance).] In the auxiliary block, the labeled prediction covariance $C_L$ is replaced by a pool $F(C_L,C_U)$ built from the unlabeled sample, while the sample-size term $\rho C_U$ and the ridge $\eta I$ are left unchanged.
\end{description}

\begin{proposition}[The principles force the target]\label{prop:fs-target-forcing}
Fix $\mathsf S$ with blocks as in \eqref{eq:fs-corrected-basis}, and let $F(C_L,C_U)$ be any symmetric prediction-covariance pool. Let $\mathsf S_*$ satisfy P1 together with
\[
 \text{P2$'$: the auxiliary block of } B_c\mathsf S_*B_c^\top \text{ equals } F(C_L,C_U)+\rho C_U+\eta I .
\]
Then the unique symmetric candidate is
\begin{equation}
 \mathsf S_*=\mathsf S+TK_FT^\top,\qquad K_F=F(C_L,C_U)-C_L ;
 \label{eq:fs-forced-k}
\end{equation}
it is a positive definite target if and only if $\Omega+K_F\succ0$, and its criterion is \eqref{eq:fs-residual-criterion} with $K=K_F$. For the pool $F=C_U$ the auxiliary block becomes $(1+\rho)C_U+\eta I$, its population shape, and $K_F=C_U-C_L$.
\end{proposition}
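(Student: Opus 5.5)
The plan is to reduce everything to Proposition~\ref{prop:fs-calibration-class}. By that proposition, P1 holds for a symmetric positive definite $\mathsf S_*$ exactly when $\mathsf S_*=\mathsf S+TKT^\top$ for a unique symmetric $K$ with $\Omega+K\succ0$. Equivalently,
\[
B_c\mathsf S_*B_c^\top-B_c\mathsf S B_c^\top=\operatorname{diag}(0,K).
\]
It therefore suffices to read off the auxiliary block in the corrected basis. Using $B_cT=(0,-I)^\top$, the lower-right block of $B_c\mathsf S_*B_c^\top$ is $D_c+K$, and the other blocks are unchanged.

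The next step is to compute $D_c$ explicitly from \eqref{eq:fs-nscale-cov}. Since $B_c$ acts as the identity on the second block row and column, the auxiliary block of $B_c\mathsf S B_c^\top$ is the $(2,2)$ block of $\mathsf S$. By \eqref{eq:ccppi-nscale} this equals $C_L+\rho C_U+\eta I$, because the labeled covariance of $-h$ is $C_L$. Condition P2$'$ then reads
\[
D_c+K=F(C_L,C_U)+\rho C_U+\eta I .
\]
This forces $K=K_F=F(C_L,C_U)-C_L$. Symmetry of $K_F$ follows from symmetry of $F$ and $C_L$. Uniqueness holds because $K$ is uniquely determined by $\mathsf S_*$ in Proposition~\ref{prop:fs-calibration-class}, and conversely this $K$ determines the candidate.

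For positive definiteness, apply the Schur-complement equivalence already stated in Proposition~\ref{prop:fs-calibration-class}. The transformed matrix has first block $V_c\succ0$, cross block $R_c$, and auxiliary block $D_c+K_F$. Since $B_c$ is invertible, $\mathsf S_*\succ0$ if and only if $\Omega+K_F\succ0$. The criterion formula \eqref{eq:fs-residual-criterion} with $K=K_F$ is inherited directly from that proposition.

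Finally, substituting $F=C_U$ gives the auxiliary block $(1+\rho)C_U+\eta I$ and $K_F=C_U-C_L$. The only step needing care is identifying $D_c$ with the $(2,2)$ block of $\mathsf S$; this is a one-line block multiplication, and the rest is bookkeeping.
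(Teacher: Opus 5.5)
Your proposal is correct and follows essentially the same route as the paper's proof. Both use Proposition~\ref{prop:fs-calibration-class} to reduce P1 to the form $\mathsf S+TKT^\top$, read off the auxiliary block $D_c+K$ via $B_cT=(0,-I)^\top$, substitute $D_c=C_L+\rho C_U+\eta I$ to force $K_F=F(C_L,C_U)-C_L$, and conclude with the Schur condition $\Omega+K_F\succ0$; your block-multiplication check that $D_c$ is the $(2,2)$ block of $\mathsf S$ simply spells out a step the paper states directly.
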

\begin{proof}
By Proposition~\ref{prop:fs-calibration-class}, P1 holds if and only if $\mathsf S_*=\mathsf S+TKT^\top$ for a unique symmetric $K$ with $\Omega+K\succ0$. Since $B_cT=(0,-I)^\top$, the corrected blocks of $\mathsf S+TKT^\top$ are $(V_c,R_c,D_c+K)$, so P2$'$ is equivalent to $D_c+K=F(C_L,C_U)+\rho C_U+\eta I$. With $D_c=C_L+\rho C_U+\eta I$ this gives \eqref{eq:fs-forced-k}, and positive definiteness is the Schur condition $\Omega+K_F\succ0$.
\end{proof}

Once the pool is fixed, P1 and P2 determine the target. We take $F=C_U$. Replacing the labeled prediction-covariance contribution $C_L$ by $C_U$, while preserving the corrected channel, gives the raw correction
\begin{equation}
 K=C_U-C_L,\qquad \Delta=T(C_U-C_L)T^\top,
 \label{eq:fs-raw-calibration}
\end{equation}
and the target \eqref{eq:fs-target-construction} is \eqref{eq:fs-raw-calibration} after the spectral protection that follows.

Apply the spectral protection of \eqref{eq:ccppi-spectral-protection} to the raw correction. Here $h=1/2$ denotes the default strength and $\kappa=1/2$ the spectral cap. Protection retains the form $\Delta_p=TK_pT^\top$, so the frozen criterion along the implemented path is

\begin{equation}
 Q_\lambda=c^\top V_c^{-1}c+r_c^\top(\Omega+\lambda K_p)^{-1}r_c.
 \label{eq:fs-protected-criterion}
\end{equation}

The low-rank protection formula and its admissibility properties are given in Supplementary Section~\ref{supp:protected-path}. Under the covariance consistency conditions of Section~\ref{subsec:regularity}, the calibrated path satisfies

\begin{equation}
 \sup_{0\le\lambda\le1}\|\widehat S_\lambda-\widehat S\|\to_p0.
 \label{eq:fs-uniform-efficient-path}
\end{equation}

Corollary~\ref{cor:common-limit} therefore gives the common efficient limit for the half-strength estimator.

\subsubsection{Finite-sample risk transfer}
\label{subsec:finite-risk-selection}

For \eqref{eq:fs-regression}, assume conditionally that $\mathbb E\epsilon=0$, $\Gamma=C\operatorname{Cov}(\epsilon)C^\top\succ0$, and $\mathbb E\|\epsilon\|^4<\infty$. The errors may be heteroscedastic or dependent. With full-rank $H$, a positive ridge and bounded relative protection, the actual fits have at most quadratic growth in $Y$, so the following expectations exist. Write $J=\Gamma^{1/2}$, $z=\Gamma^{-1/2}(t-\beta)$, $g=\widehat\beta_0-t$, and $u=\widehat\beta_h-\widehat\beta_0$. Let
\begin{align}
 a_g&=\mathbb Eg,\quad A_g=\mathbb E(gz^\top),\quad
 a_u=\mathbb Eu,\quad A_u=\mathbb E(uz^\top),\label{eq:fs-projection-coefficients}\\
 g_\perp&=g-a_g-A_gz,\quad u_\perp=u-a_u-A_uz,\quad
 Q_g=\mathbb E\|g_\perp\|^2,\quad Q_u=\mathbb E\|u_\perp\|^2,\label{eq:fs-nonlinear-energies}\\
 \mathcal D&=2a_g^\top a_u+\|a_u\|^2
           +2\langle J+A_g,A_u\rangle_F+\|A_u\|_F^2.
 \label{eq:fs-signed-lowmode}
\end{align}
These are global response projections, not local expansions about $\beta$.

\begin{theorem}[Signed finite-sample transfer]\label{thm:fs-signed-transfer}
In every parameter dimension $d$, for the actual same-data pipeline,
\begin{align}
 R(h)-R(0)&=\mathcal D+2\mathbb E\langle g_\perp,u_\perp\rangle+Q_u,\label{eq:fs-exact-transfer}\\
 R(h)-R(0)&\le\mathcal D+2\sqrt{Q_gQ_u}+Q_u.\label{eq:fs-transfer-upper}
\end{align}
In particular, $-\mathcal D>2\sqrt{Q_gQ_u}+Q_u$ guarantees $R(h)<R(0)$.
\end{theorem}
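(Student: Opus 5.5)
The identity is pure algebra on the decomposition $\widehat\beta_h-\beta=(t-\beta)+g+u$, combined with orthogonal projection of $g$ and $u$ onto the span of $\{1,z\}$ in $L^2$. No expansion about $\beta$ is used, so the argument holds in every dimension for the actual same-data pipeline.

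\textbf{Integrability.} We first check that all quantities are square integrable. Full column rank of $H$, the positive ridge, and the relative protection bound $(1-\lambda\kappa)\mathsf S\preceq\mathsf S+\lambda\Delta_p$ make the frozen weight uniformly well conditioned relative to $\mathsf S$. The affine solution \eqref{eq:affine-solution} then grows at most linearly in $Y$, and by the stated remark at most quadratically. With $\mathbb E\|\epsilon\|^4<\infty$, this gives $g,u\in L^2$. By definition $z=\Gamma^{-1/2}C\epsilon$ satisfies $\mathbb Ez=0$ and $\mathbb Ezz^\top=I$, and $t-\beta=Jz$. The choices $a_g=\mathbb Eg$ and $A_g=\mathbb E(gz^\top)$ are exactly the $L^2$ projection coefficients of $g$ onto affine functions of $z$. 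Hence $g_\perp$ satisfies $\mathbb E g_\perp=0$ and $\mathbb E(g_\perp z^\top)=0$, and the same holds for $u_\perp$.

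\textbf{Risk expansion.} We expand the risks directly:
\[
R(0)=\mathbb E\|Jz+g\|^2,\qquad R(h)=\mathbb E\|Jz+g+u\|^2,
\]
\[
R(h)-R(0)=2\mathbb E\langle Jz+g,u\rangle+\mathbb E\|u\|^2.
\]
Substituting $g=a_g+A_gz+g_\perp$ and $u=a_u+A_uz+u_\perp$, the orthogonality relations remove every mixed term between a $\perp$ component and $1$ or $z$. The cross terms between the constant and linear parts vanish because $\mathbb Ez=0$. The identity $\mathbb E\langle Mz,Nz\rangle=\operatorname{tr}(M^\top N)=\langle M,N\rangle_F$ holds because $\mathbb Ezz^\top=I$. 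This yields
\[
2\mathbb E\langle Jz+g,u\rangle=2a_g^\top a_u+2\langle J+A_g,A_u\rangle_F+2\mathbb E\langle g_\perp,u_\perp\rangle,
\]
\[
\mathbb E\|u\|^2=\|a_u\|^2+\|A_u\|_F^2+Q_u.
\]
Summing the two lines gives exactly $\mathcal D+2\mathbb E\langle g_\perp,u_\perp\rangle+Q_u$, which is \eqref{eq:fs-exact-transfer}.

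\textbf{Bound and sufficient condition.} Cauchy--Schwarz in $L^2$ gives $\mathbb E\langle g_\perp,u_\perp\rangle\le\sqrt{Q_gQ_u}$, which yields \eqref{eq:fs-transfer-upper}. The final claim follows immediately: if $-\mathcal D>2\sqrt{Q_gQ_u}+Q_u$, the right side of \eqref{eq:fs-transfer-upper} is negative, so $R(h)<R(0)$.

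The only delicate step is the integrability justification. Its key point is that the weight is frozen but data dependent, so the bound on $\widehat\beta_\lambda$ must come from uniform spectral control, namely the ridge floor $\eta I$ and the clipping, rather than from independence. I would establish $\|\widehat\beta_\lambda\|\le c_0+c_1\|Y\|^2$ or a similar bound by bounding $\|(H^\top WH)^{-1}H^\top W\|$ through the condition number of $W$ relative to $\mathsf S^{-1}$, and $\mathsf S$ itself through $\eta$ and $\|Y\|^2$. Everything else is bookkeeping.
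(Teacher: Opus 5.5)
Your proposal is correct and follows the paper's proof essentially line for line. Both write $\widehat\beta_0-\beta=a_g+(J+A_g)z+g_\perp$ and $u=a_u+A_uz+u_\perp$, expand $2\mathbb E\langle\widehat\beta_0-\beta,u\rangle+\mathbb E\|u\|^2$ using $\mathbb Ez=0$, $\mathbb Ezz^\top=I$ and the orthogonality of the projection residuals, and finish with Cauchy--Schwarz. One small correction to your integrability remark: the spectral argument you sketch does not give linear growth in $Y$, because the weight's condition number is controlled only through $\|\mathsf S\|/\eta$, which grows like $\|Y\|^2$. The quadratic bound $\|\widehat\beta_\lambda\|\le c(1+\|Y\|^2)$ that you end with is the correct one, and together with $\mathbb E\|\epsilon\|^4<\infty$ it is exactly what the identity needs.
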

\begin{proof}
$\mathbb Ez=0$ and $\mathbb Ezz^\top=I$. The projection residuals have zero means and are orthogonal to $z$. Substitute $\widehat\beta_0-\beta=a_g+(J+A_g)z+g_\perp$ into $2\mathbb E\langle\widehat\beta_0-\beta,u\rangle+\mathbb E\|u\|^2$. Orthogonality gives \eqref{eq:fs-exact-transfer}; Cauchy--Schwarz gives \eqref{eq:fs-transfer-upper}.
\end{proof}

The first two terms of $\mathcal D$ measure the change in squared mean bias; the others measure the change in linear noise response and its correlation with the original fit. The bound controls only the residual nonlinear interaction by magnitude. A reduction in linear noise response can therefore lower MSE even without reducing bias. The sufficient condition concerns expected loss and involves population quantities that require separate estimation.

The calibration and pilot use the same response throughout this identity. Supplementary Sections~\ref{supp:quadratic-target} and \ref{supp:conditional-projections} derive the quadratic score-covariance representation and Gaussian conditional-projection refinements. We next give design conditions under which the implemented calibration improves risk, allowing prediction error.

\subsubsection{A design-based improvement result}
\label{subsec:primitive-improvement}

The following result establishes a favorable calibration direction under explicit design conditions, including the same-response pilot and active spectral protection. Throughout this finite-sample analysis the regression fit is unconstrained, with parameter space $\mathbb R^d$.

\begin{theorem}[Design-dependent improvement and prediction-error tolerance]
\label{thm:fs-primitive-improvement}
Consider scalar regression without an intercept, with fixed labeled and unlabeled covariates in $\{0,-a,a\}$, $a>0$. Let $m\ge2$ and $M_U$ be their numbers of nonzero rows, and put $p=m/n$, $u=M_U/N$, with $0<u<p<1$. Let $Y_i=\beta x_i+\epsilon_i$, where the errors are independent $N(0,\sigma^2)$, $\sigma>0$, and let $f(x)=\alpha x$. Use the centered divisors $n,N$, the OLS pilot, any fixed $\eta>0$, cap $\kappa=1/2$, and the complete fit along \eqref{eq:fs-implemented-path}. Write $\mu=\beta-\alpha$ and $R_\mu(\lambda)=\mathbb E_\mu(\widehat\beta_\lambda-\beta)^2$.

At $\mu=0$, the risk is strictly decreasing in $\lambda\in[0,1]$ if $p+u<1$, strictly increasing if $p+u>1$, and constant if $p+u=1$. In particular, the first case gives $\mathcal G_{\rm ref}=R_0(0)-R_0(h)>0$ for $h=1/2$.

For this favorable case, set $s^2=\sigma^2/(ma^2)$ and define, under $\mu=0$,
\begin{equation}
 H_h^2=\mathbb E_0\{[(\widehat\beta_h-\alpha)^2-(\widehat\beta_0-\alpha)^2]^2\},\qquad
 J_h^2=\mathbb E_0(\widehat\beta_h-\widehat\beta_0)^2.
 \label{eq:fs-reference-constants}
\end{equation}
These constants are finite and have explicit Gaussian/chi-square integral expressions in Supplementary Section~\ref{app:primitive-proof}. With $q=\mu^2/s^2$, put
\begin{equation}
 \varepsilon_h(\mu)=H_h\sqrt{\cosh(q)-1}
                  +2|\mu|J_h\sqrt{\sinh(q)}.
 \label{eq:fs-shift-budget}
\end{equation}
Then
\begin{equation}
 R_\mu(0)-R_\mu(h)\ge \mathcal G_{\rm ref}-\varepsilon_h(\mu).
 \label{eq:fs-primitive-gain}
\end{equation}
Consequently $\varepsilon_h(\mu)<\mathcal G_{\rm ref}$ is a quantitative sufficient condition for strict improvement. It holds on a nonempty neighborhood of $\mu=0$.
\end{theorem}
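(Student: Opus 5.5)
Throughout I work with the exact closed form \eqref{eq:affine-solution} in dimension $d=1$. The proof has two parts: a pointwise monotonicity argument at $\mu=0$, then a Gaussian change-of-measure argument for $\mu\neq0$.

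\emph{Reduction to explicit quantities.} Since $x_i^2\in\{0,a^2\}$, the prediction scores are $h_i=x_i^2(t-\alpha)$. With the centered divisors this gives
\[
C_L=a^4(t-\alpha)^2p(1-p),\qquad C_U=a^4(t-\alpha)^2u(1-u),
\]
so that
\[
K=C_U-C_L=a^4(t-\alpha)^2(u-p)(1-p-u).
\]
Because $u<p$, the sign of $K$ is $-\operatorname{sign}(1-p-u)$ for every realization with $t\neq\alpha$. The clipping in \eqref{eq:fs-target-construction} is a scalar operation when $d=1$, so $K_p$ has the same sign as $K$; it is merely capped at magnitude $\kappa/F$. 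The auxiliary moment is $m_2(\theta)=(u-p)a^2(\theta-\alpha)$, which vanishes exactly at $\alpha$.

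I then use the residual form \eqref{eq:fs-protected-criterion} of Proposition~\ref{prop:fs-calibration-class}. The fit minimizes the sum of two scalar quadratics in $\theta$. The first is the corrected term $c^\top V_c^{-1}c$, whose curvature and location do not depend on $\lambda$. The second has weight $(\Omega+\lambda K_p)^{-1}$ and pulls $\theta$ toward the auxiliary root.

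\emph{Step 1: the case $\mu=0$.} When $\mu=0$ we have $\alpha=\beta$, so the auxiliary root is the truth. I intend to show that $\widehat\beta_\lambda-\beta=w_\lambda\,(t-\beta)$, where $w_\lambda\ge0$ is a random factor built only from $V_c,R_c,\Omega,K_p$. The key claim is that $w_\lambda$ is strictly monotone in the auxiliary weight $(\Omega+\lambda K_p)^{-1}$. This is the main obstacle, for two reasons. First, $r_c$ mixes $c$ and $m_2$, so the minimizer is a ratio and not a simple convex combination of $t$ and $\alpha$. Second, $V_c$, $R_c$ and $\Omega$ are themselves random, since they depend on the residuals and on $t-\beta$.

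I would first write $\widehat\beta_\lambda$ explicitly as a ratio of linear functions of $\tau=(\Omega+\lambda K_p)^{-1}$. Its derivative in $\tau$ should then be a square times a factor of fixed sign. If that holds, $(\widehat\beta_\lambda-\beta)^2$ is pointwise strictly decreasing in $\lambda$ when $K_p<0$ (the case $p+u<1$), strictly increasing when $K_p>0$, and constant when $K=0$. Taking expectations, with finiteness guaranteed by the quadratic-growth remark before Theorem~\ref{thm:fs-signed-transfer}, gives the three monotonicity claims and $\mathcal G_{\rm ref}>0$. Strictness holds because $t\neq\beta$ almost surely.

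\emph{Step 2: shift to $\mu\ne0$.} All data entering the fit are functions of $t-\alpha$ and of the OLS residuals. Under Gaussian errors these two pieces are independent, with $t-\alpha\sim N(\mu,s^2)$. Hence $\mathbb E_\mu\Phi=\mathbb E_0[\Phi\,L]$ with likelihood ratio $L=\exp(\mu Z/s-q/2)$, where $Z=(t-\alpha)/s$ under $\mu=0$.

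Writing $\widehat\beta-\beta=(\widehat\beta-\alpha)-\mu$, I decompose
\[
R_\mu(0)-R_\mu(h)=\mathbb E_\mu\bigl[(\widehat\beta_0-\alpha)^2-(\widehat\beta_h-\alpha)^2\bigr]+2\mu\,\mathbb E_\mu(\widehat\beta_h-\widehat\beta_0).
\]
At $\mu=0$ the second term vanishes by sign symmetry, and the first term equals $\mathcal G_{\rm ref}$.

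The fit is equivariant under the sign flip of all errors. As a result, $\Phi_1=(\widehat\beta_h-\alpha)^2-(\widehat\beta_0-\alpha)^2$ is even in $(Z,\text{residuals})$ and $\Phi_2=\widehat\beta_h-\widehat\beta_0$ is odd. Accordingly I replace $L$ by its even part $\cosh(\mu Z/s)e^{-q/2}$ for $\Phi_1$, and by its odd part $\sinh(\mu Z/s)e^{-q/2}$ for $\Phi_2$. Cauchy--Schwarz then bounds $|\mathbb E_0[\Phi_1(L_{\rm even}-1)]|\le H_h\bigl(\mathbb E_0(L_{\rm even}-1)^2\bigr)^{1/2}$ and $|\mathbb E_0[\Phi_2 L_{\rm odd}]|\le J_h\bigl(\mathbb E_0L_{\rm odd}^2\bigr)^{1/2}$. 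Evaluating the Gaussian moments gives, up to the exact constants, $\mathbb E_0(L_{\rm even}-1)^2=\cosh(q)-1$ and $\mathbb E_0L_{\rm odd}^2=\sinh(q)$. Together these yield $\varepsilon_h(\mu)$ and \eqref{eq:fs-primitive-gain}. I will verify these two identities carefully, since they fix the form of \eqref{eq:fs-shift-budget}.

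\emph{Finiteness and conclusion.} Finiteness of $H_h$ and $J_h$ follows from the at-most-quadratic growth of the fits in $Y$ together with Gaussian moments. Finally, $\varepsilon_h(\mu)$ is continuous in $\mu$ and vanishes at $\mu=0$. Since $\mathcal G_{\rm ref}>0$, the condition $\varepsilon_h(\mu)<\mathcal G_{\rm ref}$ holds on a neighborhood of $\mu=0$.
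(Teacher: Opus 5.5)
Your overall architecture matches the paper's. You prove pointwise-in-response monotonicity of the loss at $\mu=0$ from an explicit scalar rational fit, then apply a Gaussian change of measure. Step~2 is essentially the paper's argument: joint reflection of $(v,r)$, even and odd parts of the likelihood ratio, and Cauchy--Schwarz. Your two Gaussian identities are exact, not just correct up to constants: $\mathbb E_0(L_{\rm even}-1)^2=\cosh q-1$ and $\mathbb E_0L_{\rm odd}^2=\sinh q$.

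The gap is in Step~1, at its only nontrivial point. You say the $\tau$-derivative ``should'' have a fixed sign, but give no reason, and it is not automatic. Put $L_a=a^2p$, $U_a=a^2u$ and $B_a=U_a-L_a<0$. In your residual form the slope of $r_c$ is $\gamma=B_a-U_aR_c/V_c$. Both the sign of the derivative and the positivity of $w_\lambda$ depend on $R_c/V_c$, and a nonzero sample cross covariance can make this strongly negative.

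The missing fact is that the three-point design makes this cross covariance vanish exactly at the OLS pilot. With $v=t-\alpha$, $s_i=-x_ir_i$ and $h_i=x_i^2v$, use $x_i^3=a^2x_i$ and the normal equation $\sum_ix_ir_i=0$. This gives $\widehat{\operatorname{Cov}}_{\mathcal L}(s,h)=-\tfrac{v}{n}(a^2-L_a)\sum_ix_ir_i=0$. Hence $\mathsf S=\operatorname{diag}(C_a,D_a)$, $V_c=C_a+D_a$, $R_c=D_a$, and $\omega:=R_c/V_c\in(0,1)$.

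Now write $\vartheta=\theta-\alpha$, $\gamma=B_a-\omega U_a<0$ and $\vartheta_r=\omega L_av/\{L_a-(1-\omega)U_a\}$. Your criterion becomes $(U_a^2/V_c)(\vartheta-L_av/U_a)^2+\tau\gamma^2(\vartheta-\vartheta_r)^2$. So the fit is a convex combination of two positive multiples of $v$, with $0<\vartheta_r/v<1<L_a/U_a$ because $U_a<L_a$. Hence, for $v\neq0$, $|\widehat\beta_\lambda-\alpha|$ is strictly decreasing in $\tau$. Equivalently, writing $\widehat\beta_\lambda-\alpha=v\,\phi(\tau)$, one has $\phi(\tau)>0$ and $\phi'(\tau)=-L_aU_aB_a\gamma/\{V_c(U_a^2/V_c+\tau\gamma^2)^2\}<0$. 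Together with $\partial_\lambda\tau=-K_p\tau^2$ and $\tau>0$ (which follows from $|K_p|\le\kappa\Omega$), this gives the three regimes.

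Once this fact is supplied, your residual-form route is a valid alternative to the paper's direct $2\times2$ normal-equation formula for $\phi_\lambda$. It also gets positivity from $\tau>0$ alone, without the paper's separate bounds on $k_p$.

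Two smaller corrections. The randomness of $V_c,R_c,\Omega$ is harmless for a pointwise argument, since none of them depends on $\lambda$. And the residual penalty pulls the fit toward $\vartheta_r$, not toward $\alpha$.
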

\begin{proof}
The three-point design makes the supervised/auxiliary sample cross covariance vanish exactly at the actual OLS pilot. The resulting scalar rational fit has a pointwise loss derivative whose sign is that of $(u-p)(1-u-p)$. Supplementary Section~\ref{app:primitive-proof} derives this formula and the Gaussian/chi-square laws. Joint reflection of the pilot error and OLS residual leaves the covariance unchanged and reverses the centered fit. The even and odd parts of the Gaussian likelihood ratio then give \eqref{eq:fs-shift-budget}--\eqref{eq:fs-primitive-gain}; Supplementary Section~\ref{app:shift-proof} proves the corresponding result in every fixed parameter dimension.
\end{proof}

Theorem~\ref{thm:fs-primitive-improvement} supplies improvement from explicit design and predictor conditions with noisy responses. At the favorable reference model the risk decreases along the path, so half strength captures part of its available gain; Supplementary Section~\ref{app:primitive-proof} derives the path optimum. The Gaussian-shift argument extends improvement to a neighborhood measured in pilot-standard-error units, and Supplementary Section~\ref{app:perturbation-proof} gives stability under small design and prediction perturbations. Positive ridge supports this fixed-design analysis even when the limiting auxiliary covariance is singular.

\subsection{The half-strength default}
\label{subsec:default-strength}

For the strength decision below, $R_*^{\rm path}=\inf_{\lambda\in[0,1]}R(\lambda)$ denotes the smallest risk over deterministic strengths on the prescribed path.

Half strength combines a stable covariance blend with a simple strength rule. With $\kappa=1/2$, relative covariance eigenvalues lie in $[3/4,5/4]$. Proposition~\ref{prop:fs-convexity-balance} gives the scalar population balance for half strength, and the following quadratic decision model provides its midpoint minimax-regret rationale.

\begin{proposition}[Midpoint as a minimax-regret strength]
\label{prop:fs-minimax-midpoint}
Fix the design and the predictor. Consider a quadratic decision model with excess-risk curves $\mathcal E_\zeta(\lambda)=B(\lambda-\zeta)^2$, where $B>0$ is known and the unknown optimal strength $\zeta$ lies in $[l,u]\subseteq[0,1]$. Then for every deterministic $\lambda\in[0,1]$ the worst-case excess risk over that interval is
\begin{equation}
 W(\lambda)=\sup_{\zeta\in[l,u]}\mathcal E_\zeta(\lambda)=B\max\{(\lambda-l)^2,(\lambda-u)^2\},
 \label{eq:fs-minimax-worst-case}
\end{equation}
which is uniquely minimized at $\lambda=(l+u)/2$ with value $B(u-l)^2/4$. In particular, if the uncertainty interval is symmetric about $1/2$ then the minimax-regret strength is exactly $1/2$ and its worst-case excess risk is $B(u-l)^2/4$. If instead $\zeta$ has a prior with mean $m$ and variance $A_\pi$ on $[l,u]$, then $\mathbb E_\pi\{\mathcal E_\zeta(\lambda)\}=B\{(\lambda-m)^2+A_\pi\}$, minimized at $\lambda=m$; a uniform prior gives the midpoint again, with average excess $B(u-l)^2/12$.
\end{proposition}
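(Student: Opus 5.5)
The plan is to treat the statement as an elementary one-dimensional minimax computation. I would take the worst-case formula first, then the minimizer, and finally the Bayesian average.

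\emph{Worst case.} For fixed $\lambda$, the map $\zeta\mapsto B(\lambda-\zeta)^2$ is a convex function of $\zeta$. Its supremum over the compact interval $[l,u]$ is therefore attained at an endpoint, which gives \eqref{eq:fs-minimax-worst-case}.

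\emph{Minimizer.} Put $c=(l+u)/2$ and $w=(u-l)/2$. Since $\max\{|\lambda-l|,|\lambda-u|\}=|\lambda-c|+w$, we get
\[
W(\lambda)=B\bigl(|\lambda-c|+w\bigr)^2 .
\]
The inner quantity $|\lambda-c|+w$ is nonnegative. Squaring is strictly increasing on $[0,\infty)$. Hence $W$ is strictly increasing in $|\lambda-c|$, and its unique minimizer over $[0,1]$ is $\lambda=c$. The point $c$ lies in $[0,1]$ because $[l,u]\subseteq[0,1]$. The minimal value is $Bw^2=B(u-l)^2/4$. If the interval is symmetric about $1/2$, then $c=1/2$, which gives the stated special case.

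\emph{Prior average.} For a prior $\pi$ on $[l,u]$ with mean $m$ and variance $A_\pi$, write $\lambda-\zeta=(\lambda-m)-(\zeta-m)$ and expand the square. The cross term vanishes because $\mathbb E_\pi(\zeta-m)=0$. This leaves
\[
\mathbb E_\pi\mathcal E_\zeta(\lambda)=B\{(\lambda-m)^2+A_\pi\},
\]
which is uniquely minimized at $\lambda=m$. The mean $m$ lies in $[l,u]\subseteq[0,1]$, so this minimizer is admissible. For the uniform prior, $m=(l+u)/2$ and $A_\pi=(u-l)^2/12$, which gives the stated average excess.

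I expect no real obstacle. The only step needing care is uniqueness of the minimax point. It follows from the identity $\max\{|x|,|y|\}=(|x+y|+|x-y|)/2$ applied with $x=\lambda-l$ and $y=\lambda-u$, so that $x-y=u-l$ and $x+y=2(\lambda-c)$.
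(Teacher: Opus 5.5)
Your proof is correct and follows the paper's argument: convexity in $\zeta$ puts the worst case at an endpoint, the midpoint is the unique minimizer, and a bias--variance expansion handles the prior average. The only variation is in the uniqueness step: you obtain it from the explicit form $W(\lambda)=B(|\lambda-c|+w)^2$, whereas the paper uses the reflection symmetry $\lambda\mapsto l+u-\lambda$ and asserts monotonicity on each side of the midpoint, a monotonicity your closed form makes explicit.
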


The proposition gives the deterministic-strength decision under quadratic excess risk. The quadratic form is exact for affine-in-strength paths, including zero scalar feedback; the complete CC-PPI path is analyzed below through its rational representation. Selecting strength from the response requires a risk estimate whose behavior after selection must also be accounted for \citep{tibshiranirosset2019,cauchois2021}.

Half strength avoids response-based tuning. For comparison, Supplementary Section~\ref{supp:strength-selection} analyzes a paired parameter-SURE selector and bounds its selection error. We now compare the default with the best fixed strength on the same path.

\subsection{Risk along the calibration path}
\label{subsec:remaining-path-space}

\label{subsec:regret}

Let
\begin{equation}
 R_*^{\rm path}=\inf_{0\le\lambda\le1}R(\lambda),\qquad
 G_h=R(h)-R_*^{\rm path},\qquad
 G_0=R(0)-R_*^{\rm path}.
 \label{eq:fs-regret-definitions}
\end{equation}
Here $G_h$ is the risk forfeited by the default relative to the best deterministic strength on the prescribed path. At fixed design and predictor it is deterministic: response randomness is integrated out in $R$. Continuity and an integrable protected-path envelope ensure the infimum is attained. The definitions give
\begin{equation}
 R(h)-R(0)=G_h-G_0,\qquad G_h\ge0,\quad G_0\ge0.
 \label{eq:fs-regret-identity}
\end{equation}
Consequently $G_h\le G_0$ holds exactly when $R(h)\le R(0)$; it is not automatic. In the adverse regime of Theorem~\ref{thm:fs-primitive-improvement}, $G_0=0$ and $G_h>0$. The recovered-gain fraction $1-G_h/G_0$ is interpretable only when $G_0>0$, and lies in $[0,1]$ only if the default improves on the original fit.

An exact representation makes continuum comparisons possible. For each observed response set
\begin{equation}
 G_Y=\mathsf S^{-1/2}H,\quad L_Y=(G_Y^\top G_Y)^{-1}G_Y^\top,
 \quad P_Y=I-G_YL_Y,\quad z_Y=\mathsf S^{-1/2}(H\widehat\beta_0-b).
 \label{eq:fs-path-coordinates}
\end{equation}
If $P_YE_pP_Y=Q\operatorname{diag}(\rho_k)Q^\top$, with columns $q_k$ forming a complete orthonormal eigenbasis, the frozen normal equations give
\begin{equation}
 \widehat\beta_\lambda=\widehat\beta_0+
       \sum_k\frac{\lambda}{1+\lambda\rho_k}u_k,
 \qquad u_k=(L_YE_pq_k)(q_k^\top z_Y),\quad |\rho_k|\le\kappa.
 \label{eq:fs-exact-strength-path}
\end{equation}
For verification, first derive $\widehat\beta_\lambda-\widehat\beta_0
 =\lambda L_Y(I+\lambda E_pP_Y)^{-1}E_pz_Y$, use $P_Yz_Y=z_Y$, and diagonalize the restriction to the residual subspace. All coordinates depend on the same observed response. Zero residual eigenvalues are retained: $\rho_k=0$ can still have $u_k\ne0$, contributing the linear term $\lambda u_k$. There is no independent-weight substitution or local approximation.

Supplementary Section~\ref{app:path-proofs} derives the path's derivative envelopes, realized-loss stationary points, continuum regret and grid-error bounds. The response dependence in this representation is retained in the following risk approximation.

\subsubsection{Risk approximation and regret bounds}
\label{subsec:convergent-risk}

The exact path also permits a finite-sample risk approximation with an explicit remainder. This controls approximation in strength at fixed sample sizes, separately from any asymptotic expansion in $n$. Abbreviate $G=G_Y$, $L=L_Y$, $P=P_Y$, $E=E_p$, $z=z_Y$, $e_0=\widehat\beta_0-\beta$, and $\gamma=\|E\|\le\kappa<1$. For an integer $m\ge1$, define
\begin{align}
 v_j^{[m]}&=(-1)^{j-1}L(EP)^{j-1}Ez,\quad 1\le j\le m,\label{eq:fs-neumann-coefficients}\\
 \widehat\beta^{[m]}_\lambda&=\widehat\beta_0+\sum_{j=1}^m\lambda^jv_j^{[m]},\qquad
 U_m=\frac{\|(EP)^mEz\|}{s_{\min}(G)(1-\gamma)}.
 \label{eq:fs-neumann-approximation}
\end{align}
The coefficient itself does not depend on the truncation order once $m\ge j$. Let $p_m(\lambda)=\mathbb E\|\widehat\beta^{[m]}_\lambda-\beta\|^2$, a polynomial of degree at most $2m$, and set
\begin{equation}
 \epsilon_m(\lambda)=2\lambda^{m+1}\mathbb E\left[
  \left(\|e_0\|+\sum_{j=1}^m\lambda^j\|v_j^{[m]}\|\right)U_m\right]
  +\lambda^{2m+2}\mathbb E U_m^2,
 \qquad\bar\epsilon_m=\epsilon_m(1).
 \label{eq:fs-risk-truncation-budget}
\end{equation}

\begin{proposition}[Improvement magnitude and fixed-path regret]
\label{prop:fs-risk-truncation}
Suppose $\mathbb E\|e_0\|^2<\infty$ and $\mathbb E Z^2<\infty$, where $Z=\|z\|/s_{\min}(G)$. Then, uniformly for $\lambda\in[0,1]$,
\begin{equation}
 |R(\lambda)-p_m(\lambda)|\le\epsilon_m(\lambda)\le\bar\epsilon_m,
 \qquad\bar\epsilon_m\longrightarrow0
 \label{eq:fs-risk-truncation-uniform}
\end{equation}
geometrically in $m$. In particular, with $G_h=R(h)-\inf_{[0,1]}R(\lambda)$,
\begin{align}
 p_m(0)-p_m(h)-\epsilon_m(h)
 &\le R(0)-R(h)\le p_m(0)-p_m(h)+\epsilon_m(h),\label{eq:fs-gain-interval}\\
 \max\{0,p_m(h)-\min_{[0,1]}p_m-\epsilon_m(h)-\bar\epsilon_m\}
 &\le G_h\le p_m(h)-\min_{[0,1]}p_m+\epsilon_m(h)+\bar\epsilon_m.
 \label{eq:fs-fixed-path-regret-interval}
\end{align}
The moment assumptions hold at fixed full-rank design, positive ridge and Gaussian errors for the actual estimator.
\end{proposition}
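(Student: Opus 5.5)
The plan is to compare the exact rational path \eqref{eq:fs-exact-strength-path} with its Neumann truncation pathwise, and only then take expectations. For each response, the verification identity gives $\widehat\beta_\lambda-\widehat\beta_0=\lambda L(I+\lambda EP)^{-1}Ez$. Because $\|\lambda EP\|\le\gamma<1$, we can expand $(I+\lambda EP)^{-1}=\sum_{j\ge0}(-\lambda EP)^j$. Truncating after $m$ terms yields exactly $\sum_{j=1}^m\lambda^jv_j^{[m]}$, and the remainder is
\[
 r_m(\lambda)=\widehat\beta_\lambda-\widehat\beta^{[m]}_\lambda=(-1)^m\lambda^{m+1}L(I+\lambda EP)^{-1}(EP)^mEz .
\]
We bound it using $\|L\|=1/s_{\min}(G)$ and $\|(I+\lambda EP)^{-1}\|\le 1/(1-\lambda\gamma)\le1/(1-\gamma)$, which gives $\|r_m(\lambda)\|\le\lambda^{m+1}U_m$.

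Next we expand the losses. Writing $\widehat\beta_\lambda-\beta=(\widehat\beta^{[m]}_\lambda-\beta)+r_m$, we have
\[
 \bigl|\|\widehat\beta_\lambda-\beta\|^2-\|\widehat\beta^{[m]}_\lambda-\beta\|^2\bigr|\le2\|\widehat\beta^{[m]}_\lambda-\beta\|\|r_m\|+\|r_m\|^2 .
\]
The triangle inequality gives $\|\widehat\beta^{[m]}_\lambda-\beta\|\le\|e_0\|+\sum_j\lambda^j\|v_j^{[m]}\|$, so taking expectations produces exactly $\epsilon_m(\lambda)$. Each term in $\epsilon_m(\lambda)$ is nondecreasing in $\lambda$, hence $\epsilon_m(\lambda)\le\bar\epsilon_m$. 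Integrability follows from two facts: $\|v_j\|\le\gamma^jZ$, since $\|P\|\le1$ and $\|z\|$ is measured by $Z$, and $U_m\le\gamma^{m+1}Z/(1-\gamma)$. Cauchy--Schwarz with $\mathbb E\|e_0\|^2,\mathbb EZ^2<\infty$ then makes every term finite. The same bounds give the geometric rate
\[
 \bar\epsilon_m\le 2\gamma^{m+1}(1-\gamma)^{-1}\Bigl[\sqrt{\mathbb E\|e_0\|^2\,\mathbb EZ^2}+\gamma(1-\gamma)^{-1}\mathbb EZ^2\Bigr]+\gamma^{2m+2}(1-\gamma)^{-2}\mathbb EZ^2\to0 .
\]
The polynomial degree claim is immediate, because $p_m$ is the expectation of a squared norm of a degree-$m$ polynomial in $\lambda$ with integrable coefficients.

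The gain interval \eqref{eq:fs-gain-interval} follows directly. Since $\epsilon_m(0)=0$, we have $R(0)=p_m(0)$, and we subtract the bound at $\lambda=h$. For the regret interval \eqref{eq:fs-fixed-path-regret-interval}, the uniform bound gives $|\inf_{[0,1]}R-\min_{[0,1]}p_m|\le\bar\epsilon_m$; the minimum of the continuous polynomial $p_m$ is attained. Combining this with $|R(h)-p_m(h)|\le\epsilon_m(h)$ gives both sides, and $G_h\ge0$ gives the $\max\{0,\cdot\}$.

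It remains to verify the moment assumptions. With a fixed full-rank design and positive ridge, $\mathsf S\succeq\eta I$ has bounded eigenvalues, and $\mathsf S$ is quadratic in the responses. Hence $s_{\min}(G)\ge s_{\min}(H)/\|\mathsf S\|^{1/2}$ grows at most polynomially in $\|Y\|$, and $\|z\|$ is polynomially bounded as well. Gaussian errors have all moments, so $\mathbb EZ^2<\infty$ and $\mathbb E\|e_0\|^2<\infty$. The main obstacle is this last step: lower-bounding $s_{\min}(G)$ by $\|\mathsf S\|^{-1/2}s_{\min}(H)$ and controlling $\|\mathsf S\|$ polynomially in $Y$. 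I would make this explicit through the quadratic dependence of the scores $s_i$ on $t=CY$.
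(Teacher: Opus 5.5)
Your proposal follows the paper's proof essentially step for step: the exact Neumann tail $(-1)^m\lambda^{m+1}L(I+\lambda EP)^{-1}(EP)^mEz$, the pathwise loss comparison and triangle inequality giving $\epsilon_m(\lambda)$, the bounds $\|v_j^{[m]}\|\le\gamma^jZ$ and $U_m\le\gamma^{m+1}Z/(1-\gamma)$, both intervals via $p_m(0)=R(0)$ and $|\inf_{[0,1]}R-\min_{[0,1]}p_m|\le\bar\epsilon_m$, and the polynomial-growth check of the moments are exactly the paper's argument. Two small repairs are needed. First, $\gamma=\|E_p\|$ depends on the response, so in your geometric-rate display the factors of $\gamma$ that you pull outside the expectations must be replaced by the deterministic cap $\kappa$, using $\gamma\le\kappa$ pointwise as the paper does. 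Second, in the last step it is $1/s_{\min}(G)\le\|\mathsf S\|^{1/2}/s_{\min}(H)$, not $s_{\min}(G)$, that grows polynomially in $\|Y\|$; the eigenvalues of $\mathsf S$ are bounded below by $\eta$, not bounded above.
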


The truncation order controls the risk-approximation error without treating the estimated weight as independent of the response. The numerical comparison below estimates risk by Monte Carlo evaluation of the exact strength path. Supplementary Section~\ref{app:convergent-proof} records the remainder analysis, and Proposition~\ref{prop:fs-nonconvex-example} shows that the rational realized-loss path can be nonconvex.

We evaluated the specified path under correlated Gaussian covariates, $\Sigma_{ij}=0.6^{|i-j|}$, unit Gaussian response noise, and an external linear predictor with normalized slope discrepancy $r$. The eight laws combine $d\in\{2,5\}$, $n\in\{40,160\}$, $r\in\{0.2,0.5\}$ and $N=5n$. Each law uses 16 designs; paired discrepancies share covariates. Two independent streams of 4,000 responses fit fixed-strength diagnostic references on a 101-point grid; a third stream of 8,000 responses evaluates risk. The references use simulation truth and are not feasible selectors. Table~\ref{tab:fs-path-space} reports averages of designwise OLS-normalized risks, using the exact conditional OLS risk $\operatorname{tr}[(X^\top X)^{-1}]$.

\begin{table}[htbp]\centering\small
\caption{Remaining improvement along the current calibration path. Risks are percentages of conditional OLS MSE. The gap is relative to half-strength risk.}
\label{tab:fs-path-space}
\begin{tabular}{rrr rrr r}\toprule
$d$&$n$&$r$&Original&Half&Reference&Gap (\%)\\\midrule
2&40&0.2&67.078&65.660&63.975&2.566\\
2&40&0.5&98.318&95.692&93.443&2.350\\
2&160&0.2&80.479&79.229&77.302&2.431\\
2&160&0.5&95.670&94.622&93.392&1.301\\
5&40&0.2&61.575&60.398&59.371&1.701\\
5&40&0.5&83.201&80.653&78.745&2.366\\
5&160&0.2&65.783&65.121&64.243&1.348\\
5&160&0.5&90.836&89.165&87.571&1.787\\\bottomrule
\end{tabular}
\end{table}

Half improves on original GMM-PPI in all eight law averages in Table~\ref{tab:fs-path-space}. Independent fixed-strength references improve further by only 1.30--2.57\% of half-strength risk. Approximate simultaneous Monte Carlo bands with the curvature correction give upper endpoints at most 3.03\% for the sampled-design average gap. Grid and reference-fitting diagnostics are much smaller than the measured gaps. Thus the independently selected reference leaves little further risk reduction on these paths; Supplementary Section~\ref{app:empirical-diagnostics} records the uncertainty and reference protocol.

\section{Simulation Studies}
\label{sec:simulation}

We compare parameter MSE in linear and logistic regression, separating the gain from the full GMM fit from the additional gain due to covariance calibration. A multi-outcome noisy-surrogate example compares the population efficiencies of unrestricted, diagonal and scalar corrections.

\subsection{Setup and methods}
\label{subsec:sim-setup}

In the regression simulations, $X\sim N(0,\Sigma)$ with $\Sigma_{ij}=0.6^{|i-j|}$ and $d\in\{2,5,10,20\}$. We draw one labeled and one unlabeled design per cell and seed, with $N=5n$, then repeat responses on those fixed designs. There are two design seeds. Linear regression uses $Y=X\beta+\epsilon$, $\epsilon\sim N(0,1)$, $\|\beta\|=1$, and $f(x)=x^\top\alpha$ with $\alpha=\beta+r\mathbf1_d/\sqrt{\mathbf1_d^\top\Sigma\mathbf1_d}$. The grid is $n\in\{50,100,500,1000,2000,10000\}$ and $r\in\{.2,.5,1\}$, with 1,000 responses per design. Logistic regression uses $\Pr(Y=1\mid X)=\operatorname{expit}(X^\top\theta)$, where $\theta=0.6b/\|b\|$, $b_1=1$, $b_2=0.5$ and $b_j=0.4(-0.6)^{j-1}$ for $j\ge3$. Its predictor is $f(x)=\operatorname{expit}(x^\top\alpha)$ with $\alpha=\theta+r\mathbf1_d/\sqrt{\mathbf1_d^\top\Sigma\mathbf1_d}$. It uses $n\in\{100,500,2000\}$, $r\in\{.2,.5\}$ and 200 responses per design.

Comparators are supervised estimation, original PPI, PPI++, PSPA, PDC and original GMM-PPI. PSPA uses the full-Jacobian one-step update, with a lower-capped version retained as a sensitivity check. The PPI++ covariance plug-in is evaluated at the supervised pilot. Linear moments have closed-form solutions, and logistic fits use damped score solving and trust-region GMM least squares. Strength grids contain 101 linear points and 11 logistic points. Full numerical protocols, additional configurations and convergence diagnostics are in Supplementary Section~\ref{app:empirical-diagnostics}.

\subsection{Single-outcome linear regression}
\label{subsec:sim-linear}

Table~\ref{tab:sim-linear} reports the small-sample comparison at $r=.2$, including $n=50$ and $n/d$ as low as $2.5$. At $d=20,n=50$, CC-PPI has 53.3\% of supervised MSE, compared with 55.3\% for original GMM-PPI; at $d=10,n=50$, the corresponding ratios are 57.2\% and 58.5\%. GMM-PPI and CC-PPI produce substantial reductions, particularly at smaller labeled sizes. At $d=20,n=100$, CC-PPI has 52.8\% of supervised MSE, compared with 54.3\% for original GMM-PPI. At $d=10,n=500$, calibration lowers the ratio from 76.4\% to 74.7\%. The reductions become smaller as the labeled sample grows, consistent with the common first-order limit. The correctly specified conditional mean gives zero population supervised--auxiliary cross covariance here, so the observed gains arise from the finite-sample criterion.

\begin{table}[H]\centering\small
\caption{Linear MSE ratios to the full paired supervised MSE (percent), $r=.2$, averaged across two fixed designs. Supervised is 100 by construction.}
\label{tab:sim-linear}
\begin{tabular}{rrrrrrrrr}\toprule
$d$&$n$&$n/d$&Sup.&PPI++&PSPA&PDC&GMM&CC\\\midrule
2&50&25&100.0&96.4&$\ge10^3$&113.5&51.6&51.5\\
2&100&50&100.0&98.6&266.4&107.5&83.5&83.3\\
2&500&250&100.0&99.8&100.8&100.2&105.6&104.4\\
2&2000&1000&100.0&100.1&100.0&100.0&101.0&100.5\\
5&50&10&100.0&100.3&107.4&105.1&74.5&73.4\\
5&100&20&100.0&100.2&104.5&106.5&68.8&67.4\\
5&500&100&100.0&100.0&100.7&100.9&81.3&81.4\\
5&2000&400&100.0&100.0&100.4&100.4&96.3&96.5\\
10&50&5&100.0&98.0&116.0&136.4&58.5&57.2\\
10&100&10&100.0&99.4&106.1&112.0&63.7&62.6\\
10&500&50&100.0&99.8&101.1&102.8&76.4&74.7\\
10&2000&200&100.0&99.9&100.1&100.4&90.0&89.5\\
20&50&2.5&100.0&98.4&119.8&257.5&55.3&53.3\\
20&100&5&100.0&98.9&105.7&134.2&54.3&52.8\\
20&500&25&100.0&99.7&100.9&104.0&66.0&65.2\\
20&2000&100&100.0&100.0&100.3&101.1&86.7&85.8\\
\bottomrule\end{tabular}
\end{table}

\begin{table}[H]\centering\small
\caption{Empirical bias--variance decomposition for GMM-PPI, $d=10$, $r=.2$. The finite-response identity is MSE = squared empirical bias + empirical variance; Monte Carlo bias is included.}
\label{tab:sim-biasvar}
\begin{tabular}{rrrrrr}\toprule
$n$&Sup. MSE&GMM MSE&Bias$^2$&Variance&Var/Sup. MSE (\%)\\\midrule
50&0.5301&0.3101&0.00755&0.3025&57.1\\
100&0.215&0.1358&0.008226&0.1276&59.4\\
500&0.04153&0.03172&0.005215&0.0265&63.8\\
1000&0.02015&0.01614&0.0005353&0.0156&77.5\\
2000&0.01021&0.009194&0.0004899&0.008704&85.2\\
10000&0.001994&0.001975&6.575e-05&0.00191&95.7\\
\bottomrule\end{tabular}
\end{table}

The bias--variance decomposition shows a modest squared bias alongside a substantial variance reduction. At $n=50$, GMM-PPI variance is 57.1\% of supervised MSE, rising to 95.7\% at $n=10000$. This bias--variance tradeoff is consistent with finite-sample shrinkage.

\subsection{Multi-outcome surrogate model}
\label{subsec:sim-multi}

We now extend the predictor-of-$X$ setting to an observed surrogate available in both inference samples. Let $p$ be the covariate dimension and $q$ the number of outcomes, so the parameter dimension is $d=pq$. Take $Y=X\Theta+\epsilon$ and the supplied surrogate $\widehat Y=X\Theta+\eta$, where $\eta=\Gamma\epsilon+\xi/2$, $\Gamma=I_q+.8(\mathbf1_q\mathbf1_q^\top-I_q)$, and independent $\epsilon,\xi$ have covariance $\Sigma_\epsilon=(1-\rho_e)I_q+\rho_e\mathbf1_q\mathbf1_q^\top$. The joint law of the covariates and surrogate is common to the two inference samples. The surrogate contains information beyond the displayed regression covariates and is not a deterministic function of them.

Using feature-major score vectorization and $r_s=n/N=1/5$, the population score blocks and Jacobian inverse are
\begin{equation}
 V=\Sigma_X\otimes\Sigma_\epsilon,\quad
 C=\Sigma_X\otimes(\Sigma_\epsilon\Gamma^\top),\quad
 D=(1+r_s)\Sigma_X\otimes(\Gamma\Sigma_\epsilon\Gamma^\top+\Sigma_\epsilon/4),\quad
 J^{-1}=\Sigma_X^{-1}\otimes I_q.
 \label{eq:sim-sur-blocks}
\end{equation}
For $W=J^{-\top}J^{-1}$, all three classes minimize the same parameter trace risk. Their coefficients are
\begin{equation}
 L_*=CD^{-1},\qquad
 \omega_*=(W\circ D)^{-1}\operatorname{diag}(C^\top W),\qquad
 \ell_*={\operatorname{tr}(WC)}/{\operatorname{tr}(WD)},
 \label{eq:sim-sur-restricted}
\end{equation}
where $\circ$ is the entrywise product, the diagonal class uses $\operatorname{Diag}(\omega_*)$, and $\operatorname{diag}$ here extracts a vector. The expressions follow by differentiating the quadratic trace risk. Covariances are computed analytically; no Monte Carlo approximation or truncation of $L_*$ is used. The nested-class ordering is full $\le$ diagonal $\le$ scalar.

\begin{table}[H]\centering\small
\caption{Exact population loss of the optimal diagonal and scalar classes relative to the full correction (percent). In this symmetric design both restricted classes coincide; values do not depend on $p$.}
\label{tab:sim-multi}
\begin{tabular}{rrrrr}\toprule
$\rho_e$&$q=1$&$q=2$&$q=3$&$q=5$\\\midrule
0.2&0.000&0.565&2.218&4.933\\
0.7&0.000&0.332&1.455&3.432\\
\bottomrule\end{tabular}
\end{table}

The full correction gains efficiency by using cross-outcome information. In this symmetric surrogate model, the diagonal and scalar optima coincide; the full matrix improves on both. At $\rho_e=.7$, their relative excess risk rises to approximately .332, 1.455 and 3.432 percent for $q=2,3,5$. The unrestricted optimum in Section~\ref{sec:asymptotic} uses this cross-outcome information.

\subsection{Logistic regression}
\label{subsec:sim-logistic}

Table~\ref{tab:sim-logistic} shows that the finite-sample accuracy gains extend to nonlinear logistic moments. As in linear regression, most of the MSE reduction comes from the full GMM fit, with calibration making a smaller additional change. The reported sizes use the convergence-checked fits and all 200 paired responses in each supervised denominator. Supplementary Section~\ref{app:empirical-diagnostics} records the smaller-size solver diagnostics.

\begin{table}[H]\centering\small
\caption{Logistic MSE ratios to the full paired supervised MSE (percent), $r=.2$, averaged over the two fixed designs. All 200 responses per design are included in the denominator.}
\label{tab:sim-logistic}
\begin{tabular}{rrrrrrrr}\toprule
$d$&$n$&Sup.&PPI++&PSPA&PDC&GMM&CC\\\midrule
2&100&100.0&101.7&100.7&100.5&87.4&86.8\\
2&500&100.0&100.9&99.9&100.0&83.3&82.6\\
2&2000&100.0&100.2&100.1&99.9&92.3&91.8\\
5&100&100.0&99.5&103.8&105.6&52.6&52.0\\
5&500&100.0&99.9&101.1&100.0&61.5&61.4\\
5&2000&100.0&99.8&100.4&100.4&73.3&73.3\\
10&100&100.0&99.2&477.4&109.4&53.3&52.4\\
10&500&100.0&99.7&101.2&102.3&51.8&51.7\\
10&2000&100.0&99.8&100.0&100.2&68.6&68.6\\
20&500&100.0&99.7&101.0&104.2&53.0&52.6\\
20&2000&100.0&100.0&100.3&101.0&58.9&58.7\\
\bottomrule\end{tabular}
\end{table}

\section{Real-Data Studies}
\label{sec:real-data}

\subsection{Setup and methods}
\label{subsec:real-setup}

We study four datasets with externally trained XGBoost predictors \citep{chen2016xgboost}. California Housing uses a northern training region and a southern evaluation region; Bike Sharing uses an earlier training period and a later evaluation period; ACS Income trains on 2018 and evaluates on 2019; Wine Quality trains on red wine and evaluates on white wine. The shift is between predictor training and inference, whereas labeled and unlabeled inference points are random disjoint subsets of the same evaluation region. The target uses a specified low-dimensional covariate subset, and the predictor may use additional observed features.

For each labeled size, we evaluate 50 random subsets per seed, with two seeds (100 evaluation subsamples total). The target is the supervised parameter on the complete evaluation region, using the empirical-target convention of \citep{ji2025predictions}. All estimators use the same labeled subset and unlabeled complement. The externally fitted prediction rules are held fixed across inference subsets. Outcomes, feature selection, preprocessing and predictor settings are given in Supplementary Section~\ref{app:implementation}.

The main tables report sizes for which the supervised pilot and solver checks support an accuracy comparison. The extended sample-size grid, method-specific failure counts and stability profiles are reported in Supplementary Section~\ref{app:empirical-diagnostics}; a dagger in a main table denotes a failed fit or an unavailable ratio.

The fixed-strength reference CC-PPI$^{\rm ref}$ selects a strength on 50 separate training subsamples using the full-region target, then evaluates it on the common 100 fresh subsamples. It uses 101 linear grid points and 11 logistic points. This independent reference measures the remaining accuracy available along the calibration path.

\begin{table}[H]\centering\small
\caption{Evaluation-region sizes and target dimensions. $n+N$ denotes the full evaluation pool, before its labeled/unlabeled split.}
\label{tab:real-datasets}
\begin{tabular}{lccrl}\toprule
Dataset & task & $d$ & $n+N$ & predictor shift\\\midrule
California Housing \citep{pace1997sparse}&linear&9&10,313&spatial\\
Bike Sharing \citep{fanaee2014event}&linear&6&5,000&temporal\\
ACS Income \citep{ding2021retiring}&logistic&4&50,000&temporal\\
Wine Quality \citep{cortez2009modeling}&linear&12&4,898&red to white\\
\bottomrule\end{tabular}
\end{table}

\subsection{Estimation accuracy}
\label{subsec:real-results}

Figure~\ref{fig:real-data} compares the methods across labeled sample sizes. Tables~\ref{tab:real-california}--\ref{tab:real-wine} report selected sizes, starting at $n=25$ for California Housing and Wine Quality and $n=75$ for ACS Income. Supplementary Section~\ref{app:empirical-diagnostics} retains the complete grid and convergence diagnostics.

\begin{figure}[H]\centering
\includegraphics[width=\textwidth]{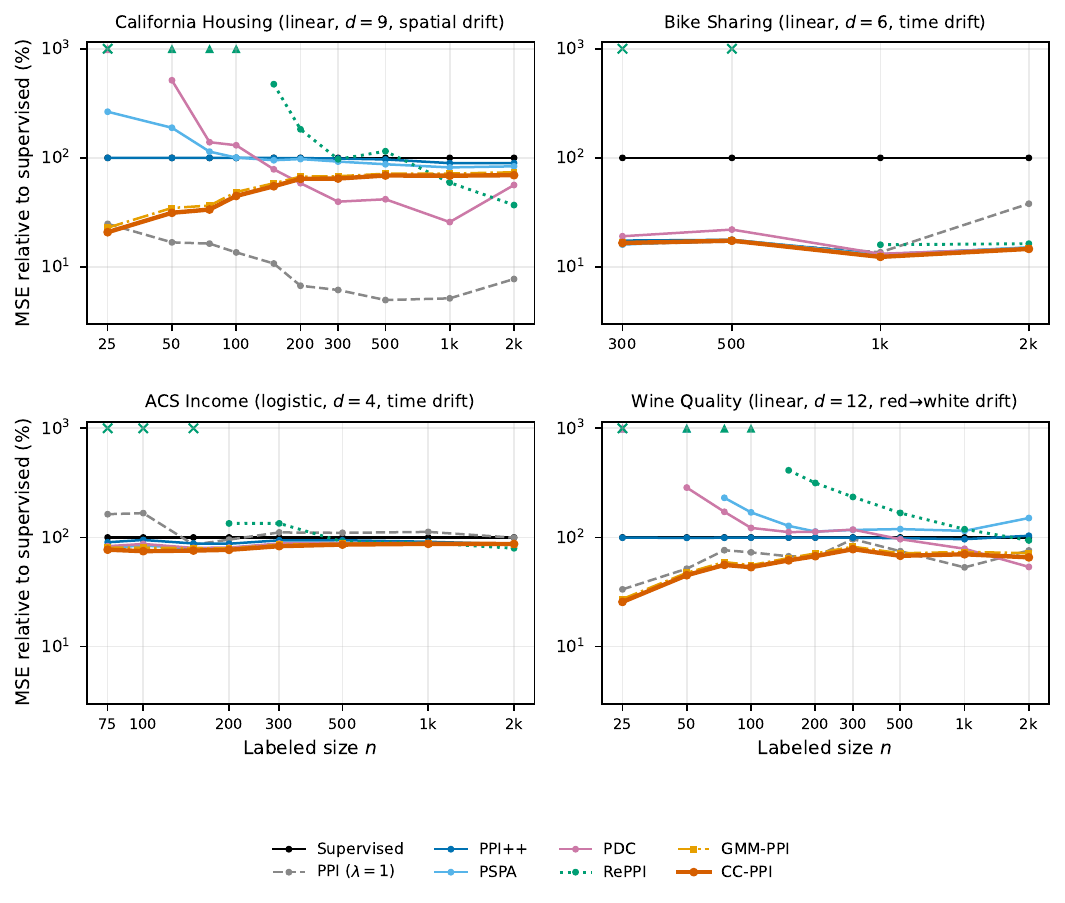}
\caption{Parameter MSE relative to supervised estimation across competing methods, with CC-PPI highlighted. Both axes are logarithmic. Upward triangles indicate finite ratios at least $10^3$; crosses indicate failed fits. The complete size grid is in Supplementary Section~\ref{app:empirical-diagnostics}.}
\label{fig:real-data}
\end{figure}

\begin{table}[H]\centering\small
\caption{California Housing: MSE relative to supervised (percent).}
\label{tab:real-california}
\begin{tabular}{rrrrrrrrrr}\toprule
$n$&Sup.&PPI&PPI++&PSPA&PDC&RePPI&GMM&CC&CC$^{\rm ref}$\\\midrule
25&100.0&24.9&100.0&264.7&$\ge10^3$&$\dagger$&22.9&20.8&19.4\\
50&100.0&16.8&99.9&188.7&513.7&$\ge10^3$&34.8&31.4&28.8\\
100&100.0&13.6&99.8&100.6&130.5&$\ge10^3$&48.7&44.8&41.5\\
500&100.0&5.0&96.7&87.4&41.8&115.1&71.6&68.9&67.4\\
2000&100.0&7.8&89.4&83.7&56.4&36.9&73.8&69.4&65.3\\
\bottomrule\end{tabular}
\end{table}

\begin{table}[H]\centering\small
\caption{Bike Sharing: MSE relative to supervised (percent).}
\label{tab:real-bike}
\begin{tabular}{rrrrrrrrrr}\toprule
$n$&Sup.&PPI&PPI++&PSPA&PDC&RePPI&GMM&CC&CC$^{\rm ref}$\\\midrule
300&100.0&16.2&17.5&16.0&19.1&$\dagger$&16.8&16.6&16.5\\
500&100.0&17.1&17.8&17.8&22.0&$\dagger$&17.6&17.4&17.3\\
1000&100.0&13.7&13.1&12.9&13.2&16.0&12.5&12.4&12.3\\
2000&100.0&38.0&15.0&15.2&14.8&16.4&14.7&14.7&14.7\\
\bottomrule\end{tabular}
\end{table}

\begin{table}[H]\centering\small
\caption{ACS Income: MSE relative to supervised (percent).}
\label{tab:real-acs}
\begin{tabular}{rrrrrrrrrr}\toprule
$n$&Sup.&PPI&PPI++&PSPA&PDC&RePPI&GMM&CC&CC$^{\rm ref}$\\\midrule
75&100.0&163.5&90.4&80.4&83.2&$\dagger$&80.5&77.3&76.1\\
100&100.0&167.1&94.9&84.4&87.2&$\dagger$&80.7&74.9&71.4\\
200&100.0&95.6&87.6&80.2&80.8&134.5&79.3&77.0&76.0\\
500&100.0&110.3&94.3&89.0&88.8&91.6&87.8&86.0&85.0\\
2000&100.0&99.6&86.7&85.8&85.9&79.7&86.9&87.2&87.8\\
\bottomrule\end{tabular}
\end{table}

\begin{table}[H]\centering\small
\caption{Wine Quality: MSE relative to supervised (percent).}
\label{tab:real-wine}
\begin{tabular}{rrrrrrrrrr}\toprule
$n$&Sup.&PPI&PPI++&PSPA&PDC&RePPI&GMM&CC&CC$^{\rm ref}$\\\midrule
25&100.0&33.5&99.6&$\ge10^3$&$\ge10^3$&$\dagger$&27.3&25.7&24.5\\
50&100.0&51.7&99.3&$\ge10^3$&286.4&$\ge10^3$&47.5&45.1&43.4\\
100&100.0&73.1&99.6&169.8&122.4&$\ge10^3$&55.7&53.4&51.8\\
500&100.0&74.9&98.5&119.3&96.5&167.8&71.6&67.9&65.9\\
2000&100.0&76.0&103.7&150.7&53.8&94.0&71.9&65.8&62.8\\
\bottomrule\end{tabular}
\end{table}

CC-PPI produces substantial gains over supervised estimation in all four tasks at the reported sizes. At $n=500$, its MSE ratios are 68.9\% on California Housing, 17.4\% on Bike Sharing, 86.0\% on ACS Income and 67.9\% on Wine Quality. Calibration lowers the MSE of the corresponding original GMM fits. On Wine Quality, CC-PPI outperforms PDC at $n=50$ (45.1\% versus 286.4\%), whereas PDC is better at $n=2000$ (53.8\% versus 65.8\%). The crossover shows that the small-sample advantage need not persist as the labeled sample grows. Original PPI is especially effective on California Housing, so the relative performance also depends on how the predictor information enters the estimator.

The independent reference gives a measure of the accuracy left beyond half strength. At $n=2000$, the default/reference ratios are 69.4/65.3 on California Housing, 14.7/14.7 on Bike Sharing and 65.8/62.8 on Wine Quality. The default is close to the independently selected strength in these comparisons.

\section{Conclusion}
\label{sec:conclusion}

CC-PPI calibrates the GMM covariance weight using unlabeled prediction scores. The raw target preserves the corrected channel and replaces the labeled prediction-covariance contribution in the auxiliary residual metric with its unlabeled counterpart. Spectral protection makes the covariance positive definite, and half strength fixes the blend without response-based tuning. Under the regular fixed-moment conditions, the calibration preserves first-order efficiency.

For linear regression, the finite-sample risk identities retain the dependence among the pilot, empirical covariance and final estimate. The primitive design theorem proves improvement with a quantified tolerance to prediction error. The quadratic midpoint decision model motivates half strength, while the exact path and convergent risk approximation quantify its distance from the best fixed strength.

The regression simulations show substantial MSE reductions from the full GMM criterion, with calibration often improving the fit further. Four real-data studies extend the comparison to externally trained predictors under spatial, temporal and wine-type shifts. In the dedicated path experiments, independent fixed-strength references leave only a few percent of further risk reduction; the linear real-data comparisons also put the default close to the reference.

Further work could establish broader design conditions for improvement, compare other auxiliary residual metrics, and develop finite-sample interval calibration.

\clearpage
\appendix
\renewcommand{\thesection}{S\arabic{section}}
\renewcommand{\theHsection}{supp.\arabic{section}}
\numberwithin{equation}{section}
\numberwithin{table}{section}
\numberwithin{figure}{section}
\begin{center}
{\Large\bfseries Supplementary Material}\par\medskip
{\large for Covariance-Calibrated Prediction-Powered Inference}
\end{center}

This material accompanies the main paper and provides the supporting results, proofs and numerical protocols. Sections~\ref{app:asymptotic-proofs}--\ref{app:literature-protocol} are numbered S1--S5; equations, formal results, tables and figures carry the corresponding S prefix. References to numbered main-paper results identify the statements proved here. References for the main paper and supplementary material are listed together at the end.

The notation follows the main paper. The supervised and auxiliary moments are $m_1,m_2$, and their stacked vector is $\widehat g$. The pilot is $t$; $\mathsf S$ is the covariance on the $n$ scale, $\eta>0$ is its ridge, and $\rho=n/N$. The raw update is $\Delta=T(C_U-C_L)T^\top$ and its protected version is $\Delta_p=TK_pT^\top$. Strength varies over $\lambda\in[0,1]$, with default $h=1/2$ and cap $\kappa=1/2$. Exact regression risk is $R(\lambda)=\mathbb E_\epsilon\|\widehat\beta_\lambda-\beta\|^2$, conditional on the fixed design and supplied predictions. Definitions and assumptions specific to each supporting result are stated where it is used.

\noindent\textbf{Contents and relation to the main paper.}
\begin{itemize}
\item Section~\ref{app:asymptotic-proofs}: supporting constructions and proofs for Sections~\ref{sec:setup}--\ref{sec:asymptotic}.
\item Section~\ref{app:calibration-proofs}: calibration algebra, regularity and finite-sample properties supporting Sections~\ref{subsec:finite-choice}--\ref{subsec:calibrated-target}.
\item Section~\ref{app:risk-proofs}: risk-transfer proofs and strength selection supporting Sections~\ref{subsec:finite-risk-selection} and \ref{subsec:default-strength}.
\item Section~\ref{app:path-proofs}: path derivations, regret bounds and risk approximation for Section~\ref{subsec:remaining-path-space}.
\item Section~\ref{app:literature-protocol}: numerical protocols, comparator definitions and extended experimental diagnostics for Sections~\ref{sec:finite-sample}--\ref{sec:real-data}.
\end{itemize}

\section{Regular GMM Limits and Efficiency}
\label{app:asymptotic-proofs}

\subsection{Additional covariance targets and a benchmark strength}
\label{subsec:targets}

We give three alternative covariance targets. The cross-block target is
\begin{equation}
 S_{0,n}^{\mathrm{cross}}=\operatorname{diag}(\widehat V,\widehat D).\label{eq:target-cross}
\end{equation}
It retains within-block covariances and yields
\begin{equation}
 \widehat S_\lambda^{\mathrm{cross}}=
 \begin{pmatrix}\widehat V&-(1-\lambda)\widehat R\\
 -(1-\lambda)\widehat R^\top&\widehat D\end{pmatrix}.\label{eq:cross-path}
\end{equation}
Thus estimated cross covariance is discounted while the two diagonal blocks remain unchanged.

The spherical target is
\begin{equation}
 S_{0,n}^{\mathrm{sph}}=\operatorname{diag}\left\{
 \frac{\operatorname{tr}(\widehat V)}d I_d,
 \frac{\operatorname{tr}(\widehat D)}d I_d\right\}.\label{eq:target-spherical}
\end{equation}
It changes both the cross block and within-block eigenvalue dispersion. This target depends on the units of moment coordinates; rescaling those coordinates can change the estimator.

The diagonal target is
\begin{equation}
 S_{0,n}^{\mathrm{diag}}=\operatorname{diag}\{\operatorname{Diag}(\widehat V),\operatorname{Diag}(\widehat D)\},\label{eq:target-diagonal}
\end{equation}
where $\operatorname{Diag}(K)$ extracts the diagonal into a diagonal matrix, and $\operatorname{diag}(K_1,K_2)$ denotes block concatenation. This target preserves marginal variances but depends on the coordinate axes.

Other choices include metric-based targets and their combinations. For the targets above, a sample-size benchmark strength is
\begin{equation}
 \lambda_n=\min\{\lambda_{\max},c\,d/n\},\qquad 0<\lambda_{\max}<1,\quad c>0,\label{eq:benchmark-strength}
\end{equation}
with $c=1$ and $\lambda_{\max}=0.8$ as a concrete choice. This benchmark depends on dimension and labeled sample size. It does not estimate prediction quality or parameter risk; its first-order limit and finite-sample performance require separate analyses.

\subsection{Optional one-step computation}
\label{supp:one-step}

As an optional computational approximation to the estimator defined in \eqref{eq:family-estimator}, a single Gauss--Newton update from $t$ is
\begin{equation}
 \widehat\theta_{\lambda,\alpha}^{\mathrm{one}}=t-
 \{\widehat J(t)^\top W_{\lambda,\alpha}\widehat J(t)\}^{-1}
 \widehat J(t)^\top W_{\lambda,\alpha}\widehat g(t).\label{eq:one-step}
\end{equation}
It equals the unconstrained solution for affine moments of full column rank. For nonlinear moments it is a separate finite-sample implementation; first-order equivalence requires the regularity and initialization conditions of the asymptotic analysis. Unless stated otherwise, the family denotes minimization of the frozen criterion, rather than this single update.

\begin{proposition}[One-step first-order equivalence]\label{prop:one-step}
Suppose (R2)--(R4), the joint moment limit, and $\sqrt M(t-\theta_*)=O_p(1)$ hold. The one-step update in \eqref{eq:one-step} has the expansion \eqref{eq:influence-general}. If a full minimizer also satisfies Theorem~\ref{thm:general-weight}, the update and that minimizer differ by $o_p(M^{-1/2})$.
\end{proposition}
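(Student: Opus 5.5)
The plan is to reduce the claim to the first-order expansion of Theorem~\ref{thm:general-weight} by comparing the one-step map with the linearized stationarity equation. Write $W_n=W_{\lambda,\alpha}$ with $W_n\to_p W\succ0$ by (R4), and $\widehat J(t)\to_p G$ by (R2) together with consistency of $t$, since $\sqrt M(t-\theta_*)=O_p(1)$. Then $\widehat J(t)^\top W_n\widehat J(t)\to_p G^\top WG$, which is nonsingular because $A$ is nonsingular. Hence the inverse in \eqref{eq:one-step} exists with probability tending to one and converges to $(G^\top WG)^{-1}$.

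The key step is a mean-value expansion of the moment at the pilot: $\widehat g(t)=\widehat g(\theta_*)+\bar J(t-\theta_*)$. Here $\bar J$ has rows evaluated at intermediate points between $t$ and $\theta_*$, and $\bar J\to_p G$ by uniform Jacobian convergence on the neighborhood of (R2). The joint moment CLT \eqref{eq:moment-clt} gives $\sqrt M\widehat g(\theta_*)=O_p(1)$. Substituting into \eqref{eq:one-step} and writing $\widehat K_t=\{\widehat J(t)^\top W_n\widehat J(t)\}^{-1}\widehat J(t)^\top W_n$, we obtain
\[
\sqrt M(\widehat\theta^{\rm one}-\theta_*)=\{I-\widehat K_t\bar J\}\sqrt M(t-\theta_*)-\widehat K_t\sqrt M\,\widehat g(\theta_*).
\]
Since $\widehat K_t\bar J\to_p K(W)G=I_d$ and $\sqrt M(t-\theta_*)=O_p(1)$, the first term is $o_p(1)$. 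Since $\widehat K_t\to_p K(W)$, the second term equals $-K(W)\sqrt M\widehat g(\theta_*)+o_p(1)$. Together these give \eqref{eq:influence-general}.

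The second claim follows by subtraction. The full minimizer satisfies \eqref{eq:influence-general} by Theorem~\ref{thm:general-weight}, with the same $K(W)$ and the same $\widehat g(\theta_*)$ computed from the same data. The difference of the two expansions is therefore $o_p(1)$ on the $\sqrt M$ scale, that is, $o_p(M^{-1/2})$. No independence between $W_n$ and the moments is needed, because only convergence in probability of $W_n$ is used.

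The main obstacle is the mean-value step. The stacked moment is vector-valued, so a single intermediate point need not exist; row-wise intermediate points must be used, or the integral form $\int_0^1\widehat J(\theta_*+s(t-\theta_*))\,ds$. We also need $t$ to lie in the (R2) neighborhood with probability tending to one, so that uniform convergence yields $\bar J\to_p G$. I would use the integral form, which is cleanest, and restrict to the event $\{t\in\mathcal N\}$, whose probability tends to one.
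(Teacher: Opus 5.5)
Your proof is correct and follows essentially the same route as the paper. You linearize $\widehat g(t)$ around $\theta_*$ using uniform Jacobian convergence and the pilot rate, use $\widehat K_t\to_p K(W)$ with $K(W)G=I_d$ to cancel the pilot error, and subtract the two expansions for the second claim. Your integral-form Jacobian $\int_0^1\widehat J(\theta_*+s(t-\theta_*))\,ds$ on the event $\{t\in\mathcal N\}$ simply makes explicit the $o_p(M^{-1/2})$ remainder that the paper states directly, mirroring the device the paper uses in its own consistency argument.
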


\begin{proof}
Uniform Jacobian convergence and the pilot rate imply $\widehat g(t)=\widehat g(\theta_*)+G(t-\theta_*)+o_p(M^{-1/2})$. The update coefficient converges to $K(W)$, and $K(W)G=I_d$. Substitution cancels the pilot error to first order and gives the claimed expansion.
\end{proof}

For nonlinear moments, first-order equivalence allows finite-sample differences between the two procedures. Risk comparisons also require moment bounds beyond the distributional expansion. The finite-sample analysis therefore uses the specified estimator and solver throughout.

This supplement gives the details underlying Sections~\ref{sec:setup}--\ref{sec:finite-sample}. Expectations in the asymptotic analysis are conditional on the external predictor. Finite-sample regression expectations additionally condition on $X,U,f_L,f_U$. Vector norms are Euclidean and matrix norms without a subscript are operator norms. A covariance consistency assertion is distinct from a risk assertion. Classical GMM and extremum-estimation foundations are provided by \citep{hansen1982large,neweymcfadden1994}; the derivations below specialize them to the stated moment structure and frozen, possibly same-data weights.

\subsection{Identification, covariance, and the joint central limit theorem}

Identical covariate marginals imply $\mathbb E m_2(\theta)=0$ for each fixed $\theta$. Under valid differentiation under the expectation, its population derivative is zero. This does not set its observed sample derivative to zero. At $\theta_*$, let
\begin{equation}
 a_i=\begin{pmatrix}\psi_i\\-(h_i-\mathbb Eh)\end{pmatrix},\qquad
 b_j=\begin{pmatrix}0\\h_j^U-\mathbb Eh\end{pmatrix}.
 \label{eq:app-two-sample-contributions}
\end{equation}
Then $\widehat g(\theta_*)=n^{-1}\sum_i a_i+N^{-1}\sum_j b_j$. Independence adds their covariances and gives \eqref{eq:population-covariance}. The two ordinary multivariate sample-mean central limit theorems, multiplied by $\sqrt{M/n}$ and $\sqrt{M/N}$, give \eqref{eq:moment-clt}. If $M/N\to0$, the second contribution tends to zero in $L^2$. All means and cross-block signs in this calculation use the sign convention of \eqref{eq:m2}.

At a random consistent pilot $t$, let $\widehat F_n(\theta)$ be any empirical score mean or score second-product matrix used in the covariance. The imposed uniform laws imply
\begin{equation}
 \|\widehat F_n(t)-F(\theta_*)\|
 \le\sup_{\theta\in\mathcal N}\|\widehat F_n(\theta)-F(\theta)\|
       +\|F(t)-F(\theta_*)\|=o_p(1),
 \label{eq:app-random-pilot-ulln}
\end{equation}
on $t\in\mathcal N$, an event with probability tending to one. Centered covariance consistency follows by subtracting products of the convergent means. The vanishing ridge then gives $\widehat S\to_p S$. This argument uses uniform laws, not independence between $t$ and labeled scores.

\subsection{Consistency and the influence expansion}

Under (R1) and $W_n\to_p W$, $\widehat g$ is uniformly bounded in probability on $\Theta$ and
\begin{equation}
 \sup_\Theta|\widehat g^\top W_n\widehat g-g^\top Wg|=o_p(1).
 \label{eq:app-criterion-ulln}
\end{equation}
The continuous limiting criterion has unique minimizer $\theta_*$: its value is zero there and positive elsewhere. Compactness gives a positive separation outside every neighborhood. Uniform convergence and the $o_p(1)$ objective allowance in (R4) therefore imply consistency, as in the extremum argument of \citep{neweymcfadden1994}.

Write $\delta_n=\widehat\theta-\theta_*$ and
$\overline J_n=\int_0^1\widehat J(\theta_*+s\delta_n)ds$. The segment is eventually inside the regular neighborhood. The fundamental theorem of calculus gives $\widehat g(\widehat\theta)=\widehat g(\theta_*)+\overline J_n\delta_n$, and (R2) and consistency give $\overline J_n\to_p G$ and $\widehat J(\widehat\theta)\to_p G$. The first-order residual condition yields
\begin{equation}
 [\widehat J(\widehat\theta)^\top W_n\overline J_n]\delta_n
 =-\widehat J(\widehat\theta)^\top W_n\widehat g(\theta_*)+o_p(M^{-1/2}).
 \label{eq:app-exact-linearized-stationarity}
\end{equation}
The bracket converges to the nonsingular $G^\top WG$. Thus $\delta_n=O_p(M^{-1/2})$, and replacing convergent coefficients gives Theorem~\ref{thm:general-weight}. In particular, dependence of $W_n$ on $\widehat g$ creates no first-order term when $W_n-W=o_p(1)$; this does not make it negligible at finite sample sizes.

\subsection{Loewner ordering and equality conditions}

Set $D_K=K(W)-K_*$. Since $K(W)G=K_*G=I$ and $K_*S=\mathcal V_*G^\top$, $D_KSK_*^\top=0$. Expanding gives $\mathcal V(W)=\mathcal V_*+D_KSD_K^\top$. The last term equals $(D_KS^{1/2})(D_KS^{1/2})^\top$. Hence it is zero exactly when $D_K=0$; for one contrast its quadratic form is zero exactly when $a^\top D_K=0$. This proves all statements in Theorem~\ref{thm:loewner}, including why a nonzero covariance gap need not be positive definite.

For $G=(A^\top,0^\top)^\top$,
\begin{equation}
 (A^\top W_{11}A)^{-1}A^\top W_{11}=A^{-1},\qquad
 K(W)=A^{-1}(I,L),\quad L=W_{11}^{-1}W_{12}.
 \label{eq:app-block-influence}
\end{equation}
Direct covariance multiplication gives
\begin{equation}
 K(W)SK(W)^\top=A^{-1}
   \{V-RL^\top-LR^\top+LDL^\top\}A^{-\top}.
 \label{eq:app-correction-square}
\end{equation}
Completing the square at $L_*=RD^{-1}$ proves \eqref{eq:correction-gap}. Every efficient positive definite limiting weight has $W_{12}=W_{11}L_*$; its Schur complement is positive definite. Conversely these two properties generate exactly \eqref{eq:efficient-weight-class}. A scalar or diagonal correction can attain equality only when it represents $L_*$, whereas a scalar or diagonal stacked weight has $L=0$. These are algebraic restrictions, not claims about different auxiliary functions.

\subsection{Perturbations, inference, and one-step computation}

For the selected target, $\widehat S_{\lambda_n,\alpha_n}=\widehat S+\Delta_n\to_p S\succ0$. Continuity of inversion gives $W_n\to_p S^{-1}$. Applying the influence expansion twice to the same moments cancels the leading random term and proves \eqref{eq:first-order-equivalence}. For a general weight limit, the previous square-completion formula gives the equality test \eqref{eq:efficient-limit-condition}. For a fixed cross-block blend, its limiting covariance cross block is $-(1-\lambda_0)R$; block inversion gives $L=(1-\lambda_0)RD^{-1}$ and then \eqref{eq:fixed-cross-gap}.

The covariance consistency argument and $\widehat J(t)\to_p G$ yield $\widehat K\to_p K(W)$. Continuous multiplication gives the sandwich limit in \eqref{eq:feasible-sandwich}. Positive definiteness and full rank ensure nonzero limiting variance for any nonzero fixed contrast, so Slutsky's theorem yields the Wald coverage statement. No finite-sample coverage claim follows.

For the one-step result, $\sqrt M(t-\theta_*)=O_p(1)$ and uniform Jacobian convergence give $\widehat g(t)=\widehat g(\theta_*)+G(t-\theta_*)+o_p(M^{-1/2})$. The update coefficient tends to $K(W)$ and its product with $G$ tends to $I$. Substitution gives the same expansion as the full minimizer. For affine moments this cancellation is exact at finite samples because the Jacobian is constant. For nonlinear moments the equality is first-order only.

\section{Calibration and Regularity}
\label{app:calibration-proofs}

\subsection{Finite-sample rates}
\label{subsubsec:fs-scales}

The following rate statement concerns random designs; subsequent exact risk calculations condition on a realized full-rank design. Assume fixed dimension, independent identically distributed covariate samples with $N/n$ bounded below, and finite second moments of $XX^\top$ and $Xf(X)$. These are fourth-moment requirements on the relevant quadratic design scores. For the affine regression moments, empirical coefficient bounds then hold uniformly on compact parameter sets. Assume also a regular positive definite covariance limit and a consistent root-$n$ baseline fit. For the risk expansion below, impose the stated $L^2$ remainder bounds; distributional orders alone do not imply risk orders.

\begin{proposition}[Scales of the finite-sample choice]\label{prop:fs-scales}
Under these conditions:
\begin{enumerate}
\item[(a)] $A_5=O_p(n^{-1/2})$ and $m_2(\theta)=O_p(n^{-1/2})$ uniformly on compact sets. For weights with bounded Schur complements, $m_2^\top\Pi_Wm_2=O_p(n^{-1})$.
\item[(b)] Let the whitened update have amplitude at most a deterministic $\gamma_n<1$, bounded away from one. On a uniformly well-conditioned affine fit path, $\widehat\beta_\lambda-\widehat\beta_0=O_p(\gamma_n n^{-1/2})$ uniformly for $0\le\lambda\le1$.
\item[(c)] Write $e_0=\widehat\beta_0-\beta$ and $\widehat\beta_\lambda=\widehat\beta_0+\lambda v+\lambda^2w+O_{L^2}(\lambda^3)$, with integrable loss remainders. Then the linear risk coefficient is $2\mathbb E\langle e_0,v\rangle$ and the quadratic coefficient is $\mathbb E\|v\|^2+2\mathbb E\langle e_0,w\rangle$. If $\|e_0\|_{L^2}=O(n^{-1/2})$, $\|v\|_{L^2}=O(\gamma_n n^{-1/2})$ and $\|w\|_{L^2}=O(\gamma_n^2n^{-1/2})$, these coefficients are bounded by $O(\gamma_n/n)$ and $O(\gamma_n^2/n)$, respectively.
\end{enumerate}
\end{proposition}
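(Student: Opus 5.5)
The proof takes the three parts in order, working with the affine regression moments $\widehat g(\theta)=H\theta-b$ of \eqref{eq:fs-regression-H}--\eqref{eq:fs-regression-b}.

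\emph{Part (a).} The auxiliary Jacobian is $A_5=A_U-A_L=U^\top U/N-X^\top X/n$. Both terms are sample means of $XX^\top$ with the same population mean, and $XX^\top$ has finite second moments by assumption. Chebyshev's inequality, applied coordinatewise, gives $A_U-\mathbb E XX^\top=O_p(N^{-1/2})$ and $A_L-\mathbb E XX^\top=O_p(n^{-1/2})$. Since $N/n$ is bounded below, $A_5=O_p(n^{-1/2})$. The intercept $b_2=U^\top f_U/N-X^\top f_L/n$ is treated the same way using the moments of $Xf(X)$, so $b_2=O_p(n^{-1/2})$. Then
\[
\sup_{\|\theta\|\le B}\|m_2(\theta)\|\le\|A_5\|B+\|b_2\|=O_p(n^{-1/2}),
\]
uniformly on compact sets. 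The quadratic bound follows at once: $m_2^\top\Pi_Wm_2\le\|\Pi_W\|\,\|m_2\|^2$, and a bounded Schur complement gives $O_p(n^{-1})$.

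\emph{Part (b).} The plan is to use the exact path identity derived before \eqref{eq:fs-exact-strength-path}:
\[
\widehat\beta_\lambda-\widehat\beta_0=\lambda L_Y(I+\lambda E_pP_Y)^{-1}E_pz_Y .
\]
Here $\|E_p\|\le\gamma_n$ and $\|P_Y\|\le1$, so a Neumann series bounds $\|(I+\lambda E_pP_Y)^{-1}\|$ by $(1-\gamma_n)^{-1}$. This is uniformly bounded because $\gamma_n$ is bounded away from one. The uniformly well-conditioned fit path gives $\|L_Y\|=1/s_{\min}(G_Y)=O_p(1)$. It remains to show $z_Y=\mathsf S^{-1/2}(H\widehat\beta_0-b)=O_p(n^{-1/2})$.

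The covariance $\mathsf S$ converges to the positive definite limit, so $\|\mathsf S^{-1/2}\|=O_p(1)$. The vector $H\widehat\beta_0-b=\widehat g(\widehat\beta_0)$ is a stacked moment evaluated at a root-$n$ consistent fit. Its first block is $O_p(n^{-1/2})$ by the CLT together with $\widehat\beta_0-\beta=O_p(n^{-1/2})$, and its second block is $O_p(n^{-1/2})$ by part (a). Multiplying the factors gives $O_p(\gamma_nn^{-1/2})$. Every bound above is independent of $\lambda$, so the result is uniform over $[0,1]$.

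\emph{Part (c).} Expand the loss directly:
\[
\|\widehat\beta_\lambda-\beta\|^2=\|e_0\|^2+2\lambda\langle e_0,v\rangle+\lambda^2\bigl(\|v\|^2+2\langle e_0,w\rangle\bigr)+r_\lambda .
\]
The remainder $r_\lambda$ collects the cross terms involving the $O_{L^2}(\lambda^3)$ part. By assumption these have integrable $O(\lambda^3)$ expectation. Taking expectations identifies the stated linear and quadratic coefficients. The size bounds follow from Cauchy--Schwarz:
\[
|\mathbb E\langle e_0,v\rangle|\le\|e_0\|_{L^2}\|v\|_{L^2}=O(\gamma_n/n),
\qquad
\mathbb E\|v\|^2+2|\mathbb E\langle e_0,w\rangle|=O(\gamma_n^2/n).
\]

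The main obstacle is part (b). Its difficulty is the control of $\|L_Y\|$ and $\|\mathsf S^{-1/2}\|$ when these are evaluated at the same data as the moments. This is why the statement assumes a uniformly well-conditioned path and a regular covariance limit, and the proof should invoke those assumptions directly rather than attempt to derive them. Part (c) then only needs the stated $L^2$ orders, since distributional rates alone would not justify the expectations.
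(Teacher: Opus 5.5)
Your proposal is correct and follows essentially the same route as the paper. It uses root-$n$ orders for the difference-of-averages coefficients of $m_2$, made uniform by compactness, and bounds the exact displacement $\lambda L_Y(I+\lambda E_pP_Y)^{-1}E_pz_Y$ by $\|L_Y\|\gamma_n\|z_Y\|/(1-\gamma_n)$; the loss expansion with Cauchy--Schwarz then gives the risk coefficients. One small imprecision in part (c): your remainder should also absorb $2\lambda^3\langle v,w\rangle+\lambda^4\|w\|^2$, which is $O(\lambda^3)$ in expectation because $v,w\in L^2$.
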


The rate bounds locate the finite-sample effects generated by the auxiliary Jacobian and covariance update. Their signed coefficients are examined through the exact conditional-risk identities in the main paper. Section~\ref{subsec:convergent-risk} provides a convergent approximation with an explicit remainder, and Supplementary Section~\ref{app:scales-proof} gives the rate argument.

\subsection{Target and strength parameterization}
\label{subsubsec:target-strength}
The decomposition in the main paper identifies what a family member actually chooses. The first term of \eqref{eq:fs-anchor-decomposition} is the prediction-corrected channel, and its metric $W_{11}$ is determined by the selected weight; the second term is a penalty on the auxiliary moment whose metric $\Pi_W$ changes with the weight. In the corrected basis of Section~\ref{subsec:calibrated-target}, therefore, a frozen criterion is determined by its auxiliary residual metric together with the strength, and both are read off from the covariance target of Section~\ref{subsec:family}.

To make this correspondence explicit, write an $n$-scaled target as $S_*=\mathsf S+TKT^\top$, where $\mathsf S=\widehat S/c_{1,n}$ with $K=K^\top$ symmetric, so that in the family's notation the blended covariance is
\begin{equation}
 \mathsf S_{\lambda,K}=(1-\lambda)\mathsf S+\lambda S_*
 =\mathsf S+\lambda TKT^\top .
 \label{eq:fs-family-additive}
\end{equation}
Because $B_cT=(0,-I)^\top$ in the notation of \eqref{eq:fs-corrected-basis}, such a target preserves the first block row $(V_c,R_c)$ for every symmetric $K$, and by Proposition~\ref{prop:fs-calibration-class} every target with that property has this form with $\Omega+K\succ0$. Substituting $S_*$ into the corrected-basis criterion \eqref{eq:fs-residual-criterion} gives
\begin{equation}
 Q_K(\theta)=c(\theta)^\top V_c^{-1}c(\theta)+r_c(\theta)^\top(\Omega+K)^{-1}r_c(\theta),
 \qquad r_c=m_2-R_c^\top V_c^{-1}c .
 \label{eq:fs-target-metric}
\end{equation}
The displayed criterion is the full-target endpoint. At strength $\lambda$, its residual metric is $\Omega+\lambda K$, while the corrected-channel term stays fixed. Thus the pair $(\Omega+K,\lambda)$ determines the member, the target index $\alpha$ acting through the residual metric $M=\Omega+K$ in the parameterization of Proposition~\ref{prop:fs-calibration-class}. Because $B_cTKT^\top B_c^\top=\operatorname{diag}(0,K)$, membership in that class is exactly the requirement that the update be of the form $TKT^\top$ for a symmetric $K$; the calibrated target of Subsection~\ref{subsec:calibrated-target} is constructed in this form, and so is the original covariance $\mathsf S$ itself, which corresponds to $K=0$. The other targets of Supplementary Section~\ref{subsec:targets} generally fall outside this class: their updates can change the first block row $(V_c,R_c)$ of the corrected basis. Those targets therefore define different estimation problems rather than members of the calibrated class, and they are used below only as comparators. The finite-sample problem of this section is the choice of the pair $(M,\lambda)$ within the calibrated class; Subsection~\ref{subsec:calibrated-target} fixes $M$ and Supplementary Section~\ref{subsec:strength-optimality} then determines $\lambda$.

\subsection{The implemented calibrated target}
\label{subsubsec:calibrated-target}
The covariance-calibrated PPI (CC-PPI) was defined in Section~\ref{subsec:ccppi} by the target \eqref{eq:fs-target-construction} and the estimator \eqref{eq:fs-calibrated-member}. In the corrected basis of \eqref{eq:fs-corrected-basis}, $S_{0,n}^{\rm cal}=c_{1,n}(\mathsf S+TK_pT^\top)$ is the additive form of \eqref{eq:fs-family-additive} with $S_{0,n}=S_{0,n}^{\rm cal}$, and the pair of Supplementary Section~\ref{subsubsec:target-strength} is the residual metric $M=\Omega+K_p$ together with $\lambda$. The implemented CC-PPI estimator uses the default $h=1/2$ of \eqref{eq:fs-default-choice}; the analysis varies $\lambda$ over the protected path, and Supplementary Section~\ref{subsec:strength-optimality} determines what that default costs relative to the best strength on the same path. The $n$-scaled path \eqref{eq:fs-implemented-path} used for the risk calculations records the same estimator, since multiplication of a covariance by a positive scalar leaves the GMM minimizer unchanged.

Two properties follow from this definition. First, $\Delta_p$ is by definition a symmetric update of the form $TKT^\top$, so $S_{0,n}^{\rm cal}$ preserves the corrected channel by Proposition~\ref{prop:fs-calibration-class}, and its criterion is exactly \eqref{eq:fs-residual-criterion} with $K=K_p$. Second, the clipping is performed in the whitened coordinates in which the guarantee is expressed, \eqref{eq:fs-protection} below; a clip applied entrywise to $F^{1/2}(C_U-C_L)F^{1/2}$ would not bound its spectrum, whereas clipping its eigenvalues makes $\mathsf S^{-1/2}\Delta_p\mathsf S^{-1/2}$ have rank at most $d$ and spectrum inside $[-\kappa,\kappa]$, ensuring positive definiteness throughout the path. Under the covariance consistency conditions, the weight perturbation also vanishes in the regular limit.

\subsection{Protection and the common efficient path}
\label{supp:protected-path}

The protection is the one announced in \eqref{eq:fs-target-construction}: it uses the whitened coordinates of the current covariance.
\begin{align}
 E_p&=\operatorname{clip}_{[-\kappa,\kappa]}
       (\mathsf S^{-1/2}\Delta\mathsf S^{-1/2}),\quad 0<\kappa<1,\label{eq:fs-protection}\\
 \Delta_p&=\mathsf S^{1/2}E_p\mathsf S^{1/2},\qquad
 S_{0,n}^{\rm cal}=c_{1,n}(\mathsf S+\Delta_p),\label{eq:fs-target}\\
 \widehat S_\lambda&=c_{1,n}(\mathsf S+\lambda\Delta_p),\qquad 0\le\lambda\le1.
 \label{eq:fs-implemented-path}
\end{align}
Here clipping acts on symmetric eigenvalues, and is performed once before varying strength. The target subscript $0,n$ follows Section~\ref{sec:setup}; it does not denote the endpoint $\lambda=0$. We use $\kappa=1/2$ and the default $h=1/2$.

Protection preserves the calibration class. Indeed, with $F=T^\top\mathsf S^{-1}T\succ0$,
\begin{equation}
 \Delta_p=TK_pT^\top,\qquad
 K_p=F^{-1/2}\operatorname{clip}_{[-\kappa,\kappa]}
               (F^{1/2}KF^{1/2})F^{-1/2}.
 \label{eq:fs-protected-lowrank}
\end{equation}
To verify this, $\mathsf S^{-1/2}TF^{-1/2}$ has orthonormal columns, and the clipping function vanishes at zero. The nonzero spectral update stays on that subspace. The resulting criterion is \eqref{eq:fs-protected-criterion} in the main paper.

If $K_p\succeq0$, increasing strength weakens this residual penalty; if $K_p\preceq0$, it strengthens it. Indefinite corrections have no uniform ordering. These are objective statements, not MSE statements. The residual root, when it exists, need not coincide with the root of $m_2$.

Within the linear, congruence-equivariant pooling family, Proposition~\ref{prop:pooling-characterization} gives $F=(1-\xi)C_L+\xi C_U$. CC-PPI takes the unlabeled pool $F=C_U$, directing the calibration toward the additional covariance information while retaining the corrected channel.

Under the covariance consistency conditions of Section~\ref{subsec:regularity}, both $C_L$ and $C_U$ converge to $\Sigma_{hh}$. At a regular positive definite limit, \eqref{eq:fs-protection} is continuous and $\Delta_p\to_p0$. This gives \eqref{eq:fs-uniform-efficient-path} in the main paper.

Corollary~\ref{cor:common-limit} applies to half strength and to bounded data-dependent selections on this path. A nonvanishing strength therefore need not sacrifice efficiency: the target perturbation itself vanishes. This statement excludes singular limiting covariances.

\subsection{Amplitude, admissibility and local response}
\label{supp:calibration-properties}

The strength and spectral cap jointly control the relative update amplitude. Write $F=T^\top\mathsf S^{-1}T$ and $Q_T=\mathsf S^{-1/2}TF^{-1/2}$, so that $Q_T^\top Q_T=I$ because $T^\top\mathsf S^{-1}T=F$; by the block identity of \eqref{eq:fs-corrected-basis}, $F=\Omega^{-1}$. For symmetric $K$ put
\begin{equation}
 \gamma_K=\|F^{1/2}KF^{1/2}\|_{\rm op}=\|\Omega^{-1/2}K\Omega^{-1/2}\|_{\rm op},
 \label{eq:fs-amplitude}
\end{equation}
the two expressions agreeing because $Q_T$ has orthonormal columns.

\begin{proposition}[Amplitude composition]\label{prop:fs-amplitude}
(i) $\|\mathsf S^{-1/2}TKT^\top\mathsf S^{-1/2}\|_{\rm op}=\gamma_K$. (ii) The reachable set of targets at relative amplitude at most one is $\{\mathsf S+TKT^\top:\gamma_K\le1,\ \Omega+K\succ0\}$. (iii) For every $\lambda\in[0,1]$,
\begin{equation}
 (1-\lambda\gamma_K)\mathsf S\preceq\mathsf S+\lambda TKT^\top\preceq(1+\lambda\gamma_K)\mathsf S .
 \label{eq:fs-amplitude-sandwich}
\end{equation}
\end{proposition}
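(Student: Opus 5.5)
The plan is to reduce all three parts to the isometry $Q_T=\mathsf S^{-1/2}TF^{-1/2}$, which satisfies $Q_T^\top Q_T=F^{-1/2}T^\top\mathsf S^{-1}TF^{-1/2}=I_d$. The first step is to verify the identity $F=\Omega^{-1}$. Write $B_c\mathsf SB_c^\top=\begin{pmatrix}V_c&R_c\\R_c^\top&D_c\end{pmatrix}$ and use $B_cT=(0,-I)^\top$. Then
\[
T^\top\mathsf S^{-1}T=(B_cT)^\top(B_c\mathsf SB_c^\top)^{-1}(B_cT),
\]
which is the lower-right block of the inverse. By the Schur complement this block equals $\Omega^{-1}$. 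Consequently $F^{1/2}KF^{1/2}$ and $\Omega^{-1/2}K\Omega^{-1/2}$ are congruent by an orthogonal factor: both equal $F^{1/2}=\Omega^{-1/2}$ up to the choice of symmetric square root, and these coincide since the symmetric square root is unique. This justifies the two expressions for $\gamma_K$ in \eqref{eq:fs-amplitude}.

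For (i), I write
\[
\mathsf S^{-1/2}TKT^\top\mathsf S^{-1/2}=Q_T\bigl(F^{1/2}KF^{1/2}\bigr)Q_T^\top.
\]
Since $Q_T$ has orthonormal columns, the nonzero spectrum of $Q_TMQ_T^\top$ equals that of $M$. Hence the operator norms agree and equal $\gamma_K$.

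For (iii), part (i) gives $-\gamma_K I\preceq\mathsf S^{-1/2}TKT^\top\mathsf S^{-1/2}\preceq\gamma_K I$. Multiplying by $\lambda\ge0$, adding $I$, and conjugating by $\mathsf S^{1/2}$ preserves the Loewner order, which yields \eqref{eq:fs-amplitude-sandwich}.

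For (ii), I read "reachable set at relative amplitude at most one" as the set of symmetric positive definite targets $\mathsf S_*$ in the channel-preserving class whose whitened update $\mathsf S^{-1/2}(\mathsf S_*-\mathsf S)\mathsf S^{-1/2}$ has operator norm at most one. Proposition~\ref{prop:fs-calibration-class} shows that every channel-preserving positive definite target has the form $\mathsf S+TKT^\top$, with a unique symmetric $K$ satisfying $\Omega+K\succ0$. Part (i) converts the amplitude constraint into $\gamma_K\le1$. The two inclusions then follow directly. I would also note that $\gamma_K<1$ already implies $\Omega+K\succ0$ via $\Omega+K=\Omega^{1/2}(I+\Omega^{-1/2}K\Omega^{-1/2})\Omega^{1/2}$, whereas at $\gamma_K=1$ the explicit positive definiteness condition is genuinely needed, which explains its presence in the statement.

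The main obstacle is interpretive rather than technical: pinning down the definition of "reachable set" in (ii) so that it matches the statement. Once that is fixed, everything rests on the Schur identity $F=\Omega^{-1}$, and I expect the only care needed there to be the sign bookkeeping in $B_cT$.
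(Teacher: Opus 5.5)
Your proposal is correct and follows the paper's proof: the isometry $Q_T=\mathsf S^{-1/2}TF^{-1/2}$ gives (i), and (iii) is (i) rewritten in Loewner form after conjugating by $\mathsf S^{1/2}$. The differences are only elaborations. You verify $F=\Omega^{-1}$ by a Schur complement, which the paper asserts from the corrected-basis block identity; note that $F^{1/2}KF^{1/2}$ and $\Omega^{-1/2}K\Omega^{-1/2}$ are then literally equal, so no orthogonal factor is needed. You also derive (ii) from Proposition~\ref{prop:fs-calibration-class} and (i) under an intrinsic reading of ``reachable set,'' whereas the paper simply takes (ii) as the definition of that set.
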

\begin{proof}
Since $T^\top\mathsf S^{-1}T=F$, we have $\mathsf S^{-1/2}T=Q_TF^{1/2}$, so $\mathsf S^{-1/2}TKT^\top\mathsf S^{-1/2}=Q_TF^{1/2}KF^{1/2}Q_T^\top$; taking norms and using $\|Q_TAQ_T^\top\|_{\rm op}=\|A\|_{\rm op}$ for $Q_T^\top Q_T=I$ gives (i). Part (ii) is the definition of the set, and (iii) is (i) in Loewner form, since $\|E\|_{\rm op}\le a$ for symmetric $E$ gives $-a\mathsf S\preceq\mathsf S^{1/2}E\mathsf S^{1/2}\preceq a\mathsf S$.
\end{proof}

The strength and cap therefore control the same amplitude: the strength $\lambda$ and the cap $\kappa$ enter the sandwich \eqref{eq:fs-amplitude-sandwich} only through the product $\lambda\gamma_K$, so $(\lambda,\kappa)=(1/2,1/2)$ and $(\lambda,\kappa)=(1,1/4)$ carry the same amplitude bound. The cap guarantees admissibility.

\begin{proposition}[Admissible caps]\label{prop:fs-admissible-caps}
(i) For every $0<\kappa<1$, every $\lambda\in[0,1]$ and every symmetric $K$ with $\gamma_K\le\kappa$, the blend $\mathsf S+\lambda TKT^\top$ is positive definite. (ii) For the two-parameter family $K(a,b)=aK_p+b\gamma\Omega$, $|a|+|b|\le1$, with $\gamma=\gamma_{K_p}$, positive definiteness follows from $1+b\gamma>|a|\gamma$; this holds for every admissible $(a,b)$ whenever $\gamma<1/(|a|+|b|)$, and it fails at $(a,b)=(1/2,-1/2)$ when $\gamma=1$ and the negative extremal direction is attained.
\end{proposition}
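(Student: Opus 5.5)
The plan is to reduce both parts to the whitened amplitude bound of Proposition~\ref{prop:fs-amplitude} and the Schur-complement characterization of Proposition~\ref{prop:fs-calibration-class}.

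\textbf{Part (i).} Fix $0<\kappa<1$, $\lambda\in[0,1]$, and a symmetric $K$ with $\gamma_K\le\kappa$. Apply the lower Loewner bound in \eqref{eq:fs-amplitude-sandwich}:
\[
\mathsf S+\lambda TKT^\top\succeq(1-\lambda\gamma_K)\mathsf S\succeq(1-\kappa)\mathsf S.
\]
Since $\mathsf S\succ0$ (its ridge $\eta>0$) and $1-\kappa>0$, the blend is positive definite. This part is routine.

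\textbf{Part (ii).} By Proposition~\ref{prop:fs-calibration-class}, positive definiteness of $\mathsf S+TK(a,b)T^\top$ is equivalent to $\Omega+K(a,b)\succ0$.

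\begin{itemize}
\item Congruence by $\Omega^{-1/2}$ and $F=\Omega^{-1}$ reduce this to $I+aE_K+b\gamma I\succ0$, where $E_K=\Omega^{-1/2}K_p\Omega^{-1/2}$.
\item By \eqref{eq:fs-amplitude}, $\|E_K\|_{\rm op}=\gamma$. So the smallest eigenvalue is at least $1+b\gamma-|a|\gamma$, which gives the sufficient condition $1+b\gamma>|a|\gamma$.
\item If $\gamma<1/(|a|+|b|)$, then $|a|\gamma+|b|\gamma<1$, so $1+b\gamma-|a|\gamma\ge1-(|a|+|b|)\gamma>0$. The condition therefore holds for every admissible $(a,b)$ with $|a|+|b|\le1$.
\end{itemize}

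\textbf{Failure case.} I expect this to be the only delicate step, since the statement must be read as requiring the extremal eigenvalue to be attained. Take $(a,b)=(1/2,-1/2)$ and $\gamma=1$. If $E_K$ has eigenvalue $-\gamma=-1$ with eigenvector $v$, then
\[
v^\top(I+\tfrac12E_K-\tfrac12 I)v=1-\tfrac12-\tfrac12=0.
\]
Hence $\Omega+K$ is singular, positive definiteness fails, and the bound $1+b\gamma>|a|\gamma$ is sharp. The point to check is that "the negative extremal direction is attained" means that $-\gamma$ belongs to the spectrum of $E_K$; for symmetric matrices the operator norm is always attained by some eigenvalue of $\pm\gamma$, so this is exactly the hypothesis that eigenvalue has the negative sign.
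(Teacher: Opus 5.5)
Your proposal is correct and follows essentially the same route as the paper. Part (i) is the whitened amplitude bound of Proposition~\ref{prop:fs-amplitude}: you phrase it through the Loewner sandwich, the paper through the spectrum of $I+\lambda E$. Part (ii) uses the same $\Omega^{-1/2}$-congruence, reducing $\Omega+K(a,b)\succ0$ to $(1+b\gamma)I+a\Omega^{-1/2}K_p\Omega^{-1/2}\succ0$, with a zero eigenvalue at $(a,b)=(1/2,-1/2)$, $\gamma=1$ when $-\gamma$ lies in the spectrum.
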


Part (i) gives the positive-definiteness guarantee provided by the cap: it delivers the invariant $\gamma\le\kappa$ of \eqref{eq:fs-protection} and hence positive definiteness along the whole path at every strength, and at the default strength $h=1/2$ with $\kappa=1/2$ it gives the relative interval $[3/4,5/4]$ in \eqref{eq:fs-amplitude-sandwich}. Part (ii) shows the role of the strict amplitude bound: at unit amplitude the borderline combination of the calibrated and isotropic directions leaves the class. No optimality of $\kappa=1/2$ is claimed; any cap strictly below one gives the positive-definiteness guarantee in part (i).

What the class cannot see is also part of its design. The residual metric enters the criterion only through its compression to the span of the realized residual values.

\begin{proposition}[Only the realized residual span matters]\label{prop:fs-subspace-irrelevance}
Let $r_c:\Theta\to\mathbb R^d$ be arbitrary, put $V=\operatorname{span}\{r_c(\theta):\theta\in\Theta\}$, and let $\Pi$ be the orthogonal projector onto $V$. Then for every symmetric positive definite $M$ and every $\theta\in\Theta$,
\begin{equation}
 r_c(\theta)^\top M^{-1}r_c(\theta)=r_c(\theta)^\top\Pi M^{-1}\Pi\,r_c(\theta).
 \label{eq:fs-irrelevance}
\end{equation}
Consequently the criterion \eqref{eq:fs-residual-criterion} depends on the target only through the compression of $(\Omega+K)^{-1}$ to $V$, and two targets whose compressions agree define the same estimator whenever that criterion has a unique minimizer. For affine $r_c(\theta)=J\theta-j$ with full-dimensional $\Theta$ the converse holds; if the realized residual set is thin the criteria can agree while the compressions differ.
\end{proposition}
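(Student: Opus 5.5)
The first identity is a pointwise linear-algebra fact. For the forward direction, decompose each $r\in V$ as $r=\Pi r$ and use symmetry of $\Pi$. Since $r_c(\theta)\in V$, we have $\Pi r_c(\theta)=r_c(\theta)$. Hence
\[
r_c(\theta)^\top\Pi M^{-1}\Pi r_c(\theta)=(\Pi r_c(\theta))^\top M^{-1}(\Pi r_c(\theta))=r_c(\theta)^\top M^{-1}r_c(\theta),
\]
which is \eqref{eq:fs-irrelevance}.

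Next apply this with $M=\Omega+K$ in the criterion \eqref{eq:fs-residual-criterion} of Proposition~\ref{prop:fs-calibration-class}. The first term $c^\top V_c^{-1}c$ does not depend on $K$, because it is fixed by the preserved channel $(V_c,R_c)$. The residual $r_c=m_2-R_c^\top V_c^{-1}c$ is also independent of $K$. Only the second term depends on $K$, and by the identity it depends only through $\Pi(\Omega+K)^{-1}\Pi$. So two targets with equal compressions give identical criterion functions on $\Theta$, hence identical argmin sets, and the same estimator whenever the minimizer is unique.

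For the converse in the affine case, take $r_c(\theta)=J\theta-j$ with $\Theta$ full-dimensional, meaning it contains an open set. Suppose $r^\top M_1^{-1}r=r^\top M_2^{-1}r$ for all $r$ in the realized set. The realized set $\{J\theta-j:\theta\in\Theta\}$ is an affine image of an open set. It therefore contains a relatively open subset $O$ of the affine space $j'+\operatorname{range}(J)$.

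The plan is to show that the quadratic form of $N=\Pi(M_1^{-1}-M_2^{-1})\Pi$ vanishes on $V$. The polynomial $q(x)=(Jx-j)^\top N'(Jx-j)$, with $N'=M_1^{-1}-M_2^{-1}$, vanishes on an open set of $x$, so it vanishes identically. Matching the homogeneous parts gives three conditions: $J^\top N'J=0$, $J^\top N'j=0$ and $j^\top N'j=0$.

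Here $V$ is the span of the realized residuals. In the affine case $V=\operatorname{span}(\operatorname{range}J\cup\{j\})$, taking $\theta$ varying and using that a nonempty open set is spanning. The bilinear form $N'$ then vanishes on all pairs from $\operatorname{range}J\cup\{j\}$, so its restriction to $V$ is zero. Because $N'$ is symmetric, vanishing of the quadratic form on $V$ gives vanishing of the bilinear form there. Hence $\Pi N'\Pi=0$, which says the compressions agree.

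The main obstacle is this step of relating $V$ to the realized set and passing from vanishing on an open subset to vanishing of the symmetric bilinear form on $V$. I would handle it via the polynomial-identity argument above, taking care with the cross term $J^\top N' j$.

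For the final remark on thin residual sets, I would give a concrete example. Let the realized set be finite and non-spanning within its own span, say two vectors $r_1,r_2$ spanning $\mathbb R^2$. Choose $M_1^{-1}$ and $M_2^{-1}$ that agree on the diagonal values $r_i^\top M^{-1}r_i$ but differ in the off-diagonal entry. Then the criteria agree while the compressions differ.
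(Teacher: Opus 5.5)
Your proposal is correct and follows the paper's proof essentially step for step. The steps match: the projector identity $\Pi r_c(\theta)=r_c(\theta)$; the three coefficient conditions $J^\top N'J=0$, $J^\top N'j=0$, $j^\top N'j=0$ from the vanishing affine quadratic, which give $\Pi N'\Pi=0$ on $V=\operatorname{colspan}([J,j])$; and an off-diagonal counterexample on two spanning residuals. In that last example, make the matrices explicit so that positive definiteness is actually checked, as the paper does: take $r_1=e_1$, $r_2=e_2$, $M_2^{-1}=I$ and $M_1^{-1}=I+t\left(\begin{smallmatrix}0&1\\1&0\end{smallmatrix}\right)$ with $|t|<1$.
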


We next examine the local risk effect of an update in this class. The following statement is local in strength and applies to arbitrary smooth moments. In the affine class of \eqref{eq:fs-regression}, its local coefficients follow from the exact path analyzed below.

\begin{proposition}[Local risk response to a covariance update]\label{prop:fs-dependence}
Let $\theta_\lambda$ be a smooth local branch of interior minimizers of $Q_{W_\lambda}$, with $W_\lambda=(\mathsf S+\lambda\Delta)^{-1}$ and $\mathsf S\succ0$. Put $H=\tfrac12\nabla^2Q_{\mathsf S^{-1}}(\theta_0)\succ0$. Assume a third-order path expansion and domination sufficient to expand expected squared loss. Then
\begin{equation}
 v=\left.\partial_\lambda\theta_\lambda\right|_0
 =H^{-1}J(\theta_0)^\top\mathsf S^{-1}\Delta\mathsf S^{-1}\widehat g(\theta_0).
 \label{eq:fs-dependence-displacement}
\end{equation}
With $e_0=\theta_0-\theta_*$ and $w=\tfrac12\left.\partial_\lambda^2\theta_\lambda\right|_0$,
\begin{equation}
 R(\lambda)-R(0)=2\lambda\mathbb E\langle e_0,v\rangle
 +\lambda^2\{\mathbb E\|v\|^2+2\mathbb E\langle e_0,w\rangle\}
 +O(\lambda^3).
 \label{eq:fs-dependence-risk}
\end{equation}
If $\Delta$ is independent of the full baseline map, including $H$ and $J(\theta_0)$, and $\mathbb E\Delta=0$, the linear coefficient vanishes. No sign of the quadratic coefficient follows from this independence alone.
\end{proposition}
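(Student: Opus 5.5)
The plan is to differentiate the frozen stationarity equation along the path and then expand the loss. At each $\lambda$ near $0$ the branch satisfies
\[
 F(\theta_\lambda,\lambda)=\widehat J(\theta_\lambda)^\top W_\lambda\widehat g(\theta_\lambda)=0 .
\]
Since $\nabla_\theta F(\theta,0)=\tfrac12\nabla^2Q_{\mathsf S^{-1}}(\theta)$, we have $\nabla_\theta F(\theta_0,0)=H$, and $H$ is assumed positive definite. Differentiating in $\lambda$ with $\partial_\lambda W_\lambda|_0=-\mathsf S^{-1}\Delta\mathsf S^{-1}$, which follows from $\partial(A^{-1})=-A^{-1}(\partial A)A^{-1}$, the implicit function theorem gives
\[
 0=Hv-J(\theta_0)^\top\mathsf S^{-1}\Delta\mathsf S^{-1}\widehat g(\theta_0).
\]
Solving for $v$ yields \eqref{eq:fs-dependence-displacement}. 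The weight is frozen, so no derivative of $W$ with respect to $\theta$ appears. In the affine case, $H=H^\top\mathsf S^{-1}H$ and no Jacobian terms arise.

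For the risk I write $\theta_\lambda-\theta_*=e_0+\lambda v+\lambda^2w+r_\lambda$, where $r_\lambda=O(\lambda^3)$ comes from the assumed third-order path expansion. Expanding the squared norm gives
\[
 \|\theta_\lambda-\theta_*\|^2=\|e_0\|^2+2\lambda\langle e_0,v\rangle+\lambda^2\bigl(\|v\|^2+2\langle e_0,w\rangle\bigr)+\text{(cubic and higher terms)}.
\]
The cubic and higher terms include $2\lambda^3\langle v,w\rangle$, $\lambda^4\|w\|^2$ and cross terms with $r_\lambda$. The domination hypothesis is exactly what lets me take expectations termwise and bound the remainder by $O(\lambda^3)$ uniformly, through Cauchy--Schwarz on the remainder products. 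This gives \eqref{eq:fs-dependence-risk}. The main technical point is this interchange. Because it is assumed, I only need to state which integrable envelopes are needed: $e_0,v,w\in L^2$ and an $L^2$ bound of order $\lambda^3$ on $r_\lambda$.

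For the last claim, write the linear coefficient as
\[
 \mathbb E\langle e_0,v\rangle=\mathbb E\bigl[e_0^\top H^{-1}J^\top\mathsf S^{-1}\Delta\,\mathsf S^{-1}\widehat g(\theta_0)\bigr].
\]
Everything except $\Delta$ is a function of the baseline map, namely $(e_0,H,J,\mathsf S,\widehat g(\theta_0))$. The expression is linear in $\Delta$. So if $\Delta$ is independent of the baseline map, conditioning on it replaces $\Delta$ by $\mathbb E\Delta=0$, and the linear coefficient vanishes. The integrability needed for this conditioning again comes from the domination assumption. Here "full baseline map" must include $\mathsf S$ and $\widehat g(\theta_0)$ as well as $H$ and $J(\theta_0)$.

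Finally, to show that no sign follows for the quadratic coefficient, I would use a scalar example. Take $d=1$ with $\Delta=\pm\delta$ symmetric and independent of the baseline. Then $v$ is linear in $\Delta$ and $\mathbb E v^2>0$. However, $w$ contains a term quadratic in $\Delta$, of the form $H^{-1}J^\top\mathsf S^{-1}\Delta\mathsf S^{-1}\Delta\mathsf S^{-1}\widehat g$, and its correlation with $e_0$ can have either sign. Choosing the baseline so that $\mathbb E\langle e_0,w\rangle$ is large and negative, or positive, gives both signs of the quadratic coefficient. This remark only needs to be illustrated rather than proved in full.
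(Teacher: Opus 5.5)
Your derivation of $v$ (differentiating the frozen stationarity equation with $\partial_\lambda W_\lambda|_0=-\mathsf S^{-1}\Delta\mathsf S^{-1}$), the termwise expansion of $\|e_0+\lambda v+\lambda^2w+r_\lambda\|^2$ under domination, and the conditioning argument for the linear coefficient follow the paper's proof. The gap is in the last clause.

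``No sign of the quadratic coefficient follows'' is a non-implication. It needs an actual instance with independent, centered $\Delta$ and a \emph{negative} coefficient $\mathbb E\|v\|^2+2\mathbb E\langle e_0,w\rangle$; a positive one is trivial. You do not build one. Both terms are of order $\delta^2$, so scaling $\delta$ cannot decide the sign. Whether $2\mathbb E\langle e_0,w\rangle$ outweighs $\mathbb E\|v\|^2$ is exactly what must be checked.

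Your description of $w$ is also incomplete. Differentiating the stationarity equation twice in the affine case, with $W=\mathsf S^{-1}$, gives
\[
 w=-H^{-1}J^\top W\Delta W\Delta W\widehat g(\theta_0)+H^{-1}J^\top W\Delta WJv
 =-H^{-1}J^\top W\Delta\,(W-WJH^{-1}J^\top W)\,\Delta W\widehat g(\theta_0).
\]
The term you display enters with a minus sign. A second term of the same order partially cancels it, leaving a residual projection. Taken literally, your expression evaluates to $2k^2b_2$ in the example below, whereas the true value is $w=-k^2b_2$. Computing with it would therefore give a positive coefficient exactly where a negative one is needed. Also, with $d=1$ the update is $2\times2$, so ``$\Delta=\pm\delta$'' must mean $\Delta=\pm\delta\Delta_0$ for a fixed symmetric $\Delta_0$.

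The paper closes the clause with an explicit example:
\begin{itemize}
\item $\mathsf S=I_2$ and affine moment matrix $J=(1,0)^\top$, so $H=1$;
\item true parameter $0$;
\item centered $b$ with unit variances and $\operatorname{Cov}(b_1,b_2)=0.9$;
\item independent $\Delta=kTT^\top$ with $k=\pm0.1$ equiprobably.
\end{itemize}
The exact fit is $b_1+\lambda kb_2/(1+\lambda k)$, so $e_0=b_1$, $v=kb_2$ and $w=-k^2b_2$. The quadratic coefficient is therefore $k^2\{\operatorname{Var}(b_2)-2\operatorname{Cov}(b_1,b_2)\}=-0.008<0$. The paper's reported risk drop from $1$ to about $0.9921$ at $\lambda=1$ is consistent with this.

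In the same family, taking $\operatorname{Cov}(b_1,b_2)<\tfrac12\operatorname{Var}(b_2)$ makes the coefficient positive. That gives the ``either sign'' conclusion you were aiming for, with the needed quantitative condition made explicit.
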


The first-order risk alignment is linear in the update, whereas the acceleration term is quadratic. Both contribute to the local response in \eqref{eq:fs-dependence-risk}. The exact signed identity in the main paper retains the complete dependence induced by the data-based calibration; Supplementary Section~\ref{app:dependence} also gives an independent centered-update example illustrating the acceleration contribution.

\subsection{The calibration as a quadratic score-covariance map}
\label{supp:quadratic-target}

The target has a concrete moment interpretation in regression. For each sample, put $B_i=x_ix_i^\top-A_D$ and $a_i=x_if_i-\overline{xf}_D$, where $D=L,U$. Then
\begin{equation}
 C_D(t)=\operatorname{ave}_D\{(B_it-a_i)(B_it-a_i)^\top\}.
 \label{eq:fs-target-quadratic}
\end{equation}
Thus $C_U-C_L$ depends on fourth-order design moments, prediction-weighted cross moments and prediction-score second moments, through a quadratic function of $t$. Matching covariate means and covariances alone, or knowing prediction MSE alone, does not establish the favorable sign in Theorem~\ref{thm:fs-signed-transfer}. The sign is in fact design-dependent rather than automatic: the same three-point calculation yields both the favorable regime $p+u<1$ and the adverse regime $p+u>1$ of Theorem~\ref{thm:fs-primitive-improvement}. The theorem establishes a favorable region with a quantified tolerance. Broader design conditions beyond the three-point model and its local perturbations remain to be derived.

\subsection{Quadratic decomposition and channel preservation}

Expanding the first term of \eqref{eq:fs-anchor-decomposition} gives cross term $2m_1^\top W_{12}m_2$ and lower term $m_2^\top W_{21}\allowbreak W_{11}^{-1}W_{12}m_2$. Adding the Schur complement restores $m_2^\top W_{22}m_2$. Positive definiteness of $W$ implies $W_{11}\succ0$ and $\Pi_W\succ0$. For $m_2=A_5\theta-b_2$, its quadratic coefficient is $A_5^\top\Pi_WA_5$. The Hessian of that term is twice this coefficient. No inverse of $A_5$ is used unless an anchor root is explicitly defined.

An invertible simultaneous transformation of moments and covariance leaves the criterion invariant. In the corrected basis, channel preservation makes the symmetric difference $\operatorname{diag}(0,K)$. Since $B_cT=(0,-I)^\top$, transformation back proves uniqueness of $TKT^\top$. The factorization
\begin{equation}
 \begin{pmatrix}V_c&R_c\\R_c^\top&D_c+K\end{pmatrix}
 =\begin{pmatrix}I&0\\R_c^\top V_c^{-1}&I\end{pmatrix}
   \operatorname{diag}(V_c,\Omega+K)
   \begin{pmatrix}I&V_c^{-1}R_c\\0&I\end{pmatrix}
 \label{eq:app-corrected-factorization}
\end{equation}
proves both the definiteness test and \eqref{eq:fs-residual-criterion}. Along the protected path, only $(\Omega+\lambda K_p)^{-1}$ changes. Its derivative is $-(\Omega+\lambda K_p)^{-1}K_p(\Omega+\lambda K_p)^{-1}$, establishing the stated ordering for positive or negative corrections. This ordering alone does not order the minimizers' MSEs.

\subsection{Pooling characterization and its boundaries}

\begin{proposition}[Pooling characterization]\label{prop:pooling-characterization}
If a raw pool $\mathcal F(A,D)$ is linear on pairs of symmetric matrices, equivariant under all invertible congruences, positive definite for positive definite inputs, and satisfies $\mathcal F(A,A)=A$, then
\begin{equation}
 \mathcal F(A,D)=(1-\lambda)A+\lambda D,\qquad\lambda\in[0,1].
 \label{eq:fs-pooling-characterization}
\end{equation}
\end{proposition}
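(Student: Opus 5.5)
The plan is to use linearity to split the pool into two one-argument maps, show that congruence equivariance forces each of them to be a scalar multiple of the identity, and then read off the two scalars from the normalization $\mathcal F(A,A)=A$ and the positivity requirement.

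\textbf{Step 1 (splitting).} Because $\mathcal F$ is linear on pairs, I will write $\mathcal F(A,D)=\Phi_1(A)+\Phi_2(D)$ with $\Phi_1(A)=\mathcal F(A,0)$ and $\Phi_2(D)=\mathcal F(0,D)$. Each $\Phi_k$ is a linear map on symmetric $d\times d$ matrices. Equivariance of $\mathcal F$, applied to pairs whose second or first entry is zero, gives $\Phi_k(PXP^\top)=P\Phi_k(X)P^\top$ for every invertible $P$.

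\textbf{Step 2 (equivariant linear maps are scalar).} This is the step where the work lies, and I would argue it with specific choices of $P$. Let $E_{11}=e_1e_1^\top$ and $Y=\Phi_k(E_{11})$.
\begin{itemize}
\item Take $P=\operatorname{diag}(1,\pm1,\ldots,\pm1)$. These sign matrices fix $E_{11}$, so $PYP^\top=Y$ for every sign pattern. This kills all off-diagonal entries of $Y$, so $Y$ is diagonal.
\item Take $P=\operatorname{diag}(1,s,\ldots,s)$ with $s\neq0$ arbitrary. Again $P$ fixes $E_{11}$, so $Y_{jj}=s^2Y_{jj}$ for $j\ge2$. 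Choosing $s^2\neq1$ gives $Y_{jj}=0$ for $j\ge2$.
\end{itemize}
Hence $Y=c_kE_{11}$ for some scalar $c_k$.

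Next, any nonzero $v$ can be written as $v=Pe_1$ with $P$ invertible, so $vv^\top=PE_{11}P^\top$. Equivariance then gives $\Phi_k(vv^\top)=c_kvv^\top$, with the same $c_k$ for every $v$. Rank-one matrices $vv^\top$ span the symmetric matrices, by the spectral decomposition. Linearity therefore gives $\Phi_k=c_k\,\mathrm{id}$. In the case $d=1$ the sign and scaling steps are vacuous, but a linear map on scalars is already multiplication by a constant, so the conclusion still holds.

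\textbf{Step 3 (identifying the weights).} Write $a=c_1$ and $b=c_2$. Then $\mathcal F(A,A)=(a+b)A=A$ for $A=I$ forces $a+b=1$.

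For positivity, take the inputs $(I,\varepsilon I)$ with $\varepsilon>0$. Then $\mathcal F(I,\varepsilon I)=(a+b\varepsilon)I$ must be positive definite, so $a+b\varepsilon>0$ for all $\varepsilon>0$. Letting $\varepsilon\downarrow0$ gives $a\ge0$. Symmetrically, the inputs $(\varepsilon I,I)$ give $b\ge0$.

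Setting $\lambda=b$, we get $\lambda\in[0,1]$ and $\mathcal F(A,D)=(1-\lambda)A+\lambda D$, which is \eqref{eq:fs-pooling-characterization}. The endpoint $\lambda=0$, and likewise $\lambda=1$, is permitted, since the positivity argument only yields the non-strict inequalities.
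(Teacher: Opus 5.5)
Your proof is correct and follows essentially the same route as the paper: split $\mathcal F$ by linearity into two congruence-equivariant maps, use congruences fixing $E_{11}$ to force each map to send $E_{11}$ to a scalar multiple of itself, transport this to every $xx^\top$ and extend by spectral decomposition, then get $a_1+a_2=1$ from $\mathcal F(A,A)=A$ and nonnegativity from positivity. The differences are cosmetic. The paper uses only the scalings $\operatorname{diag}(1,tI_{d-1})$, so your sign flips are redundant but harmless. To exclude a negative weight, the paper chooses $D=-(a_1/a_2)A$ so that $\mathcal F(A,D)=0$ exactly, whereas you let $\varepsilon\downarrow0$ in $\mathcal F(I,\varepsilon I)$.
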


\begin{proof}
Write a linear pool as $\mathcal F(A,D)=\mathcal L_1(A)+\mathcal L_2(D)$. Each component is congruence-equivariant by setting the other input to zero in the equivariance identity. For $d\ge2$, let $E_{11}=e_1e_1^\top$ and $Q_t=\operatorname{diag}(1,tI_{d-1})$. Because $Q_tE_{11}Q_t^\top=E_{11}$ for every $t>0$, invariance forces the off-diagonal and lower blocks of $\mathcal L_j(E_{11})$ to vanish. Thus $\mathcal L_j(E_{11})=a_jE_{11}$. For $d=1$, every linear component already has the scalar form $\mathcal L_j(A)=a_jA$. For $d\ge2$ and any nonzero $x$, choose an invertible $Q$ with $Qe_1=x$. Equivariance gives $\mathcal L_j(xx^\top)=a_jxx^\top$. Rank-one spectral decompositions and linearity yield $\mathcal L_j(A)=a_jA$ for every symmetric $A$; the coefficients do not depend on the argument values. The common-input condition gives $a_1+a_2=1$. It remains to exclude a negative coefficient, as follows. Suppose instead that $a_1<0$, so $a_2=1-a_1>1>0$; fix any $A\succ0$ and set $D=-(a_1/a_2)A$, which is positive definite because $-a_1/a_2>0$. Then $\mathcal F(A,D)=a_1A+a_2D=0$, contradicting the positivity axiom. Hence $a_1\ge0$, and exchanging the roles gives $a_2\ge0$. With $a_1+a_2=1$ both coefficients lie in $[0,1]$, so $\lambda:=a_2$ satisfies $\mathcal F(A,D)=(1-\lambda)A+\lambda D$. The converse follows directly.
\end{proof}

These assumptions constrain the construction, not all uses of the available information. The characterization is sharp: the harmonic pool $2(A^{-1}+D^{-1})^{-1}$ is positive, congruence-equivariant and common-input consistent, but nonlinear. For $A=\operatorname{diag}(1,4)$ and $D=\operatorname{diag}(4,1)$ it equals $\operatorname{diag}(1.6,1.6)$, requiring different scalar mixture coefficients in the two coordinates. It nevertheless remains in the same residual mechanism if it is inserted through Proposition~\ref{prop:fs-calibration-class} and the resulting covariance is positive definite.

If targets share one protected update, $S_j=\mathsf S+\lambda_j\Delta_p$, any convex average is $\mathsf S+(\sum_jw_j\lambda_j)\Delta_p$. Separately protecting scaled raw targets need not have this property: eigenvalue clipping is nonlinear. Structural equivalence, equality of covariance paths, and equality of estimates are therefore distinct.

\subsection{Protection, convergence, and its exact support}

Let $F=T^\top\mathsf S^{-1}T$ and $Q_T=\mathsf S^{-1/2}TF^{-1/2}$. Then $Q_T^\top Q_T=I$ and
\begin{equation}
 \mathsf S^{-1/2}TKT^\top\mathsf S^{-1/2}
       =Q_T(F^{1/2}KF^{1/2})Q_T^\top.
 \label{eq:app-protection-subspace}
\end{equation}
The orthogonal complement has eigenvalue zero. Applying a spectral function that vanishes at zero and transforming back proves \eqref{eq:fs-protected-lowrank}. In particular $\operatorname{rank}(\Delta_p)\le d$ and channel preservation survives active clipping. Since $-\kappa I\preceq E_p\preceq\kappa I$,
\begin{equation}
 (1-\lambda\kappa)\mathsf S\preceq\mathsf S+\lambda\Delta_p
                 \preceq(1+\lambda\kappa)\mathsf S.
 \label{eq:app-uniform-positive-path}
\end{equation}
At $\lambda=1/2$, $\kappa=1/2$, the relative covariance interval is $[3/4,5/4]$ and the relative inverse interval is $[4/5,4/3]$.

Under the pilot uniform laws of Supplementary Section~\ref{app:asymptotic-proofs}, $C_L-C_U\to_p0$. The regular limit $\mathsf S\to_p S/c_1\succ0$ makes square roots, inverse square roots and clipping continuous there. Consequently $\Delta_p\to_p0$. Since $0\le\lambda\le1$ and $c_{1,n}\le1$, the supremum in \eqref{eq:fs-uniform-efficient-path} is at most $\|\Delta_p\|$. The common efficiency corollary then applies to any measurable selected strength, without requiring convergence of that strength.

\subsection{Growth, weak derivatives, and Gaussian integrability}
\label{app:fit-regularity}

Here the design, prediction values, sample sizes and positive ridge are fixed. Put $S_\lambda=\mathsf S+\lambda\Delta_p$ and $b=b(Y)$. Score vectors are affine in $Y$ through $t=CY$, so for fixed constants $c_i$,
\begin{equation}
 \|\mathsf S(Y)\|\le c_0(1+\|Y\|^2),\quad
 \|b(Y)\|\le c_1(1+\|Y\|),\quad
 \lambda_{\min}(S_\lambda)\ge(1-\kappa)\eta.
 \label{eq:app-score-growth}
\end{equation}
For $G_\lambda=S_\lambda^{-1/2}H$, the smallest singular value is at least $s_{\min}(H)/\sqrt{\|S_\lambda\|}$. The least-squares pseudoinverse formula therefore gives
\begin{equation}
 \|\widehat\beta_\lambda\|
 \le\frac{\sqrt{(1+\kappa)/(1-\kappa)}}{s_{\min}(H)}
      \sqrt{\|\mathsf S\|/\eta}\,\|b\|
 \le c_2(1+\|Y\|^2),
 \label{eq:app-fit-growth}
\end{equation}
uniformly in strength. Finite fourth moments of the errors suffice for square-integrable fits and hence Theorem~\ref{thm:fs-signed-transfer}. They alone do not establish the derivative moment conditions used in the Gaussian results.

For completeness, symmetric clipping is the Frobenius projection onto the closed convex set $\{A=A^\top:-\kappa I\preceq A\preceq\kappa I\}$. Diagonalize the input: the projection inequality follows coordinatewise from scalar clipping and the bounds on the diagonal entries of every feasible comparison matrix. Applying this inequality to two projected inputs shows that the projection is nonexpansive. Thus clipping is Lipschitz, including at eigenvalue coincidences and thresholds.

For $S\succeq\eta I$, the differential $Z=D(S^{1/2})[D S]$ solves $S^{1/2}Z+ZS^{1/2}=D S$. In an eigenbasis,
\begin{equation}
 Z_{ij}=(D S)_{ij}/(\sqrt{s_i}+\sqrt{s_j}),\qquad
 \|Z\|_F\le\|D S\|_F/(2\sqrt\eta).
 \label{eq:app-root-differential}
\end{equation}
Inverse-root derivatives follow by differentiating the inverse. Combining these bounds with polynomial score derivatives proves local Lipschitzness and polynomial-growth weak derivative bounds for $E_p,\Delta_p,S_\lambda^{-1}$. Furthermore
$\lambda_{\min}(H^\top S_\lambda^{-1}H)\ge s_{\min}(H)^2/\|S_\lambda\|$, so its inverse and derivatives have polynomial bounds too. The actual fit and $v_\lambda=\partial_\lambda\widehat\beta_\lambda$ consequently have first weak response derivatives with polynomial-growth bounds, uniformly in $\lambda\in[0,1]$. Gaussian responses have all polynomial moments. This proves the required integrability and interchanges below for the implemented fixed-design pipeline. Constants can depend badly on the design and ridge; no useful uniform-in-$n$ rate is asserted.

\subsection{The raw covariance correction as a quadratic polynomial}

Centering $x_ix_i^\top t-x_if_i$ gives $B_it-a_i$ as defined in \eqref{eq:fs-target-quadratic}. For a unit vector $w$, put
\begin{align}
 M(w)&=\operatorname{ave}_U(B_i^\top ww^\top B_i)-\operatorname{ave}_L(B_i^\top ww^\top B_i),\\
 l(w)&=\operatorname{ave}_U\{B_i^\top w(w^\top a_i)\}-\operatorname{ave}_L\{B_i^\top w(w^\top a_i)\},\\
 k(w)&=\operatorname{ave}_U(w^\top a_i)^2-\operatorname{ave}_L(w^\top a_i)^2.
 \label{eq:app-raw-moment-coefficients}
\end{align}
Expanding gives $w^\top K(t)w=t^\top M(w)t-2l(w)^\top t+k(w)$. If $e_2=\sup_w\|M(w)\|$, $e_1=\sup_w\|l(w)\|$ and $e_0=\sup_w|k(w)|$, over unit $w$, then
\begin{equation}
 \|K(t)\|\le e_2\|t\|^2+2e_1\|t\|+e_0,\qquad
 \|D_tK(t)[v]\|\le(2e_2\|t\|+2e_1)\|v\|.
 \label{eq:app-raw-moment-bound}
\end{equation}
This supplies a computable moment-based magnitude bound. It does not prove a favorable risk direction, and differentiating the protected correction additionally requires the whitening and unwhitening derivatives.

\subsection{Proof of the finite-sample rate bounds}
\label{app:scales-proof}

Proposition~\ref{prop:fs-scales} is a random-design statement. In affine regression, the coefficients of $m_2(\theta)$ are the differences of sample averages of $XX^\top$ and $Xf(X)$. Their population means cancel. Finite second moments of these coefficients and $N/n$ bounded below give the root-$n$ coefficient orders. Compact parameter sets make the affine coefficient bounds uniform:
\begin{equation}
 m_2(\theta)=(A_U-A_L)\theta-
 \{\overline{Xf}_U-\overline{Xf}_L\}=O_p(n^{-1/2}).
 \label{eq:app-m2-centering}
\end{equation}
The same difference $A_U-A_L$ is its Jacobian. A bounded $\Pi_W$ proves the auxiliary penalty order. These conclusions do not follow merely from pointwise second moments for arbitrary nonlinear moment classes.

For part (b), whiten the affine fit as in Supplementary Section~\ref{app:path-proofs}. Stable covariance and design singular values give $\|L_Y\|=O_p(1)$, and the root-$n$ baseline residual gives $\|z_Y\|=O_p(n^{-1/2})$. The exact displacement is $\lambda L_Y(I+\lambda E_pP_Y)^{-1}E_pz_Y$. Its norm is at most $\|L_Y\|\gamma_n\|z_Y\|/(1-\gamma_n)$, uniformly on the strength interval.

For part (c), the exact affine coefficients are $v=L_YE_pz_Y$ and $w=-L_YE_pP_YE_pz_Y$. Expanding squared loss under the imposed integrable remainder gives
\begin{equation}
 R(\lambda)-R(0)=2\lambda\mathbb E\langle e_0,v\rangle
 +\lambda^2\{\mathbb E\|v\|^2+2\mathbb E\langle e_0,w\rangle\}
 +O(\lambda^3).
 \label{eq:app-risk-quadratic-coefficient}
\end{equation}
Cauchy--Schwarz and the stated $L^2$ bounds give $O(\gamma_n/n)$ and $O(\gamma_n^2/n)$. The second displacement contains two update factors. Neither these upper bounds nor signed cancellations identify a universal gain rate.

\subsection{Admissible caps}
\label{app:admissible-caps}

We prove Proposition~\ref{prop:fs-admissible-caps}. Recall $F=T^\top\mathsf S^{-1}T=\Omega^{-1}$, so $F^{1/2}=\Omega^{-1/2}$ and $\gamma_K=\|F^{1/2}KF^{1/2}\|_{\rm op}=\|\Omega^{-1/2}K\Omega^{-1/2}\|_{\rm op}$ by Proposition~\ref{prop:fs-amplitude}.

\emph{(i)} The whitened blend is $\mathsf S^{-1/2}(\mathsf S+\lambda TKT^\top)\mathsf S^{-1/2}=I+\lambda E$ with $E=\mathsf S^{-1/2}TKT^\top\mathsf S^{-1/2}$, and $\|E\|_{\rm op}=\gamma_K\le\kappa$ by Proposition~\ref{prop:fs-amplitude}(i). Hence $\|\lambda E\|_{\rm op}\le\kappa<1$ and the eigenvalues of $I+\lambda E$ lie in $[1-\kappa,1+\kappa]\subset(0,\infty)$, so the blend is positive definite.

\emph{(ii)} For $K(a,b)=aK_p+b\gamma\Omega$ the whitened update is $Q_T(aF^{1/2}K_pF^{1/2}+b\gamma F^{1/2}\Omega F^{1/2})Q_T^\top$. Since $F=\Omega^{-1}$ we have $F^{1/2}\Omega F^{1/2}=I$, so the two pieces have operator norms $|a|\gamma$ and $|b|\gamma$, and the update has norm at most $(|a|+|b|)\gamma$. Positive definiteness of $\Omega+K(a,b)=(1+b\gamma)\Omega+aK_p$ is equivalent to that of its $\Omega^{-1/2}$-conjugate $(1+b\gamma)I+a\Omega^{-1/2}K_p\Omega^{-1/2}$, which holds whenever $1+b\gamma>|a|\gamma$; this is implied by $\gamma<1/(|a|+|b|)$, because then $(|a|+|b|)\gamma<1$ and $|a|\gamma<1-|b|\gamma\le1+b\gamma$. At $(a,b)=(1/2,-1/2)$ and $\gamma=1$ the condition reads $1/2>1/2$, and when the negative extremal direction is attained the conjugate has a zero eigenvalue, so the borderline combination leaves the class.

\subsection{Realized-residual-span irrelevance}
\label{app:residual-span}

We prove Proposition~\ref{prop:fs-subspace-irrelevance}. Since $r_c(\theta)\in V$ for every $\theta$, we have $\Pi r_c(\theta)=r_c(\theta)$ and, using $\Pi=\Pi^\top=\Pi^2$,
\[
 r_c(\theta)^\top M^{-1}r_c(\theta)=(\Pi r_c(\theta))^\top M^{-1}(\Pi r_c(\theta))=r_c(\theta)^\top\Pi M^{-1}\Pi\,r_c(\theta),
\]
which is \eqref{eq:fs-irrelevance}; the criterion \eqref{eq:fs-residual-criterion} uses $M=\Omega+K$ and therefore depends on the target only through that compression.

For the converse, let $\Delta M^{-1}=M_1^{-1}-M_2^{-1}$ and suppose $r_c(\theta)^\top\Delta M^{-1}r_c(\theta)=0$ for all $\theta\in\Theta$. If $r_c(\theta)=J\theta-j$ with $\Theta$ full-dimensional, the left side is $\theta^\top J^\top\Delta M^{-1}J\theta-2\theta^\top J^\top\Delta M^{-1}j+j^\top\Delta M^{-1}j$, so its vanishing forces $J^\top\Delta M^{-1}J=0$, $J^\top\Delta M^{-1}j=0$ and $j^\top\Delta M^{-1}j=0$, which is exactly $\Pi\Delta M^{-1}\Pi=0$ on $V=\operatorname{colspan}([J,j])$.

The range hypothesis cannot be dropped. Take $\Theta$ with $r_c(\Theta)=\{(1,0),(0,1)\}$, $M_2^{-1}=I$ and $M_1^{-1}=I+t[[0,1],[1,0]]$ with $|t|<1$, so that both $M_i$ are positive definite. Then $\Delta M^{-1}=t[[0,1],[1,0]]$ vanishes at both realized residuals, while $\Pi\Delta M^{-1}\Pi=\Delta M^{-1}\ne0$ on their span. Two different metrics therefore give the same criteria on a thin realized set.

\subsection{Local risk response and the acceleration term}
\label{app:dependence}

Differentiate $J(\theta_\lambda)^\top W_\lambda\widehat g(\theta_\lambda)=0$ along a smooth minimizer branch. Since $\partial_\lambda W_\lambda=-W_\lambda\Delta W_\lambda$, the derivative at zero is $Hv-J(\theta_0)^\top\mathsf S^{-1}\Delta\mathsf S^{-1}\widehat g(\theta_0)=0$. This proves \eqref{eq:fs-dependence-displacement}. Expanding $\|e_0+\lambda v+\lambda^2w\|^2$ and using domination proves \eqref{eq:fs-dependence-risk}. The Hessian includes the second derivatives of the moments; in the affine case it reduces to $J^\top\mathsf S^{-1}J$.

If the update is independent of the full baseline map and centered, $v$ is linear in that independent update and $\mathbb E\langle e_0,v\rangle=0$. The acceleration is quadratic in the update and need not have zero error alignment. For example, take $\mathsf S=I_2$, $H=(1,0)^\top$, true parameter zero, centered $b=(b_1,b_2)^\top$ with covariance $[[1,.9],[.9,1]]$, and an independent $\Delta=kTT^\top$, $k=\pm.1$ equiprobably. The fit is $b_1+\lambda k b_2/(1+\lambda k)$. Its expected risk is
$\mathbb E_k[1+1.8c_k+c_k^2]$, $c_k=\lambda k/(1+\lambda k)$, and changes from $1$ to approximately $.9921233$ at strength one. The update is centered, independent, and within relative cap $.5$. Thus independence alone does not require deterioration. This is an affine weighting example, not a universal gain claim for the sample-derived calibration.

\section{Risk Transfer and Same-Data Selection}
\label{app:risk-proofs}

\subsection{Conditional projection refinements}
\label{supp:conditional-projections}

Under $\epsilon\sim N(0,\sigma^2 I)$, $r=(I-XC)Y$ is independent of $t$. Conditional projections given $r$ yield the same theorem with all projection coefficients computed conditionally and then averaged over $r$. Write $Q_g^{\rm cond}=\mathbb E\|g-a_g(r)-A_g(r)z\|^2$ and $Q_u^{\rm cond}=\mathbb E\|u-a_u(r)-A_u(r)z\|^2$ for these conditional residual energies, distinct from the unconditional $Q_g,Q_u$ in the main paper. For square-integrable first weak derivatives,
\begin{equation}
 Q_g^{\rm cond}\le\tfrac12\{\mathbb E\|D_zg\|_F^2-\mathbb E_r\|A_g(r)\|_F^2\},\quad
 Q_u^{\rm cond}\le\tfrac12\{\mathbb E\|D_zu\|_F^2-\mathbb E_r\|A_u(r)\|_F^2\}.
 \label{eq:fs-weak-gradient-tail}
\end{equation}
To see this, expand in normalized Gaussian Hermite polynomials conditional on $r$. After removing degrees zero and one, all remaining degrees are at least two; Parseval and the derivative identity charge their energy at least twice in the Gaussian Dirichlet form. Averaging gives \eqref{eq:fs-weak-gradient-tail}. The positive ridge and Lipschitz spectral clipping allow weak first derivatives with polynomial growth, although classical second derivatives may fail at clipping thresholds. These derivatives include every pilot, covariance and protection operation. The weak-derivative Stein framework in \citep{belleczhang2021} is useful here.

\subsection{Proof of the midpoint minimax-regret proposition}
\label{supp:midpoint-proof}

\begin{proof}
The map $\zeta\mapsto(\lambda-\zeta)^2$ is convex, so its supremum over an interval is attained at an endpoint; this gives \eqref{eq:fs-minimax-worst-case}. Both branches are convex and nonnegative, so their maximum is convex; it is invariant under the reflection $\lambda\mapsto l+u-\lambda$ that exchanges the branches, strictly decreasing below $(l+u)/2$ and strictly increasing above it, so the minimum is unique at $(l+u)/2$ and equals $B((u-l)/2)^2$. For the prior statement, expand $(\lambda-\zeta)^2=(\lambda-m)^2+2(\lambda-m)(m-\zeta)+(\zeta-m)^2$; the cross term has zero $\pi$-expectation and the last has expectation $A_\pi$.
\end{proof}

\subsection{Feasible strength selection and its risk budget}
\label{supp:strength-selection}

In the Gaussian regression setting with known $\sigma^2$, define the paired parameter-risk criterion
\begin{equation}
 \widehat q_\lambda=\|\widehat\beta_\lambda-t\|^2-\|\widehat\beta_h-t\|^2
 +2\sigma^2\langle C,D_Y(\widehat\beta_\lambda-\widehat\beta_h)\rangle_F.
 \label{eq:fs-paired-sure}
\end{equation}
For every fixed $\lambda$, its expectation is $R(\lambda)-R(h)$. Let $d_\lambda=\|\widehat\beta_\lambda-\beta\|^2-\|\widehat\beta_h-\beta\|^2$, $e_\lambda=\widehat q_\lambda-d_\lambda$, and $v_\lambda=\partial_\lambda\widehat\beta_\lambda$. Under the indicated differentiation and moment conditions,
\begin{align}
 Z_\lambda=\partial_\lambda e_\lambda
 &=-2\{\epsilon^\top C^\top v_\lambda
              -\sigma^2\operatorname{tr}[(D_Yv_\lambda)C^\top]\},\label{eq:fs-selection-noise}\\
 \mathcal B=\int_0^1\mathbb EZ_\lambda^2\,d\lambda
 &=4\int_0^1\left\{\sigma^2\mathbb E(v_\lambda^\top CC^\top v_\lambda)
 +\sigma^4\mathbb E\operatorname{tr}([(D_Yv_\lambda)C^\top]^2)\right\}d\lambda.
 \label{eq:fs-selection-budget}
\end{align}
The second-order Stein identity \citep{belleczhang2021} gives the last equality. Its matrix-square trace need not be nonnegative pointwise; replacing it by a squared Frobenius norm yields a valid, potentially loose upper bound.

\begin{proposition}[Default-relative selection protection]\label{prop:fs-selection-protection}
Let a finite grid $\mathcal G\subset[0,1]$ contain $h$, with the grid and $\tau>0$ deterministic conditional on the designs, and assume $\mathcal B<\infty$. The selector
\begin{equation}
 \widehat\lambda_\tau\in\arg\min_{\lambda\in\mathcal G}
       \{\widehat q_\lambda+\tau|\lambda-h|\}
 \label{eq:fs-penalized-selector}
\end{equation}
satisfies
\begin{align}
 R_{\rm sel}&\le R(h)+\mathcal B/(4\tau),\label{eq:fs-default-protection}\\
 R_{\rm sel}&\le\min_{\lambda\in\mathcal G}
              \{R(\lambda)+\tau|\lambda-h|\}+\mathcal B/(4\tau).
 \label{eq:fs-selector-oracle}
\end{align}
\end{proposition}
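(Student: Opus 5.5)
The plan is to compare the realized loss of the selected fit with the penalized selection criterion. The error between them is centered at $\lambda=h$, so its size can be controlled by integrating $Z_\lambda$ along the strength path.

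\textbf{Pathwise error bound.} Because $\widehat q_h=0=d_h$, we have $e_h=0$. Hence, for every $\lambda\in\mathcal G$,
\[
e_\lambda=\int_h^\lambda Z_s\,ds ,
\]
using the differentiability conditions behind \eqref{eq:fs-selection-noise}. Cauchy--Schwarz then gives
\[
|e_\lambda|\le |\lambda-h|^{1/2}\Bigl(\int_0^1 Z_s^2\,ds\Bigr)^{1/2}\le |\lambda-h|\,\mathcal Z^{1/2}\quad\text{if }|\lambda-h|\le1,
\]
where $\mathcal Z=\int_0^1Z_s^2\,ds$ and $\mathbb E\mathcal Z=\mathcal B$. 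The exponent of $|\lambda-h|$ will not match the linear penalty. I will therefore use the cruder bound
\[
|e_\lambda|\le \int_{[h,\lambda]}|Z_s|\,ds ,
\]
and apply AM--GM to each integrand: $|Z_s|\le \tau+Z_s^2/(4\tau)$. This yields
\[
|e_\lambda|\le \tau|\lambda-h|+\mathcal Z/(4\tau),
\]
uniformly in $\lambda$. This step is the crux: a single random quantity, free of $\lambda$, must dominate every grid point at once. The uniformity is what removes the data dependence of $\widehat\lambda_\tau$.

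\textbf{Selection inequality.} Write $\widehat\lambda=\widehat\lambda_\tau$ and let $\lambda$ be any deterministic grid point. By the definitions of $d_\lambda$ and $e_\lambda$,
\[
\|\widehat\beta_{\widehat\lambda}-\beta\|^2-\|\widehat\beta_h-\beta\|^2=d_{\widehat\lambda}=\widehat q_{\widehat\lambda}-e_{\widehat\lambda}.
\]
Minimality of $\widehat\lambda$ in \eqref{eq:fs-penalized-selector} gives
\[
\widehat q_{\widehat\lambda}\le \widehat q_\lambda+\tau|\lambda-h|-\tau|\widehat\lambda-h| .
\]
Combining this with $-e_{\widehat\lambda}\le \tau|\widehat\lambda-h|+\mathcal Z/(4\tau)$ cancels the random penalty term, leaving
\[
d_{\widehat\lambda}\le \widehat q_\lambda+\tau|\lambda-h|+\mathcal Z/(4\tau).
\]
Since $\lambda$ is fixed, $\mathbb E\widehat q_\lambda=R(\lambda)-R(h)$ by the paired SURE unbiasedness stated before the proposition. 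Taking expectations and using $\mathbb E\mathcal Z=\mathcal B$ (Fubini) gives
\[
R_{\rm sel}-R(h)\le R(\lambda)-R(h)+\tau|\lambda-h|+\mathcal B/(4\tau).
\]
Minimizing over $\lambda\in\mathcal G$ gives \eqref{eq:fs-selector-oracle}. Choosing $\lambda=h\in\mathcal G$ gives \eqref{eq:fs-default-protection}.

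\textbf{Technical points.} Three things must be checked. First, $e_\lambda$ must be absolutely continuous in $\lambda$ with derivative $Z_\lambda$ as in \eqref{eq:fs-selection-noise}; this follows from the exact rational path \eqref{eq:fs-exact-strength-path} and the weak-derivative regularity in Supplementary Section~\ref{app:fit-regularity}. Second, $\mathbb E\int Z^2=\mathcal B$, which is the definition together with the Stein identity. Third, all terms must be integrable, which uses $\mathcal B<\infty$ and the polynomial growth bounds. The main obstacle I expect is justifying the pathwise fundamental theorem of calculus for the divergence term $\operatorname{tr}[(D_Y\widehat\beta_\lambda)C^\top]$ in $\lambda$, since clipping is only Lipschitz. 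However, clipping is done once, before strength varies, so $\lambda$ enters only through the smooth inverse $(\mathsf S+\lambda\Delta_p)^{-1}$, and differentiation in $\lambda$ commutes with the weak $Y$-derivative.
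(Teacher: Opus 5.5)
Your proof is correct and is essentially the paper's argument: you anchor $e_\lambda$ at $e_h=0$, integrate $Z_s$ along the strength path, compare the penalized objective with a fixed grid candidate, and use fixed-candidate unbiasedness, $\mathbb E\widehat q_\lambda=R(\lambda)-R(h)$, equivalently $\mathbb E e_\lambda=0$. The paper keeps the Cauchy--Schwarz envelope $|e_\lambda|\le\sqrt{|\lambda-h|}\,A(Y)$ and completes the square, $-\tau d+\sqrt d\,A\le A^2/(4\tau)$, whereas you apply AM--GM inside the integral; both give the same constant $\mathcal B/(4\tau)$. Your discarded aside claiming $|\lambda-h|^{1/2}\le|\lambda-h|$ for $|\lambda-h|\le1$ is backwards, and the square-root exponent is no obstacle, but neither point affects the argument you actually use.
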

\begin{proof}
Since $e_h=0$, integration and Cauchy--Schwarz give $|e_\lambda|\le\sqrt{|\lambda-h|}\,A(Y)$, with $A(Y)^2=\int_0^1Z_s(Y)^2ds$ and $\mathbb EA^2=\mathcal B$. The selected objective is no larger than zero at $h$, so its loss contrast is at most $-\tau d+\sqrt d A\le A^2/(4\tau)$, where $d=|\widehat\lambda_\tau-h|$. This proves the first bound. Comparing the objective to any fixed candidate and using $\mathbb Ee_\lambda=0$ gives the second by the same square completion.
\end{proof}

This controls possible deterioration; it is not a lower bound on selection cost or a theorem that tuning is harmful. For unknown noise variance, a plug-in $\widehat\sigma^2$ adds $2(\widehat\sigma^2-\sigma^2)\langle C,D_Y(\widehat\beta_\lambda-\widehat\beta_h)\rangle_F$ to the criterion error. Its joint moments with the learned maps must be retained; the unbiased fixed-candidate argument cannot be reused unchanged. Numerical derivative error likewise needs an explicit allowance. No feasible choice is allowed to use the unknown $\beta$.

In an eight-law, 64-design validation with $d=2,5$, $n=40,160$, $N=5n$, and two predictor discrepancies, direct known-noise parameter-SURE selection had higher average risk than half in four laws and lower risk in four. The paired differences were small relative to the design-level Monte Carlo errors. In this comparison, response-based selection gave no consistent improvement over the fixed half-strength rule.

\subsubsection{An explicit selection budget relative to the available gain}
\label{subsec:selection-scale}

The size of $\mathcal B$, not just its finiteness, determines what the selection inequality can certify. Let $I_\ell=[h-\ell,h+\ell]$, $0<\ell\le1/2$, and restrict the candidate grid to this deterministic interval. Define $\mathcal B_{I_\ell}=\int_{I_\ell}\mathbb E Z_s^2\,ds$. All derivatives below are of the complete response-dependent procedure. Write $D_t v_s$ for the derivative along $Y=Xt+r$, holding the OLS residual $r$ fixed.

\begin{proposition}[Design-scaled selection budget]
\label{prop:fs-selection-scale}
In the known-variance Gaussian model, suppose $\lambda_{\min}(X^\top X/n)\ge c_X>0$, and let
\begin{equation}
 a_I^2=\int_{I_\ell}\mathbb E\|v_s\|^2ds,\qquad
 b_I^2=\int_{I_\ell}\mathbb E\|D_t v_s\|_F^2ds.
 \label{eq:fs-selection-derivative-budgets}
\end{equation}
Then
\begin{equation}
 \mathcal B_{I_\ell}\le\overline{\mathcal B}_I
 :=\frac{4\sigma^2 a_I^2}{nc_X}
   +\frac{4\sigma^4 b_I^2}{n^2c_X^2}.
 \label{eq:fs-scaled-budget}
\end{equation}
For a finite grid $\mathcal G\subset I_\ell$ containing $h$, deterministic conditional on the designs, the selector in \eqref{eq:fs-penalized-selector} satisfies
\begin{align}
 R_{\rm sel}&\le R(h)+\overline{\mathcal B}_I/(4\tau),\label{eq:fs-scale-default}\\
 R_{\rm sel}-\min_{s\in\mathcal G}R(s)
 &\le\tau\ell+\overline{\mathcal B}_I/(4\tau).
 \label{eq:fs-scale-oracle}
\end{align}
If a positive deterministic upper budget $B_I^{\rm up}\ge\overline{\mathcal B}_I$ is available, $\tau=\tfrac12\sqrt{B_I^{\rm up}/\ell}$ makes the right side of \eqref{eq:fs-scale-oracle} at most $\sqrt{\ell B_I^{\rm up}}$.
\end{proposition}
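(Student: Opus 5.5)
The plan has three parts: bound $\mathcal B_{I_\ell}$ directly from the integrand formula \eqref{eq:fs-selection-budget}, reuse the argument of Proposition~\ref{prop:fs-selection-protection} on the restricted interval, and then optimize over $\tau$.

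\textbf{Step 1: first term of the integrand.} Restricting the integral in \eqref{eq:fs-selection-budget} to $I_\ell$, the integrand is $4\{\sigma^2\mathbb E(v_s^\top CC^\top v_s)+\sigma^4\mathbb E\operatorname{tr}([(D_Yv_s)C^\top]^2)\}$. Since $CC^\top=(X^\top X)^{-1}$, we have $\|CC^\top\|\le 1/(nc_X)$, so
\[
\sigma^2 v_s^\top CC^\top v_s\le \frac{\sigma^2\|v_s\|^2}{nc_X}.
\]

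\textbf{Step 2: second term of the integrand.} The trace of a matrix square is bounded by the squared Frobenius norm, as the text after \eqref{eq:fs-selection-budget} notes. This gives
\[
\operatorname{tr}([(D_Yv_s)C^\top]^2)\le\|(D_Yv_s)C^\top\|_F^2 .
\]
Next I convert the $Y$-derivative into the pilot derivative. Write $Y=Xt+r$ with $t=CY$ and $r=(I-XC)Y$. The chain rule gives $D_Yv_s=(D_tv_s)C+(D_rv_s)(I-XC)$. Multiplying on the right by $C^\top$ and using $(I-XC)C^\top=0$ (because $XCC^\top=X(X^\top X)^{-1}=C^\top$) leaves
\[
(D_Yv_s)C^\top=(D_tv_s)CC^\top .
\]
Therefore $\|(D_Yv_s)C^\top\|_F\le\|D_tv_s\|_F\,\|CC^\top\|\le\|D_tv_s\|_F/(nc_X)$, and this term contributes at most $\sigma^4\|D_tv_s\|_F^2/(n^2c_X^2)$. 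Integrating both terms over $I_\ell$ yields \eqref{eq:fs-scaled-budget}. The main obstacle is justifying this chain-rule split for the actual clipped, response-dependent procedure. I would handle it with the polynomial-growth weak-derivative regularity of Supplementary Section~\ref{app:fit-regularity}, which makes all derivatives weak derivatives of polynomial growth and so makes the Gaussian moment interchanges valid.

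\textbf{Step 3: the selector bounds.} I would repeat the proof of Proposition~\ref{prop:fs-selection-protection} on $\mathcal G\subset I_\ell$. For $\lambda\in I_\ell$, integrate $Z$ from $h$ to $\lambda$; this stays inside $I_\ell$, so $|e_\lambda|\le\sqrt{|\lambda-h|}\,A_I(Y)$ with $A_I^2=\int_{I_\ell}Z_s^2\,ds$ and $\mathbb EA_I^2=\mathcal B_{I_\ell}$. The same square completion as before gives \eqref{eq:fs-scale-default}, and also
\[
R_{\rm sel}\le\min_{s\in\mathcal G}\{R(s)+\tau|s-h|\}+\mathcal B_{I_\ell}/(4\tau).
\]
Bounding $|s-h|\le\ell$ on the grid and replacing $\mathcal B_{I_\ell}$ by $\overline{\mathcal B}_I$ gives \eqref{eq:fs-scale-oracle}.

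\textbf{Step 4: choice of $\tau$.} Substituting $\tau=\tfrac12\sqrt{B_I^{\rm up}/\ell}$ into $\tau\ell+B_I^{\rm up}/(4\tau)$ gives $\tfrac12\sqrt{\ell B_I^{\rm up}}+\tfrac12\sqrt{\ell B_I^{\rm up}}=\sqrt{\ell B_I^{\rm up}}$. This choice of $\tau$ is deterministic, as the selector argument requires.
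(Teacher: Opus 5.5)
Your proposal is correct and follows the paper's proof: bound $\|CC^\top\|=\|(X^\top X)^{-1}\|\le1/(nc_X)$ and $|\operatorname{tr}(B^2)|\le\|B\|_F^2$ in \eqref{eq:fs-selection-budget}, rerun the anchored-envelope argument of Proposition~\ref{prop:fs-selection-protection} with integration over $I_\ell$, use $|s-h|\le\ell$, and balance the two terms in $\tau$. The only cosmetic difference is that the paper obtains $(D_Yv_s)C^\top=(D_tv_s)(X^\top X)^{-1}$ directly from $D_tv_s=(D_Yv_s)X$ and $C^\top=X(X^\top X)^{-1}$. That route needs no chain-rule split over $(t,r)$, so the justification you flag as the main obstacle does not arise.
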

\begin{proof}
Put $A_X=(X^\top X)^{-1}$. Then $CC^\top=A_X$, $D_t v_s=(D_Yv_s)X$, and $(D_Yv_s)C^\top=(D_t v_s)A_X$. Apply $\|A_X\|\le1/(nc_X)$ and $|\operatorname{tr}(B^2)|\le\|B\|_F^2$ to \eqref{eq:fs-selection-budget}. The anchored path proof of Proposition~\ref{prop:fs-selection-protection} works with integration over $I_\ell$. Finally $|s-h|\le\ell$ and balancing the two terms proves the assertion.
\end{proof}

Thus, for a sequence of models with a proved fixed-default gain of order $g_n>0$, a sufficient condition for this oracle remainder to be $o(g_n)$ is $\ell B_I^{\rm up}=o(g_n^2)$. This is a conditional rate implication, not a rate already proved for the current target. The budgets in \eqref{eq:fs-selection-derivative-budgets} are population quantities, and a single response or a plug-in Monte Carlo mean is not a deterministic upper bound. If instead a known model-class margin $R(0)-R(h)\ge G>0$ and $B_I^{\rm up}/(4\tau)<G$ hold, \eqref{eq:fs-scale-default} preserves strict improvement over the original fit. This gives a precise sufficient gain-versus-selection-budget comparison, without asserting that tuning must be harmful. The unpenalized known-noise selector also has remainder at most $\sqrt{\ell\mathcal B_{I_\ell}}$ by the same anchored envelope; penalties offer a separate default-relative tradeoff rather than automatic superiority. Unknown variance and numerical derivative error still require the joint error allowances in Supplementary Section~\ref{app:risk-proofs}.

\subsection{Exact signed projection argument}

With $z=\Gamma^{-1/2}(t-\beta)$, $\mathbb Ez=0$ and $\mathbb Ezz^\top=I$. For any square-integrable vector $w$, its affine projection onto $z$ is $\mathbb Ew+\mathbb E(wz^\top)z$: direct multiplication proves zero residual mean and zero residual cross moment with $z$. These properties do not require Gaussian errors. Applying them to $g,u$ gives
\begin{align}
 \mathbb E\langle\widehat\beta_0-\beta,u\rangle
 &=a_g^\top a_u+\langle J+A_g,A_u\rangle_F
                +\mathbb E\langle g_\perp,u_\perp\rangle,\\
 \mathbb E\|u\|^2&=\|a_u\|^2+\|A_u\|_F^2+Q_u.
 \label{eq:app-signed-cross-expansion}
\end{align}
Subtracting squared losses proves the exact identity in Theorem~\ref{thm:fs-signed-transfer}. Cauchy--Schwarz yields both its upper bound and the lower bound $\mathcal D+Q_u-2\sqrt{Q_gQ_u}$. In particular, the sufficient margin is a statement about exact population expectations, not Monte Carlo estimates or samplewise domination.

Equivalently, if $Q_h=\mathbb E\|g_\perp+u_\perp\|^2$, then $R(h)-R(0)=\mathcal D+Q_h-Q_g$, giving the additional upper bound $\mathcal D+\min\{Q_h,Q_u+2\sqrt{Q_gQ_u}\}$. Neither formula assumes the nonlinear covariance is positive.

\subsection{Proof of the design-based improvement theorem}
\label{app:primitive-proof}

For Theorem~\ref{thm:fs-primitive-improvement}, write $v=t-\alpha$, $\rho=n/N$ and
\begin{align}
 L_a&=a^2p,\quad U_a=a^2u,\quad B_a=U_a-L_a<0,\\
 V_L&=a^4p(1-p),\quad V_U=a^4u(1-u),\quad k=V_U-V_L,\\
 c&=n^{-1}\sum_i x_i^2r_i^2,\quad C_a=c+\eta,\quad
 D_a=v^2(V_L+\rho V_U)+\eta.
 \label{eq:app-threepoint-parameters}
\end{align}
Here $r=Y-Xt$. Since $x_i^3=a^2x_i$ and $\sum_i x_ir_i=0$, the centered covariance between $s_i=-x_ir_i$ and $h_i=x_i^2v$ is exactly zero. Hence $\mathsf S=\operatorname{diag}(C_a,D_a)$. The Gaussian OLS decomposition gives independent laws
\begin{equation}
 v\sim N(\mu,s^2),\qquad
 c\sim\frac{\sigma^2a^2}{n}\chi^2_{m-1},\qquad
 s^2=\frac{\sigma^2}{ma^2}.
 \label{eq:app-threepoint-laws}
\end{equation}
To check the second law, restrict to the $m$ active rows: their residual is an orthogonal projection of $m$ independent Gaussian errors onto the complement of their one-dimensional design. Inactive rows have zero score and do not contribute to $c$.

Put $F_a=C_a^{-1}+D_a^{-1}$ and
\begin{equation}
 k_p=\frac{\operatorname{clip}_{[-\kappa,\kappa]}(v^2kF_a)}{F_a},\quad
 D_0=L_a^2D_a+B_a^2C_a,\quad
 F_0=L_aB_a(L_aD_a-B_aC_a)<0.
 \label{eq:app-threepoint-protection}
\end{equation}
Formula~\eqref{eq:fs-protected-lowrank} gives this exact protected update. Inverting the two-by-two covariance in the normal equation gives
\begin{equation}
 \widehat\beta_\lambda-\alpha=v\phi_\lambda(v,c),\qquad
 \phi_\lambda=
 \frac{L_a(L_aD_a+\lambda U_a k_p)}{D_0+\lambda U_a^2 k_p}.
 \label{eq:app-threepoint-fit}
\end{equation}
In particular the fit and its moments involve only the two independent variables in \eqref{eq:app-threepoint-laws}; all same-data covariance dependence remains present.

The bound $|k_p|\le\kappa C_aD_a/(C_a+D_a)\le\kappa D_a$ implies $L_aD_a+\lambda U_a k_p\ge(L_a-\kappa U_a)D_a>0$. Also
$U_a^2C_aD_a\le(C_a+D_a)D_0$, since the difference is $(L_aD_a-B_aC_a)^2$. Thus $|U_a^2k_p|\le\kappa D_0$, proving positivity of the denominator and $\phi_\lambda>0$ even under active clipping. Differentiation in strength gives
\begin{equation}
 \partial_\lambda\phi_\lambda
 =-\frac{U_a k_pF_0}{(D_0+\lambda U_a^2k_p)^2},\qquad
 \left.\partial_\lambda(\widehat\beta_\lambda-\beta)^2\right|_{\mu=0}
 =-\frac{2v^2\phi_\lambda U_a k_pF_0}{(D_0+\lambda U_a^2k_p)^2}.
 \label{eq:app-threepoint-sign}
\end{equation}
Its sign for $v\ne0$ is the sign of $k=a^4(u-p)(1-u-p)$. Gaussian moment bounds and protected denominators justify integration. This proves strict monotonicity, including unique endpoint optima when $k\ne0$; at $k=0$ the update is zero for every $\mu$.

For the favorable case and $h=1/2$, the exact constants used in the main theorem are
\begin{align}
 \mathcal G_{\rm ref}&=\mathbb E_0[v^2(\phi_0^2-\phi_h^2)],\\
 H_h^2&=\mathbb E_0[v^4(\phi_h^2-\phi_0^2)^2],\qquad
 J_h^2=\mathbb E_0[v^2(\phi_h-\phi_0)^2].
 \label{eq:app-threepoint-integrals}
\end{align}
Equations~\eqref{eq:app-threepoint-laws}--\eqref{eq:app-threepoint-fit} specify every integrand and distribution without an assumed alignment sign.

A conservative fully analytic margin is also available. Choose any $0<c_-<c_+$ and $0<v_-<v_+$, put $D_X=V_L+\rho V_U$, and define
\begin{align}
 C_\pm&=c_\pm+\eta,\quad D_\pm=v_\pm^2D_X+\eta,\quad D_{0,+}=L_a^2D_++B_a^2C_+,\\
 k_-&=\min\{|k|v_-^2,\kappa C_-D_-/(C_-+D_-)\},\\
 F_-&=L_a|B_a|(L_aD_-+|B_a|C_-),\quad
 \phi_- =\frac{L_a(L_a-\kappa U_a)D_-}{(1+\kappa)D_{0,+}},\\
 j_-&=\frac{2v_-^2\phi_-U_a k_-F_-}{(1+\kappa)^2D_{0,+}^2},\qquad
 \underline G_h=h j_-\Pr(c_-\le c\le c_+)\Pr(v_-\le|N(0,s^2)|\le v_+)>0.
 \label{eq:app-threepoint-explicit-margin}
\end{align}
On this rectangle the absolute derivative in \eqref{eq:app-threepoint-sign} is at least $j_-$ uniformly in strength. Integration over $[0,h]$ proves $\mathcal G_{\rm ref}\ge\underline G_h$. Moreover $0<\phi_\lambda\le B_\phi=(1+\kappa U_a/L_a)/(1-\kappa)$, so
\begin{equation}
 H_h\le\sqrt3 B_\phi^2s^2,\qquad J_h\le B_\phi s.
 \label{eq:app-threepoint-moment-bounds}
\end{equation}
Substituting these upper bounds and $\underline G_h$ into \eqref{eq:fs-shift-budget} yields an explicit sufficient inequality using only Gaussian/chi-square probabilities and model/design parameters. Its left side tends to zero with $\mu$ whereas its right side is strictly positive. The resulting interval can be conservative; the sharper constants in \eqref{eq:app-threepoint-integrals} are preferable when rigorous integration bounds are available.

\paragraph{Numerical evaluation of the primitive improvement regime.}

The numerical example $n=40$, $N=200$, $a=\sigma=1$, $p=0.4$, $u=0.2$ has numerically integrated $\mathcal G_{\rm ref}\simeq9.3800\times10^{-4}$. At $\mu=0.0625$, \eqref{eq:fs-shift-budget} is approximately $2.9436\times10^{-4}$, leaving a positive lower margin of approximately $6.4364\times10^{-4}$; the integrated actual gain is approximately $9.0054\times10^{-4}$. These are quadrature evaluations of a proved inequality, not certified integration bounds. Supplementary Section~\ref{app:primitive-proof} additionally gives a conservative analytic lower margin and moment upper bounds that establish a nonempty tolerance range without numerical integration. Stability under small design or prediction perturbations is established in Supplementary Section~\ref{app:perturbation-proof}; no uniform or high-dimensional radius is asserted.

\subsection{Gaussian-shift transfer in every fixed dimension}
\label{app:shift-proof}

Let the fixed external predictor be $f(x)=x^\top\alpha$, $\mu=\beta-\alpha$, $V_t=\sigma^2(X^\top X)^{-1}$ and $v=t-\alpha$. Under Gaussian errors, $v\sim N(\mu,V_t)$ independently of $r=(I-XC)Y$. Write the actual fit as $\widehat\beta_\lambda=\alpha+a_\lambda(v,r)$. Its centered affine moments have right-hand side $(A_Lv,0)^\top$. Joint reflection $(v,r)\mapsto(-v,-r)$ changes the sign of all pilot scores, preserves their centered covariance, ridge and protected update, and changes the sign of this right-hand side. Consequently $a_\lambda(-v,-r)=-a_\lambda(v,r)$.

For deterministic strengths $s,b$, define $h_{sb}=\|a_s\|^2-\|a_b\|^2$, $j_{sb}=a_s-a_b$, $H_{sb}=\|h_{sb}\|_{L^2(\mathbb P_0)}$ and $J_{sb}=\|j_{sb}\|_{L^2(\mathbb P_0)}$. The maps are evaluated under the zero-shift reference law $\mathbb P_0$; they contain no unknown $\beta$ argument. Put $q_\mu=\mu^\top V_t^{-1}\mu$. The likelihood ratio is $\ell_\mu(v)=\exp(\mu^\top V_t^{-1}v-q_\mu/2)$. Its reflection-even and reflection-odd parts satisfy
\begin{equation}
 \mathbb E_0(\ell_{\rm even}-1)^2=\cosh(q_\mu)-1,\qquad
 \mathbb E_0\ell_{\rm odd}^2=\sinh(q_\mu).
 \label{eq:app-shift-likelihood-moments}
\end{equation}
Evenness of $h_{sb}$ and oddness of $j_{sb}$ give
\begin{equation}
 R_\mu(s)-R_\mu(b)
 =\mathbb E_0[h_{sb}\ell_{\rm even}]-2\mu^\top\mathbb E_0[j_{sb}\ell_{\rm odd}].
 \label{eq:app-shift-risk-identity}
\end{equation}
Subtract the zero-shift contrast and apply Cauchy--Schwarz to obtain
\begin{equation}
 |R_\mu(s)-R_\mu(b)-R_0(s)+R_0(b)|
 \le H_{sb}\sqrt{\cosh(q_\mu)-1}
     +2\|\mu\|J_{sb}\sqrt{\sinh(q_\mu)}.
 \label{eq:app-general-shift-bound}
\end{equation}
All moments are finite by the fixed-design Gaussian growth bounds. For a fixed design this perturbation is $O(\|\mu\|^2)$ near zero. Setting $s=h,b=0$ proves the main theorem. In higher dimensions this transfers a reference gain when one is separately established; supplying a favorable reference sign for arbitrary designs remains open (Supplementary Section~\ref{app:literature-protocol}). No data-selected strength can be substituted without accounting for its selection map.

\subsection{Stability of an established improvement margin}
\label{app:perturbation-proof}

Consider two fixed design/prediction configurations with the same $n,N,d,\beta,\sigma,\eta>0$ and cap, and couple their responses using the same standard Gaussian noise. Superscripts $0,1$ distinguish their actual fitted maps. For $a\in\{0,h\}$ put $d_a=\|\widehat\beta_a^1-\widehat\beta_a^0\|_{L^2}$. The exact loss expansion and Cauchy--Schwarz give
\begin{equation}
 |R_a^1-R_a^0|\le 2\sqrt{R_a^0}\,d_a+d_a^2.
 \label{eq:app-configuration-risk-stability}
\end{equation}
Hence an established margin $G^0=R_0^0-R_h^0>0$ survives whenever
\begin{equation}
 2\sqrt{R_0^0}\,d_0+d_0^2+2\sqrt{R_h^0}\,d_h+d_h^2<G^0.
 \label{eq:app-configuration-margin}
\end{equation}
At fixed positive ridge, the actual fit is continuous in finite design and prediction entries near any full-rank labeled design: the pilot, positive-definite square roots, spectral clipping and normal-equation inverses are continuous. On a sufficiently small compact neighborhood the labeled singular value is bounded below and the polynomial Gaussian growth envelope in Supplementary Section~\ref{app:fit-regularity} is uniform. Dominated convergence therefore makes $d_0,d_h\to0$ as the configuration converges. Every strictly favorable reference configuration has an open neighborhood of favorable configurations of the same dimension, including predictions not exactly linear on the observed rows. This proves existence of a neighborhood, not a useful uniform numerical radius, a change of dimension, or improvement after averaging over arbitrary random designs.

\subsection{Conditioning, Hermite tails, and affine residual budgets}

For homoscedastic Gaussian errors, $r=(I-XC)\epsilon$ and $t-\beta=C\epsilon$ are jointly Gaussian with cross covariance zero, since $C(I-XC)=0$ and $XC$ is an orthogonal projector. They are independent. Given $r$, $z\sim N(0,I_d)$. Use conditional coefficients $a_g(r),A_g(r),a_u(r),A_u(r)$ and conditional projection residuals; then all preceding identities hold after averaging over $r$. The resulting tail energies are $Q_g^{\rm cond},Q_u^{\rm cond}$ of \eqref{eq:fs-weak-gradient-tail}; the unconditional tails also include variation of the conditional affine coefficients.

Let $\{H_\alpha\}$ be normalized multivariate Hermite polynomials for standard Gaussian measure, indexed by multi-indices $\alpha\in\mathbb N^d$, with coefficients $a_\alpha(r)=\mathbb E_z[gH_\alpha(z)\mid r]$. Orthogonality, completeness, and $\partial_jH_\alpha=\sqrt{\alpha_j}H_{\alpha-e_j}$ imply, for vector-valued Gaussian Sobolev functions,
\begin{equation}
 \mathbb E_z\|g\|^2=\sum_\alpha\|a_\alpha(r)\|^2,\qquad
 \mathbb E_z\|D_zg\|_F^2=\sum_\alpha|\alpha|\|a_\alpha(r)\|^2.
 \label{eq:app-hermite-energy}
\end{equation}
For polynomials this follows term by term; for square-integrable weak derivatives it follows by completion in the Gaussian Sobolev norm. Degrees zero and one are precisely the affine projection, with degree-one coefficient matrix $A_g(r)$. Every remaining degree contributes at least twice its energy in the second identity. This proves \eqref{eq:fs-weak-gradient-tail}, including its factor $1/2$. The same calculation applies to $u$. The weak-derivative Gaussian integration-by-parts framework is consistent with \citep{belleczhang2021}; this tail decomposition is derived here rather than attributed as a theorem about our target.

First-order Stein also gives $A_g(r)=\mathbb E_z[D_zg\mid r]$ and $A_u(r)=\mathbb E_z[D_zu\mid r]$. In coordinates, $D_zg=(D_t\widehat\beta_0-I)J$, $D_zu=(D_t\widehat\beta_h-D_t\widehat\beta_0)J$, holding $r$ fixed. Perturbing $Y$ by $Xv$ changes $t$ by $v$ and leaves $r$ unchanged, so these derivatives retain the entire same-response pipeline.

For any conditional affine approximation $a(r)+A(r)z$, orthogonality gives
\begin{equation}
 \mathbb E_z\|g-a-Az\|^2
 =\mathbb E_z\|g_\perp\|^2+\|a_g(r)-a\|^2+\|A_g(r)-A\|_F^2.
 \label{eq:app-affine-tail-budget}
\end{equation}
Thus independently fitted affine residuals bound the exact tail in population. Finite validation averages need uncertainty control before they become confidence bounds. These coefficients are theoretical diagnostics depending on unknown $\beta$, not an automatically feasible selector.

\subsection{Parameter SURE and its variance identity}

Let $g_\lambda=\widehat\beta_\lambda-t$ and $F_\lambda=C^\top g_\lambda$. Gaussian integration by parts gives
\begin{equation}
 R(\lambda)=\sigma^2\operatorname{tr}(CC^\top)
   +\mathbb E\{\|g_\lambda\|^2+2\sigma^2\langle C,D_Yg_\lambda\rangle_F\}.
 \label{eq:app-parameter-sure}
\end{equation}
Subtracting the equation at $h$ proves unbiasedness of \eqref{eq:fs-paired-sure}. This is parameter loss, not prediction loss in the observation space. The subtraction cancels the derivative of $t$.

Set $w_\lambda=\widehat\beta_\lambda-\widehat\beta_h$. Expanding the two losses around $t$ gives exactly
\begin{equation}
 e_\lambda=-2\{\epsilon^\top C^\top w_\lambda
        -\sigma^2\operatorname{tr}[(D_Yw_\lambda)C^\top]\}.
 \label{eq:app-exact-criterion-error}
\end{equation}
Differentiate in strength, with all covariance operations frozen in strength but not in response, to get \eqref{eq:fs-selection-noise}. For a vector map $F(Y)$ with square-integrable weak derivative, the second-order Stein identity \citep{belleczhang2021} is
\begin{equation}
 \mathbb E\{\epsilon^\top F-\sigma^2\operatorname{div}F\}^2
 =\sigma^2\mathbb E\|F\|^2+\sigma^4\mathbb E\operatorname{tr}[(D_YF)^2].
 \label{eq:app-second-order-stein}
\end{equation}
Apply it to $F=C^\top v_\lambda$. Cyclicity of trace gives the matrix-square trace in \eqref{eq:fs-selection-budget}. Its replacement by the Frobenius norm is valid because $|\operatorname{tr}(A^2)|\le\|A\|_F^2$. Supplementary Section~\ref{app:fit-regularity} supplies the needed Gaussian moments and weak derivatives. No classical second response derivative of the spectral clipping map is needed.

\subsection{Detailed selection inequality}

Because $e_h=0$ and the error path is absolutely continuous in strength,
\begin{equation}
 |e_\lambda|\le\sqrt{|\lambda-h|}\,A(Y),\qquad
 A(Y)^2=\int_0^1Z_s(Y)^2ds,\quad\mathbb EA^2=\mathcal B.
 \label{eq:app-anchored-error-envelope}
\end{equation}
The finite-grid selector is measurable after a deterministic tie-breaking rule. Comparing its objective to $h$ and writing $d=|\widehat\lambda_\tau-h|$ gives
$d_{\widehat\lambda_\tau}\le-\tau d-e_{\widehat\lambda_\tau}\le-\tau d+\sqrt d A\le A^2/(4\tau)$. This proves the default-relative bound. Comparison to any fixed $s\in\mathcal G$ instead gives
\begin{equation}
 d_{\widehat\lambda_\tau}-d_s
 \le\tau|s-h|-\tau d+e_s-e_{\widehat\lambda_\tau}.
 \label{eq:app-fixed-candidate-comparison}
\end{equation}
Taking expectations, using $\mathbb Ee_s=0$ and completing the same square proves the second bound. There is no grid-cardinality factor because the common anchored path controls every candidate simultaneously. It does not prove that the bound is tight. At $\tau=0$, the same argument and $d\le1/2$ give $R_{\rm sel}\le\min_{s\in\mathcal G}R(s)+\sqrt{\mathcal B/2}$.

For a purely discrete path, let ordered grid points contain $h$. Define $A_{\rm grid}^2=\sum_j(e_{s_{j+1}}-e_{s_j})^2/(s_{j+1}-s_j)$. Weighted Cauchy--Schwarz proves the same envelope at grid points. Each error increment has second moment given by \eqref{eq:app-second-order-stein} with $F=C^\top(\widehat\beta_{s_{j+1}}-\widehat\beta_{s_j})$ and factor four. Thus the selection proof can also be formulated with the discrete expected budget. For a differentiable path it is at most $\mathcal B$.

\subsection{Unknown variance and numerical derivative error}

For a response-based $\widehat\sigma^2$, define $\widetilde q_\lambda$ by substitution in \eqref{eq:fs-paired-sure}. Its error is
\begin{align}
 \widetilde e_\lambda&=e_\lambda+2(\widehat\sigma^2-\sigma^2)
          \operatorname{tr}[(D_Yw_\lambda)C^\top],\\
 \widetilde Z_\lambda&=Z_\lambda+2(\widehat\sigma^2-\sigma^2)
          \operatorname{tr}[(D_Yv_\lambda)C^\top].
 \label{eq:app-plugin-error}
\end{align}
The plug-in is fixed with respect to strength, not independent of response. Its fixed-candidate mean error need not vanish. Nevertheless, the pointwise default comparison still proves
\begin{equation}
 R_{\rm sel,plugin}\le R(h)+
 \mathbb E\left\{\frac{\int_0^1\widetilde Z_s^2ds}{4\tau(Y)}\right\}
 \label{eq:app-plugin-protection}
\end{equation}
for any positive response-dependent penalty with the displayed expectation finite. For deterministic $\tau$, a valid variance budget is
\begin{equation}
 \mathbb E\int_0^1\widetilde Z_s^2ds
 \le2\mathcal B+8\int_0^1\mathbb E\{(\widehat\sigma^2-\sigma^2)^2
                  \operatorname{tr}[(D_Yv_s)C^\top]^2\}ds.
 \label{eq:app-plugin-budget}
\end{equation}
These are joint expectations; factoring them without independence is invalid. With $n>d$, the OLS estimate $\widehat\sigma^2=\|r\|^2/(n-d)$ is feasible. Polynomial bounds ensure finiteness for a constant positive penalty. For a penalty proportional to $\widehat\sigma^2$, $n-d>2$ is a sufficient inverse-residual-moment condition at fixed design: the Gaussian residual radius has a chi-square distribution with $n-d$ degrees of freedom, and its inverse first moment is finite precisely above two. This does not restore the unbiased fixed-candidate oracle comparison.

Suppose an approximate derivative-based contrast is anchored at zero for $h$ and differs uniformly from the exact contrast by at most $b(Y)$ on the grid. The objective comparisons then add $\mathbb Eb$ to the default bound and $2\mathbb Eb$ to the known-noise fixed-candidate bound. They add $\mathbb Eb$ to \eqref{eq:app-plugin-protection}. For a fixed finite grid, almost-everywhere differentiability and an integrable difference-quotient envelope give $\mathbb Eb\to0$ as a central-difference step tends to zero. Polynomial weak-derivative bounds above provide domination for the actual Gaussian pipeline. A numerical step-halving audit is a diagnostic, not proof that $b=0$ at the chosen step.

Finally, $\langle C,D_Y\widehat\beta_\lambda\rangle_F
=\sum_{j=1}^d D_{C_j^\top}\widehat\beta_{\lambda,j}$. Thus $d$ response perturbation directions suffice for the parameter divergence. Every perturbed response requires recomputing the pilot, covariance, protection and final fit. Treating the weight as response-constant here would differentiate a different estimator.

\section{The Exact Strength Path and Remaining Improvement}
\label{app:path-proofs}

\subsection{Path derivative envelopes and grid comparisons}
\label{supp:path-envelopes}

Writing $\dot\beta_\lambda=\sum_k u_k/(1+\lambda\rho_k)^2$ and $\ddot\beta_\lambda=-\sum_k2\rho_ku_k/(1+\lambda\rho_k)^3$, define
\begin{equation}
 B_0=\sum_k\frac{\|u_k\|}{1-\kappa},\quad
 B_1=\sum_k\frac{\|u_k\|}{(1-\kappa)^2},\quad
 B_2=\sum_k\frac{2|\rho_k|\|u_k\|}{(1-\kappa)^3}.
 \label{eq:fs-curvature-constants}
\end{equation}
The realized loss curvature is bounded by the modulus $\mathcal M_Y=2B_1^2+2(\|\widehat\beta_0-\beta\|+B_0)B_2$, whose expectation bounds $\sup_\lambda|R''(\lambda)|$ when finite. This supplies the curvature constant of the two-piece regret certificate, and it does not by itself certify a small risk gap: a curvature-only bound about a nonstationary reference need not hold; Theorem~\ref{thm:fs-regret-certificate} states precisely when the derivative term vanishes.

Specifically, let a grid include both endpoints with maximum spacing $a$, assume $|R''|\le\mathcal M$, and let $\widehat g_j$ estimate $g_j=R(h)-R(\lambda_j)$. On any simultaneous event $|\widehat g_j-g_j|\le b_j$,
\begin{equation}
 \max\{0,\max_j(\widehat g_j-b_j)\}\le G_h
 \le\max_j(\widehat g_j+b_j)+\mathcal M a^2/8.
 \label{eq:fs-grid-regret-bound}
\end{equation}
At an interior continuum minimizer $R'=0$; Taylor expansion to the nearest grid point gives the $\mathcal Ma^2/8$ term. Endpoint minimizers are already included. The event itself needs a valid probability assessment.

\subsection{Strength stationarity and conditional regret results}
\label{subsec:strength-optimality}

The exact strength path distinguishes two optimization problems. For a fixed response, let $\mathcal L_Y(\lambda)=\|\widehat\beta_\lambda-\beta\|^2$ denote realized loss. The risk $R(\lambda)=\mathbb E\mathcal L_Y(\lambda)$ instead averages the response-dependent coefficients of that path. A truth-assisted minimum of realized loss is not the best fixed strength for risk.

\begin{proposition}[Realized-loss stationarity polynomial]
\label{prop:fs-strength-stationarity}
Fix the design, predictor and response, and put $m=\operatorname{rank}(E_p)\le d$. The realized loss can be written $\mathcal L_Y=N_Y/B_Y^2$ with polynomials of degrees at most $2m$ and $m$, respectively, and $B_Y$ nonzero on $[0,1]$. Define
\begin{equation}
 g_Y(\lambda)=N_Y'(\lambda)B_Y(\lambda)-2N_Y(\lambda)B_Y'(\lambda).
 \label{eq:fs-reduced-numerator}
\end{equation}
Then $\mathcal L_Y'=g_Y/B_Y^3$, and $\deg g_Y\le3m-1$ for $m\ge1$. Unless the derivative is identically zero, the minimum realized loss is attained among the endpoints and interior real roots:
\begin{equation}
 \min_{[0,1]}\mathcal L_Y
 =\min\{\mathcal L_Y(0),\mathcal L_Y(1),
          \mathcal L_Y(\lambda):g_Y(\lambda)=0,\ 0<\lambda<1\}.
 \label{eq:fs-path-optimum-attained}
\end{equation}
If $g_Y\equiv0$, the loss is constant on the path. This statement does not characterize the expected-risk optimum.
\end{proposition}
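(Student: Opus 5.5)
The plan is to start from the exact strength-path representation \eqref{eq:fs-exact-strength-path} and clear denominators. For a fixed response, $\widehat\beta_\lambda-\widehat\beta_0=\sum_k\lambda u_k/(1+\lambda\rho_k)$. Terms with $\rho_k=0$ are already linear in $\lambda$. Only the nonzero residual eigenvalues $\rho_k$ of $P_YE_pP_Y$ produce genuine denominators. Since $\operatorname{rank}(P_YE_pP_Y)\le\operatorname{rank}(E_p)=m$, there are at most $m$ such indices. We set
\[
B_Y(\lambda)=\prod_{k:\rho_k\ne0}(1+\lambda\rho_k),
\]
a polynomial of degree at most $m$. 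It is nonzero on $[0,1]$ because $|\rho_k|\le\kappa<1$ gives $1+\lambda\rho_k\ge1-\kappa>0$. Then $B_Y(\lambda)(\widehat\beta_\lambda-\beta)=p_Y(\lambda)$ is a vector polynomial. Each term $\lambda u_kB_Y/(1+\lambda\rho_k)$ has degree at most $m$, and the zero-eigenvalue linear terms times $B_Y$ have degree at most $m+1$.

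The degree bookkeeping is where I expect the only real care to be needed. The clean route is to note that the linear terms from $\rho_k=0$ only occur when fewer than $m$ eigenvalues are nonzero, since the total number of $k$ with $u_k\ne0$ is controlled by $m$. This holds because $u_k=(L_YE_pq_k)(q_k^\top z_Y)$ vanishes unless $E_pq_k\ne0$, and the range of $E_p$ has dimension $m$.

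To make this precise, I would instead use the resolvent form $\widehat\beta_\lambda-\widehat\beta_0=\lambda L_Y(I+\lambda E_pP_Y)^{-1}E_pz_Y$. By Cramer's rule on the $m$-dimensional range of $E_p$, i.e.\ restricting $P_YE_p$ to that subspace, this equals $\lambda\,\mathrm{adj}/\det$. The determinant $B_Y(\lambda)=\det(I_m+\lambda\,\cdot)$ has degree at most $m$, and the numerator $\lambda\cdot(\text{adjugate, degree }\le m-1)$ has degree at most $m$. Hence $p_Y=B_Y(\widehat\beta_0-\beta)+(\text{degree}\le m)$ has degree at most $m$. It follows that $N_Y=\|p_Y\|^2$ has degree at most $2m$ and $\mathcal L_Y=N_Y/B_Y^2$.

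For the derivative, the quotient rule gives
\[
\mathcal L_Y'=\frac{N_Y'B_Y^2-2N_YB_YB_Y'}{B_Y^4}=\frac{g_Y}{B_Y^3}.
\]
The degree of $N_Y'B_Y$ and of $N_YB_Y'$ is at most $(2m-1)+m=3m-1$, so $\deg g_Y\le3m-1$ when $m\ge1$.

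Since $B_Y^3$ does not vanish on $[0,1]$, the loss $\mathcal L_Y$ is smooth there and its interior critical points are exactly the interior roots of $g_Y$. If $g_Y\not\equiv0$, it has finitely many roots, and the continuous function $\mathcal L_Y$ on the compact interval attains its minimum either at an endpoint or at an interior critical point; this gives \eqref{eq:fs-path-optimum-attained}. If $g_Y\equiv0$, then $\mathcal L_Y'\equiv0$ and the loss is constant on the path. The final remark, that this does not characterize the expected-risk optimum, needs no proof. The main obstacle is only the degree count above, which I would settle via the determinant/adjugate representation.
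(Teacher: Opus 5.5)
Your final argument is correct and is essentially the paper's proof. You restrict the resolvent form $\lambda L_Y(I+\lambda E_pP_Y)^{-1}E_pz_Y$ to the $m$-dimensional range of $E_p$, take $B_Y$ to be the resulting $m\times m$ determinant with an adjugate numerator of degree at most $m$, and then apply the quotient rule and the endpoint/critical-point argument; the paper does the same reduction through $E_p=U_EA_EU_E^\top$ and the push-through identity.

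Three small repairs are needed.
\begin{enumerate}
\item The operator to restrict is $E_pP_Y$, which leaves $\operatorname{range}(E_p)$ invariant; $P_YE_p$ does not.
\item Because you redefine $B_Y$ as this determinant, note that by Sylvester's identity it equals $\det(I+\lambda P_YE_pP_Y)=\prod_{\rho_k\ne0}(1+\lambda\rho_k)$. Your bound $1+\lambda\rho_k\ge1-\kappa>0$ then still gives nonvanishing.
\item Drop the $u_k$-counting heuristic in your second paragraph. It is not valid as stated, because many $q_k$ can satisfy $E_pq_k\ne0$, and you do not need it once you use the determinant route.
\end{enumerate}
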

\begin{proof}
If $m=0$, the update and displacement vanish and the loss is constant. For $m\ge1$, factor $E_p=U_E A_E U_E^\top$ with $U_E$ having $m$ orthonormal columns. The exact displacement is
$\lambda L_Y U_E A_E(I_m+\lambda U_E^\top P_YU_E A_E)^{-1}U_E^\top z_Y$.
The determinant of the matrix in parentheses gives $B_Y$; the determinant identity with $I+\lambda P_YE_pP_Y\succ0$ shows it is nonzero. Its adjugate has degree at most $m-1$, so the error vector has a polynomial numerator of degree at most $m$. Squaring gives $N_Y$ of degree at most $2m$. Differentiation proves the derivative and degree bound. No common factor of $N_Y$ and $B_Y$ is assumed.
\end{proof}

Equivalently, the spectral coordinates of Section~\ref{subsec:remaining-path-space} give
\begin{equation}
 \begin{aligned}
 \mathcal L_Y(\lambda)&=\|e_0\|^2+2\sum_k a_k h_k(\lambda)
 +\sum_{i,j}b_{ij}h_i(\lambda)h_j(\lambda),\\
 h_k(\lambda)&=\frac{\lambda}{1+\lambda\rho_k},\qquad
 a_k=e_0^\top u_k,\qquad b_{ij}=u_i^\top u_j.
 \end{aligned}
 \label{eq:fs-factorized-risk}
\end{equation}
All these coefficients are response dependent. For expected risk, dominated differentiation gives $R'=\mathbb E\mathcal L_Y'$, not a single polynomial equation of the preceding degree. Section~\ref{subsec:convergent-risk} supplies population polynomial approximations with a uniform remainder. Independent training and evaluation streams can approximate the best fixed strength numerically, but require simulation truth or a justified risk estimate.

\begin{theorem}[Two-piece deterministic regret bound]
\label{thm:fs-regret-certificate}
Let $R$ be twice continuously differentiable on $[0,1]$, and let $\lambda_*$ minimize it. Put
\begin{equation}
 D_*=|R'(\lambda_*)|,\qquad M=\sup_{[0,1]}|R''(\lambda)|<\infty.
 \label{eq:fs-regret-constants}
\end{equation}
Then
\begin{equation}
 G_h\le D_*|h-\lambda_*|+\tfrac{M}{2}(h-\lambda_*)^2.
 \label{eq:fs-regret-two-piece}
\end{equation}
For an interior minimizer $D_*=0$; at an endpoint the derivative term may be necessary.
\end{theorem}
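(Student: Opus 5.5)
Write $G_h=R(h)-R(\lambda_*)$, using $\lambda_*$ as the minimizer that attains $R_*^{\rm path}$; the minimizer exists because $R$ is continuous on the compact interval $[0,1]$. The plan is a second-order Taylor expansion of $R$ about $\lambda_*$, evaluated at the point $h$. Since $R\in C^2[0,1]$, Taylor's theorem with Lagrange remainder gives some $\xi$ between $\lambda_*$ and $h$ with
\[
 R(h)-R(\lambda_*)=R'(\lambda_*)(h-\lambda_*)+\tfrac12R''(\xi)(h-\lambda_*)^2 .
\]
The integral form $\int_{\lambda_*}^{h}(h-s)R''(s)\,ds$ works equally well. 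Bounding the first term by $D_*|h-\lambda_*|$ and the second by $\tfrac{M}{2}(h-\lambda_*)^2$, using $|R''|\le M$ on $[0,1]$ (which contains the segment from $\lambda_*$ to $h$), gives \eqref{eq:fs-regret-two-piece}. The left side is nonnegative by \eqref{eq:fs-regret-identity}, so only the upper bound needs proof.

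Next comes the interior claim. If $\lambda_*\in(0,1)$, then $\lambda_*$ is an interior minimizer of a differentiable function, and Fermat's rule gives $R'(\lambda_*)=0$. Thus $D_*=0$ and the bound reduces to the pure curvature term $\tfrac M2(h-\lambda_*)^2$.

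At an endpoint we only get the one-sided conditions $R'(0)\ge0$ at $\lambda_*=0$ and $R'(1)\le0$ at $\lambda_*=1$. I would show that the derivative term can be necessary with a short example. Take $R(\lambda)=c\lambda$ with $c>0$. Then $\lambda_*=0$, $M=0$ and $G_h=ch=D_*|h-\lambda_*|$, so dropping the linear term would be false. This example also matches the remark after \eqref{eq:fs-curvature-constants} that a curvature-only bound about a nonstationary reference fails.

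None of these steps is a real obstacle; the argument is elementary. The only point needing care is applying the statement to the actual CC-PPI risk, where one must justify $R\in C^2$ and $M<\infty$. For that I would cite the derivative envelopes of Supplementary Section~\ref{supp:path-envelopes}: the bound $\mathbb E\mathcal M_Y<\infty$ together with dominated differentiation of the rational path \eqref{eq:fs-exact-strength-path} gives $M\le\mathbb E\mathcal M_Y$. This is outside the theorem itself, which takes $C^2$ regularity as a hypothesis.
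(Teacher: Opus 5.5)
Your proof is correct and follows the paper's argument: a second-order Taylor expansion of $R$ about $\lambda_*$, with the remainder bounded by $M(h-\lambda_*)^2/2$ and $R'(\lambda_*)=0$ at an interior minimizer. Your linear example $R(\lambda)=c\lambda$ shows concretely that the derivative term cannot be dropped at an endpoint, which the paper only asserts.
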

\begin{proof}
Taylor's formula around $\lambda_*$ gives a linear term $R'(\lambda_*)(h-\lambda_*)$ and an integral remainder bounded in absolute value by $M(h-\lambda_*)^2/2$. An interior minimizer has zero derivative.
\end{proof}

The protected path provides integrable derivative and curvature bounds. Since $0\le h_k(\lambda)\le(1-\kappa)^{-1}$,
\begin{equation}
 D_*\le\mathbb E\left\{
 \frac{2\sum_k|a_k|}{(1-\kappa)^2}
 +\frac{2\sum_{i,k}|b_{ik}|}{(1-\kappa)^3}\right\},
 \label{eq:fs-derivative-bound}
\end{equation}
and $M\le\mathbb E\mathcal M_Y$ for the realized curvature modulus in Supplementary Section~\ref{supp:path-envelopes}. These bounds express the deterministic strength comparison through the complete path coefficients.

For the paired criterion of Supplementary Section~\ref{supp:strength-selection}, let a finite grid $\mathcal G$, deterministic conditional on the designs, contain $h$ and define the gain and its estimation error by
\begin{equation}
 \widehat g_\lambda=-\widehat q_\lambda,\qquad
 e_\lambda^{\rm risk}=\widehat g_\lambda-\{R(h)-R(\lambda)\},\qquad
 e_h^{\rm risk}=0.
 \label{eq:fs-regret-estimates}
\end{equation}
This is error around a population risk contrast, distinct from the realized-loss criterion error used in the selection proof.

\begin{proposition}[Conditional event-based regret interval]
\label{prop:fs-regret-certificate}
On $\mathcal E=\{\max_{\lambda\in\mathcal G}|e_\lambda^{\rm risk}|\le\epsilon\}$,
\begin{equation}
 \max\{0,\max_{\mathcal G}\widehat g_\lambda-\epsilon\}
 \le G_h\le\max_{\mathcal G}\widehat g_\lambda+\epsilon+L_{\rm gap},
 \qquad L_{\rm gap}=\min_{\mathcal G}R-\inf_{[0,1]}R\ge0.
 \label{eq:fs-regret-interval}
\end{equation}
If the grid contains both endpoints, has maximum spacing $a$, and $|R''|\le M$, then $L_{\rm gap}\le Ma^2/8$.
\end{proposition}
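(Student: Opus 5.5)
The argument reduces to comparing the best grid point with the continuum infimum, and uses the event $\mathcal E$ only to pass from estimated gains to population risk contrasts. Two facts organize it. First, $G_h=R(h)-\inf_{[0,1]}R$, while for every $\lambda\in\mathcal G$ the estimate satisfies $\widehat g_\lambda=R(h)-R(\lambda)+e^{\rm risk}_\lambda$. Second, $L_{\rm gap}\ge0$ because $\mathcal G\subset[0,1]$.

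For the lower bound, on $\mathcal E$ each $\lambda\in\mathcal G$ gives
\[
R(h)-R(\lambda)\ge\widehat g_\lambda-\epsilon .
\]
Since $R(\lambda)\ge\inf_{[0,1]}R$, this yields $G_h\ge R(h)-R(\lambda)\ge\widehat g_\lambda-\epsilon$. Taking the maximum over $\mathcal G$, and combining with $G_h\ge0$ from the definition in \eqref{eq:fs-regret-definitions}, proves the left inequality.

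For the upper bound, decompose
\[
G_h=\{R(h)-\min_{\mathcal G}R\}+L_{\rm gap}.
\]
The minimum over the finite grid is attained at some $\lambda^\dagger\in\mathcal G$. On $\mathcal E$,
\[
R(h)-R(\lambda^\dagger)=\widehat g_{\lambda^\dagger}-e^{\rm risk}_{\lambda^\dagger}\le\max_{\mathcal G}\widehat g_\lambda+\epsilon .
\]
Adding $L_{\rm gap}$ gives the right inequality. Neither step uses any structure of the path or of the estimator; the interval holds deterministically on $\mathcal E$.

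The grid statement is the only step with content, and it mirrors the argument behind \eqref{eq:fs-grid-regret-bound}. By continuity of $R$ on the compact interval, established for the protected path in Section~\ref{subsec:regret}, the infimum is attained at some $\lambda_*$.
\begin{itemize}
\item If $\lambda_*$ is an endpoint, it lies in the grid, so $L_{\rm gap}=0$.
\item If $\lambda_*$ is interior, then $R'(\lambda_*)=0$. Because $\lambda_*$ lies between two consecutive grid points at distance at most $a$, its nearest grid point $\lambda_j$ satisfies $|\lambda_j-\lambda_*|\le a/2$. Taylor's formula with remainder, using $|R''|\le M$, gives
\[
R(\lambda_j)-R(\lambda_*)\le\tfrac M2(a/2)^2=Ma^2/8 .
\]
Since $\min_{\mathcal G}R\le R(\lambda_j)$, this bounds $L_{\rm gap}$.
\end{itemize}

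The main subtlety is the assumption behind the interior case. The bound $|R''|\le M$ presupposes that $R$ is twice differentiable, which the statement assumes. For the protected path this can be referenced to the curvature envelope $\mathbb E\mathcal M_Y$ of Supplementary Section~\ref{supp:path-envelopes}.
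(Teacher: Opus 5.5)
Your proof is correct and follows essentially the same route as the paper's: the decomposition $G_h=\{R(h)-\min_{\mathcal G}R\}+L_{\rm gap}$, two-sided control of the grid gain on $\mathcal E$ together with $G_h\ge0$, and a nearest-grid-point Taylor expansion at an interior minimizer with $R'(\lambda_*)=0$, where endpoint minimizers already lie on the grid. Your componentwise lower bound is an unpacked form of the paper's statement that the maximal grid gain lies within $\epsilon$ of $\max_{\mathcal G}\widehat g_\lambda$.
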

\begin{proof}
The grid gain is $G_{\mathcal G}=R(h)-\min_{\mathcal G}R$, so $G_h=G_{\mathcal G}+L_{\rm gap}$. On $\mathcal E$, $G_{\mathcal G}$ lies between $\max\widehat g-\epsilon$ and $\max\widehat g+\epsilon$. Nonnegativity gives the lower bound. At an interior continuum minimizer Taylor's formula to a nearest grid point gives $Ma^2/8$; endpoint minima are already on the grid.
\end{proof}

The event is a separate statistical requirement. If deterministic budgets $V_\lambda\ge\mathbb E|e_\lambda^{\rm risk}|^2$ are supplied, a union bound and Markov's inequality give a valid allowance
\begin{equation}
 \epsilon=\sqrt{\delta^{-1}\sum_{\lambda\in\mathcal G}V_\lambda},
 \qquad\Pr(\mathcal E)\ge1-\delta,\quad0<\delta<1.
 \label{eq:fs-regret-epsilon}
\end{equation}
The interval uses the supplied deterministic budgets. Estimated noise variance and numerical derivatives contribute their corresponding allowances; Supplementary Section~\ref{app:risk-proofs} gives the full criterion-error identity.

\subsection{Residual-space derivation}

Fix one response realization and abbreviate $G=G_Y$, $L=L_Y$, $P=P_Y$, $E=E_p$, $y=\mathsf S^{-1/2}b$ and $z=G\widehat\beta_0-y$. The baseline least-squares equations give $G^\top z=0$, $Pz=z$, $Lz=0$. Put $\delta_\lambda=\widehat\beta_\lambda-\widehat\beta_0$ and
$w=(I+\lambda E)^{-1}(G\delta_\lambda+z)$. The weighted normal equations imply $G^\top w=0$, hence $Pw=w$. Multiplying $G\delta_\lambda+z=(I+\lambda E)w$ by $P$ and $L$ gives
\begin{equation}
 z=(I+\lambda PEP)w,\qquad
 \delta_\lambda=\lambda LEw
   =\lambda LE(I+\lambda PEP)^{-1}z.
 \label{eq:app-residual-path-derivation}
\end{equation}
Since $\|PEP\|\le\kappa<1$, the inverse exists throughout the path. Diagonalizing $PEP$ proves \eqref{eq:fs-exact-strength-path}. The equivalent pushed-through inverse expression in the main text follows from $Pz=z$. All matrices in this proof are fixed only within this response realization, not across repeated responses.

Differentiating the scalar rational coefficients gives the stated first and second strength derivatives. Since $|1+\lambda\rho_k|\ge1-\kappa$,
\begin{equation}
 \sup_\lambda\|\widehat\beta_\lambda-\widehat\beta_0\|\le B_0,
 \quad\sup_\lambda\|\dot\beta_\lambda\|\le B_1,
 \quad\sup_\lambda\|\ddot\beta_\lambda\|\le B_2.
 \label{eq:app-path-envelopes}
\end{equation}
These remain valid for singular $A_U-A_L$, active clipping and indefinite corrections.

\subsection{Realized-loss stationarity and expected-risk regret}

For a fixed response, factor the rank-$m$ update as $E_p=U_EA_EU_E^\top$. Pushing the inverse through this factor gives
\begin{equation}
 \widehat\beta_\lambda-\widehat\beta_0
 =\lambda L_YU_EA_E(I_m+\lambda U_E^\top P_YU_EA_E)^{-1}U_E^\top z_Y.
 \label{eq:app-regret-factorized-risk}
\end{equation}
The determinant $B_Y$ has degree at most $m$. The determinant identity with $I+\lambda P_YE_pP_Y$ and the cap show it never vanishes. Adjugate multiplication gives the error vector a polynomial numerator of degree at most $m$, so realized squared loss has numerator $N_Y$ of degree at most $2m$. Its derivative is
\begin{equation}
 \mathcal L_Y'=\frac{N_Y'B_Y-2N_YB_Y'}{B_Y^3}.
 \label{eq:app-regret-derivative}
\end{equation}
The numerator degree is at most $3m-1$; no generic factor cancellation or attainment of this upper degree is asserted. If $m=0$ the fit is constant in strength. Otherwise endpoints and interior stationary points suffice, except that an identically zero derivative makes every strength a minimizer. The coefficients include the unknown $\beta$, so this is a truth-assisted realized-loss calculation. It is not a feasible selector or an expected-risk polynomial.

For expected risk, dominated differentiation of the protected path yields $R'=\mathbb E\mathcal L_Y'$ and $R''=\mathbb E\mathcal L_Y''$. The domination argument is given below. Taylor expansion at its minimizing $\lambda_*$ proves Theorem~\ref{thm:fs-regret-certificate}; the first-order term vanishes at an interior optimum but can remain at an endpoint. Bounding the spectral derivative uses $h_k(\lambda)\le(1-\kappa)^{-1}$, which yields \eqref{eq:fs-derivative-bound} after taking expectations.

For the event interval, let $g_\lambda=R(h)-R(\lambda)$ and $\widehat g_\lambda=-\widehat q_\lambda$. Fixed-candidate SURE unbiasedness gives $\mathbb E\widehat g_\lambda=g_\lambda$, but its error about this expected gain is distinct from the selection criterion's error about realized loss. The simultaneous event bounds the maximum grid gain. Adding $\min_{\mathcal G}R-\inf_{[0,1]}R$ gives the continuum regret. The nearest-grid-point Taylor argument supplies $Ma^2/8$ when endpoints are included. If valid deterministic second-moment budgets are provided, Markov's inequality and a union bound prove \eqref{eq:fs-regret-epsilon}. Without those budgets the interval is an event implication, not an automatically feasible confidence certificate. The term $-2(t-\beta)^\top w_\lambda$ in \eqref{eq:app-exact-criterion-error} remains after OLS residual orthogonality.

\subsection{Curvature and continuum risk}

Twice differentiating the realized loss gives
\begin{equation}
 \partial_\lambda^2\|\widehat\beta_\lambda-\beta\|^2
 =2\|\dot\beta_\lambda\|^2
       +2\langle\widehat\beta_\lambda-\beta,\ddot\beta_\lambda\rangle.
 \label{eq:app-loss-curvature}
\end{equation}
The triangle inequality and \eqref{eq:app-path-envelopes} bound its absolute value by $\mathcal M_Y=2B_1^2+2(\|\widehat\beta_0-\beta\|+B_0)B_2$. Under Gaussian errors its expectation is finite by Supplementary Section~\ref{app:fit-regularity}, or directly by bounding $\|L\|$, $\|z\|$ and the finite sums using the ridge. Dominated differentiation gives a continuous risk curve with $\sup_\lambda|R''(\lambda)|\le\mathbb E\mathcal M_Y$ and an attained minimum on $[0,1]$. Finiteness at fixed design does not imply a small curvature bound uniformly across designs.

\begin{proposition}[Risk need not be convex in strength]\label{prop:fs-nonconvex-example}
A frozen rank-one update satisfying the relative spectral cap need not give convex realized loss; clipping need not be active. This example concerns the frozen affine weighting class, rather than a constructed sample-derived CC-PPI pipeline. Take the frozen affine case $d=1$ with $S=I_2$, $H=(1,0)^\top$, $b=(3,5)^\top$, $\beta=0$, and protected correction $\Delta_p=T(-\tfrac15)T^\top$ with $T=(1,-1)^\top$. Then $S^{-1/2}\Delta_pS^{-1/2}$ has eigenvalues $0$ and $-\tfrac25$, within the cap $\kappa=\tfrac12$, and the exact path \eqref{eq:fs-exact-strength-path} is $\widehat\beta_\lambda=3-\lambda/(1-\tfrac15\lambda)$. Its squared loss $\mathcal L_Y(\lambda)=\widehat\beta_\lambda^2$ has $\mathcal L_Y''(0)=-\tfrac25<0$.
\end{proposition}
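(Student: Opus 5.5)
The plan is a direct verification. Every ingredient is deterministic: $S$, $H$, $b$ and $\Delta_p$ are fixed, so realized loss and risk coincide. The argument therefore splits into three checks. First, the update respects the cap. Second, the frozen-weight solution \eqref{eq:affine-solution} reproduces the displayed path. Third, the second derivative of the squared loss at $\lambda=0$ is negative.

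For the cap, note that $S=I_2$, so the whitened update is $\Delta_p$ itself. The matrix $TT^\top=\left(\begin{smallmatrix}1&-1\\-1&1\end{smallmatrix}\right)$ has eigenvalues $2$ and $0$. Hence $\Delta_p=-\tfrac15TT^\top$ has eigenvalues $-\tfrac25$ and $0$. Both lie in $[-\tfrac12,\tfrac12]$, so clipping is inactive and $\Delta_p$ is its own protected version. By the sandwich \eqref{eq:app-uniform-positive-path}, the blend $S_\lambda=I+\lambda\Delta_p$ is positive definite for all $\lambda\in[0,1]$.

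For the path, I would not pass through the spectral coordinates of \eqref{eq:fs-exact-strength-path}; I would solve \eqref{eq:affine-solution} directly. Put $k=\lambda/5$. Then
\[
S_\lambda=\begin{pmatrix}1-k&k\\k&1-k\end{pmatrix},\qquad
S_\lambda^{-1}\propto\begin{pmatrix}1-k&-k\\-k&1-k\end{pmatrix}.
\]
Since $H=(1,0)^\top$ selects the first coordinate, $\widehat\beta_\lambda=(W_{11})^{-1}(W_{11}b_1+W_{12}b_2)=3+5W_{12}/W_{11}$. Substituting $W_{12}/W_{11}=-k/(1-k)$ gives
\[
\widehat\beta_\lambda=3-\frac{5k}{1-k}=3-\frac{\lambda}{1-\lambda/5}.
\]
As a cross-check against \eqref{eq:fs-exact-strength-path}, the residual-space quantities can be read off directly. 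The baseline fit is $\widehat\beta_0=3$ with residual $z=(0,-5)^\top$. The projector is $P=\operatorname{diag}(0,1)$, so $PEP$ has the single eigenvalue $\rho=-\tfrac15$, matching the coefficient $\lambda/(1-\lambda/5)$ in the path.

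For the curvature, write $\mathcal L_Y=\widehat\beta_\lambda^2$ and use $\mathcal L_Y''=2(\widehat\beta_\lambda')^2+2\widehat\beta_\lambda\widehat\beta_\lambda''$ from \eqref{eq:app-loss-curvature}. Differentiating gives
\[
\widehat\beta_\lambda'=-(1-\lambda/5)^{-2},\qquad
\widehat\beta_\lambda''=-\tfrac25(1-\lambda/5)^{-3}.
\]
At $\lambda=0$ these equal $-1$ and $-\tfrac25$, with $\widehat\beta_0=3$. Therefore
\[
\mathcal L_Y''(0)=2+2\cdot3\cdot\left(-\tfrac25\right)=-\tfrac25<0.
\]
Since nothing here is random, $R=\mathcal L_Y$, so the risk is not convex in strength. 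The example has no real obstacle. The only step needing care is the sign convention for $W_{12}/W_{11}$ in the $2\times2$ inversion, which the residual-space cross-check above confirms.
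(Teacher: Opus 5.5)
Your verification is correct: the cap check, the direct $2\times2$ solve of \eqref{eq:affine-solution} giving $\widehat\beta_\lambda=3+5W_{12}/W_{11}=3-\lambda/(1-\lambda/5)$, and the curvature computation $\mathcal L_Y''(0)=2-\tfrac{12}{5}=-\tfrac25$ all hold.

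You differ from the paper only in how you obtain the path. The paper substitutes into the spectral representation \eqref{eq:fs-exact-strength-path}. It computes $G=(1,0)^\top$, $L=(1,0)$, $P=\operatorname{diag}(0,1)$ and $z=(0,-5)^\top$. It then reads off the residual eigenvalue $\rho=-\tfrac15$ with $q=(0,1)^\top$ and the coordinate $u=(L\Delta_pq)(q^\top z)=-1$. You instead invert the frozen covariance directly and use the residual-space quantities only as a cross-check.

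Your route is more elementary. It also independently confirms, in this instance, that \eqref{eq:fs-exact-strength-path} reproduces the actual frozen-weight minimizer. The paper's route has its own payoff: it exhibits the triple $(e_0,u,\rho)=(3,-1,-\tfrac15)$. The text after Proposition~\ref{prop:fs-convexity-balance} reuses this triple to show $\rho A(\rho)=\tfrac35>0$, so the example violates that proposition's convexity condition.

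If you keep the cross-check, fix one imprecision. $P\Delta_pP=\operatorname{diag}(0,-\tfrac15)$ has a zero eigenvalue as well as $-\tfrac15$, so it does not have a single eigenvalue. That zero mode drops out only because $z$ has no first component, which gives $u_1=0$. The paper stresses that zero residual eigenvalues can otherwise contribute a linear term $\lambda u_k$.
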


\begin{proof}
Here $G=S^{-1/2}H=(1,0)^\top$, $L=(1,0)$, $P=\operatorname{diag}(0,1)$, and the restricted correction has residual eigenvalue $\rho=-\tfrac15$. With $q=(0,1)^\top$ and $z=S^{-1/2}(H\widehat\beta_0-b)=(0,-5)^\top$, the residual coordinate is $u=(L\Delta_p q)(q^\top z)=-1$ and $\widehat\beta_0=Lb=3$. Equation~\eqref{eq:fs-exact-strength-path} reduces to the stated rational fit. Differentiating $\mathcal L_Y(\lambda)=\widehat\beta_\lambda^2$ at $\lambda=0$ using $\widehat\beta_0=3$, $\widehat\beta_0'=-1$, and $\widehat\beta_0''=-\tfrac25$ gives $\mathcal L_Y''(0)=2(\widehat\beta_0')^2+2\widehat\beta_0\widehat\beta_0''=2-\tfrac{12}{5}=-\tfrac25$.
\end{proof}

\begin{proposition}[Sufficient convexity and the half-strength balance]\label{prop:fs-convexity-balance}
Consider the scalar case $d=1$, where the exact path \eqref{eq:fs-exact-strength-path} has a single term $\widehat\beta_\lambda=\widehat\beta_0+\lambda u/(1+\lambda\rho)$ with $|\rho|\le\kappa\le1/2$. Put $e_0=\widehat\beta_0-\beta$ and condition on $\rho$:
\begin{equation}
 A(\rho)=\mathbb E[e_0u\mid\rho],\qquad B(\rho)=\mathbb E[u^2\mid\rho].
 \label{eq:fs-alignment-moments}
\end{equation}
Then
\begin{equation}
 R'(\lambda)=2\,\mathbb E\left\{\frac{\lambda B(\rho)+(1+\lambda\rho)A(\rho)}{(1+\lambda\rho)^3}\right\},
 \label{eq:fs-scalar-derivative}
\end{equation}
\begin{equation}
 R''(\lambda)=2\,\mathbb E\left\{\frac{B(\rho)(1-2\lambda\rho)}{(1+\lambda\rho)^4}-\frac{2\rho A(\rho)}{(1+\lambda\rho)^3}\right\}.
 \label{eq:fs-scalar-curvature}
\end{equation}
If $\rho A(\rho)\le0$ almost surely and $\mathbb E u^2>0$, then $R$ is strictly convex on $[0,1]$: its unique minimizer is $\lambda=0$ when $R'(0)\ge0$, $\lambda=1$ when $R'(1)\le0$, and otherwise the unique root of $R'(\lambda)=0$. Under this condition, half strength is the unique minimizer exactly when
\begin{equation}
 \mathbb E\left\{\frac{\tfrac12 B(\rho)+(1+\tfrac12\rho)A(\rho)}{(1+\tfrac12\rho)^3}\right\}=0,
 \label{eq:fs-half-balance}
\end{equation}
a specific population balance between the alignment $A(\rho)$ and the movement moment $B(\rho)$, not a consequence of equal nominal covariance weights.
\end{proposition}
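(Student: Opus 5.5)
We work directly from the scalar path $\widehat\beta_\lambda-\beta=e_0+\lambda u/(1+\lambda\rho)$, so the realized loss is $\mathcal L_Y(\lambda)=e_0^2+2e_0u\,h(\lambda)+u^2h(\lambda)^2$ with $h(\lambda)=\lambda/(1+\lambda\rho)$. The scalar derivatives are $h'=(1+\lambda\rho)^{-2}$ and $h''=-2\rho(1+\lambda\rho)^{-3}$. Since $|\rho|\le\kappa\le1/2$, every denominator is at least $1/2$ on $[0,1]$.

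\emph{First derivative.} Differentiating gives
\[
\mathcal L_Y'=2e_0u\,h'+2u^2hh'=\frac{2\{(1+\lambda\rho)e_0u+\lambda u^2\}}{(1+\lambda\rho)^3}.
\]
Dominated differentiation is justified by the envelopes of Supplementary Section~\ref{supp:path-envelopes}; we assume $\mathbb E e_0^2$ and $\mathbb E u^2$ are finite, as in the Gaussian fixed-design setting. It gives $R'=\mathbb E\mathcal L_Y'$. Conditioning on $\rho$ turns the random factors $e_0u$ and $u^2$ into $A(\rho)$ and $B(\rho)$ while the $\rho$-measurable denominators stay outside, which yields \eqref{eq:fs-scalar-derivative}.

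\emph{Second derivative.} Differentiating the first term $2e_0u(1+\lambda\rho)^{-2}$ gives $-4\rho e_0u(1+\lambda\rho)^{-3}$. For the second term we differentiate $\lambda(1+\lambda\rho)^{-3}$:
\[
\frac{d}{d\lambda}\,\frac{\lambda}{(1+\lambda\rho)^3}=\frac{(1+\lambda\rho)-3\lambda\rho}{(1+\lambda\rho)^4}=\frac{1-2\lambda\rho}{(1+\lambda\rho)^4}.
\]
Multiplying by $2u^2$ and conditioning on $\rho$ as before yields \eqref{eq:fs-scalar-curvature}. The second derivative also needs domination; it is supplied by $\mathcal M_Y$ and its finite expectation.

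\emph{Convexity.} Take $\lambda\in[0,1]$ and $|\rho|\le1/2$. Then $1-2\lambda\rho\ge0$, with equality only when $\lambda=1$ and $\rho=1/2$. The second term of \eqref{eq:fs-scalar-curvature}, $-2\rho A/(1+\lambda\rho)^3$, is nonnegative under $\rho A(\rho)\le0$. Hence $R''\ge0$. Strictness is the step needing care, and we handle it directly. The integrand $B(1-2\lambda\rho)/(1+\lambda\rho)^4$ is strictly positive on $\{B>0\}\cap\{1-2\lambda\rho>0\}$. For $\lambda<1$ the factor $1-2\lambda\rho$ is always positive, so $\mathbb E u^2=\mathbb E B>0$ forces $R''(\lambda)>0$. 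Thus $R''>0$ on $[0,1)$, possibly failing only at the single point $\lambda=1$. That already makes $R'$ strictly increasing on $[0,1]$, so $R$ is strictly convex on the whole closed interval.

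\emph{Minimizer and balance.} Because $R'$ is continuous and strictly increasing, the minimizer is characterized in three cases. It is $0$ when $R'(0)\ge0$, it is $1$ when $R'(1)\le0$, and otherwise it is the unique interior root of $R'$. Half strength is interior, so it is the minimizer exactly when $R'(1/2)=0$. Substituting $\lambda=1/2$ into \eqref{eq:fs-scalar-derivative} gives \eqref{eq:fs-half-balance}, after dropping the factor $2$.

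The main obstacle is the domination and interchange of derivative and expectation. We discharge it using the fixed-ridge growth bounds together with the curvature modulus bound.
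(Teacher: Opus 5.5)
Your proposal is correct and follows essentially the same route as the paper. You differentiate the realized loss under the expectation and condition on $\rho$ to get \eqref{eq:fs-scalar-derivative}--\eqref{eq:fs-scalar-curvature}, then use $\rho A(\rho)\le0$ and $1-2\lambda\rho>0$ for $\lambda<1$ to obtain $R''>0$ on $[0,1)$, and read off the minimizer classification and the balance \eqref{eq:fs-half-balance} from strict monotonicity of $R'$. Two of your steps are somewhat more explicit than the paper's appeal to the growth bounds, without changing the argument: you treat the endpoint $\lambda=1$ (where $1-2\lambda\rho$ can vanish) separately, and you justify domination directly via $1+\lambda\rho\ge1/2$ and square integrability of $e_0,u$.
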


\begin{proof}
Since $\widehat\beta_\lambda-\beta=e_0+\lambda u/(1+\lambda\rho)$,
\begin{equation}
 R(\lambda)-R(0)=\mathbb E\left\{\frac{2\lambda e_0u}{1+\lambda\rho}+\frac{\lambda^2u^2}{(1+\lambda\rho)^2}\right\}.
 \label{eq:fs-scalar-contrast}
\end{equation}
Conditioning on $\rho$ and differentiating under the expectation, justified by the growth bounds in Supplementary Section~\ref{app:fit-regularity}, gives \eqref{eq:fs-scalar-derivative} and \eqref{eq:fs-scalar-curvature}. Under $\rho A(\rho)\le0$ the term $-2\rho A(\rho)$ is nonnegative, while $B(\rho)\ge0$ and $1-2\lambda\rho>0$ for $\lambda<1$ because $|\rho|\le\kappa\le1/2$; hence $R''\ge0$, with strict positivity on the interior whenever $\mathbb E u^2>0$, so $R'$ is strictly increasing and $R$ is strictly convex. Strict convexity yields the unique root/boundary classification, and $R'(\tfrac12)=0$ is \eqref{eq:fs-half-balance}.
\end{proof}

The condition $\rho A(\rho)\le0$ is not automatic. In the example of Proposition~\ref{prop:fs-nonconvex-example}, $e_0=3$, $u=-1$, and $\rho=-\tfrac15$ give $\rho A(\rho)=\tfrac35>0$, consistent with $R''(0)=-\tfrac25<0$. Protection and the rank-one covariance-PPI form therefore do not by themselves deliver the sign of $\rho A(\rho)$.

If $\rho$ is deterministic, $\rho\equiv\rho_0$, then $A(\rho)$ and $B(\rho)$ are constants and the sign of $R'(\lambda)$ is that of $\lambda(B+\rho_0A)+A$, so $R$ is unimodal without any convexity assumption. In the nondegenerate case $B+\rho_0A>0$ this recovers the constant-feedback oracle $\lambda_*=\operatorname{clip}_{[0,1]}\{-A/(B+\rho_0A)\}$, a benchmark that is not established for the actual random sample covariance path.

For completeness, a fixed midpoint also has a conditional envelope bound without a risk-sign assumption. Let
$W_Y=\sum_k\|u_k\|/\{(1-|\rho_k|)|1+h\rho_k|\}$ and $\delta=\tfrac12(\mathbb EW_Y^2)^{1/2}$. Subtracting path coefficients gives $\|\widehat\beta_\lambda-\widehat\beta_h\|_{L^2}\le\delta$. The reverse triangle inequality in $L^2$ then yields
\begin{equation}
 R(h)-R_*^{\rm path}\le R(h)-[\sqrt{R(h)}-\delta]_+^2.
 \label{eq:app-midpoint-regret-envelope}
\end{equation}
This is a general-dimensional finite-sample bound but can be much looser than measured gains; it is not a proof of a universal few-percent gap.

\subsection{Convergent risk expansion}
\label{app:convergent-proof}
\label{supp:truncation-proof}

\begin{proof}[Proof of Proposition~\ref{prop:fs-risk-truncation}]
The Neumann series for $(I+\lambda EP)^{-1}$ gives
$\|\widehat\beta_\lambda-\widehat\beta^{[m]}_\lambda\|\le\lambda^{m+1}U_m$.
Expand the squared loss and apply Cauchy--Schwarz. Since $\|v_j^{[m]}\|\le\kappa^j Z$ and $U_m\le\kappa^{m+1}Z/(1-\kappa)$, the uniform remainder converges geometrically. Finally $p_m(0)=R(0)$, and $|\inf R-\min p_m|\le\bar\epsilon_m$. The exact remainder, polynomial coefficients and explicit geometric bound are derived below.
\end{proof}

In the notation of Proposition~\ref{prop:fs-risk-truncation}, the exact fit displacement is $\lambda L(I+\lambda EP)^{-1}Ez$. The finite geometric-series identity yields
\begin{equation}
 \widehat\beta_\lambda-\widehat\beta^{[m]}_\lambda
 =(-1)^m\lambda^{m+1}L(I+\lambda EP)^{-1}(EP)^mEz.
 \label{eq:app-exact-neumann-tail}
\end{equation}
Because $\|EP\|\le\gamma<1$, the inverse norm is at most $(1-\lambda\gamma)^{-1}\le(1-\gamma)^{-1}$. Also $\|L\|=1/s_{\min}(G)$. This proves the tail-vector bound in the main text without differentiating the clipping operation.

Set $v_j=v_j^{[m]}$ for $j\le m$ and $v_j=0$ otherwise. Retaining the complete squared loss gives
\begin{equation}
 p_m(\lambda)=R(0)+\sum_{k=1}^{2m}c_k\lambda^k,\qquad
 c_k=2\mathbb E\langle e_0,v_k\rangle+
       \sum_{\substack{i+j=k\\1\le i,j\le m}}\mathbb E\langle v_i,v_j\rangle.
 \label{eq:app-risk-polynomial-coefficients}
\end{equation}
The ordered sum includes both cross terms when $i\ne j$. In particular an order-two estimator truncation generally has a quartic risk, not merely a quadratic risk. The coefficients retain their signed expectations and all pilot/weight dependence.

Let $u_m(\lambda)$ denote \eqref{eq:app-exact-neumann-tail}. Then
\begin{equation}
 R(\lambda)-p_m(\lambda)
 =2\mathbb E\langle\widehat\beta^{[m]}_\lambda-\beta,u_m(\lambda)\rangle
  +\mathbb E\|u_m(\lambda)\|^2.
 \label{eq:app-risk-polynomial-remainder}
\end{equation}
Apply the tail-vector bound and the triangle inequality to get \eqref{eq:fs-risk-truncation-budget}. Since $\|v_j\|\le\kappa^jZ$, the explicit uniform bound is
\begin{equation}
 \bar\epsilon_m\le
 \frac{2\kappa^{m+1}}{1-\kappa}
 \mathbb E\left[\left(\|e_0\|+\frac{\kappa}{1-\kappa}Z\right)Z\right]
 +\frac{\kappa^{2m+2}}{(1-\kappa)^2}\mathbb E Z^2.
 \label{eq:app-geometric-risk-budget}
\end{equation}
Cauchy--Schwarz proves its finiteness under the displayed moment conditions. For fixed Gaussian regression and positive ridge, both $\|z\|$ and $1/s_{\min}(G)$ have polynomial growth in the response, by the bounds in Supplementary Section~\ref{app:fit-regularity}; all required moments therefore exist. No uniform-in-$n$ inverse-moment bound is inferred from this fact.

Uniform error gives $|\inf R-\min p_m|\le\bar\epsilon_m$, which proves both sides of \eqref{eq:fs-fixed-path-regret-interval}. If $\lambda_m$ minimizes the population polynomial, then $R(\lambda_m)-\inf R\le2\bar\epsilon_m$ as well. These are statements about deterministic strengths and exact population expectations.

The remaining unknown target in the envelope can be separated from observable integrands. Writing $R_{\rm OLS}=\sigma^2\operatorname{tr}(CC^\top)$,
\begin{equation}
 \bar\epsilon_m\le
 2\mathbb E\left[\left(\|\widehat\beta_0-t\|+\sum_{j=1}^m\|v_j\|\right)U_m\right]
 +2\sqrt{R_{\rm OLS}\mathbb E U_m^2}+\mathbb E U_m^2.
 \label{eq:app-truth-free-envelope}
\end{equation}
This follows from $\|e_0\|\le\|\widehat\beta_0-t\|+\|t-\beta\|$. For known variance all integrands are data-computable, but their population expectations are not observed. Under the Gaussian weak-derivative conditions, a signed alternative is
\begin{equation}
 R(\lambda)-p_m(\lambda)=
 2\mathbb E\{\langle\widehat\beta^{[m]}_\lambda-t,u_m(\lambda)\rangle
       +\sigma^2\langle C,D_Yu_m(\lambda)\rangle_F\}
 +\mathbb E\|u_m(\lambda)\|^2.
 \label{eq:app-signed-polynomial-remainder}
\end{equation}
Stein's identity applies to the complete map $u_m$, including the covariance and protection. It can preserve cancellations lost in norm envelopes. Neither this identity nor \eqref{eq:app-truth-free-envelope} supplies a confidence bound from a single realization.

\subsection{Grid and Monte Carlo uncertainty}

Let $\lambda_*$ minimize the continuous risk. If it is an endpoint, it is already on the grid. Otherwise $R'(\lambda_*)=0$. A nearest grid point is within $a/2$, and Taylor's integral remainder gives
$R(\lambda_j)-R(\lambda_*)\le\mathcal Ma^2/8$. Consequently the maximum grid gain is no larger than $G_h$ and no smaller than $G_h-\mathcal Ma^2/8$. On the stated simultaneous error event, replacing true grid gains by their upper or lower estimates proves \eqref{eq:fs-grid-regret-bound}. No convexity or unique optimizer is required.

The deterministic event implication is separate from its probability. In the reported experiment, simultaneous allowances use Gaussian Monte Carlo approximations, including an allowance for the estimated curvature mean. They are not finite-sample certified concentration bounds. The independently fitted reference is selected using simulated truth on training responses and evaluated on fresh responses. It approximates the best fixed strength, not the best response-dependent selector, and not the best target matrix. Designwise normalization by conditional OLS risk precedes averaging; ratios of pooled unnormalized risks are different quantities.

\section{Experimental Protocols and Additional Results}
\label{app:literature-protocol}

\subsection{Relation to earlier results}

The GMM efficiency bound and regular extremum arguments are classical \citep{hansen1982large,neweymcfadden1994}. The block equality conditions and calibration characterization specialize these arguments to the stacked PPI moments. The prediction-assisted setting follows PPI \citep{angelopoulos2023ppi}. PPI++ \citep{angelopoulos2024ppiplus} and PSPA \citep{miao2025pspa} study adaptive efficiency. Our covariance-target construction and finite-sample risk comparison require the separate analysis leading to Theorem~\ref{thm:fs-signed-transfer}.

Estimated weights can affect finite-sample risk even when first-order efficiency is unchanged. Higher-order GMM/GEL comparisons explicitly distinguish contributions from moment/Jacobian dependence and estimated weights \citep{neweysmith2003}. Windmeijer's correction \citep{windmeijer2005} concerns variance estimation for efficient linear two-step GMM; it does not choose a parameter-MSE-optimal target. These results motivate retaining the dependence between the weights and moments; improvement by $C_U-C_L$ requires the risk analysis developed here.

The first- and second-order Stein identities used in Supplementary Section~\ref{app:risk-proofs} follow the weak-derivative framework of \citep{belleczhang2021}. We derive the signed risk decomposition by affine projection and study its design-dependent sign separately. Tan's normal-mean shrinkage analysis \citep{tan2015} separates direction from magnitude under heteroscedasticity, a distinction that is also relevant here. That analysis uses a prescribed mean-shrinkage direction and known covariance. Applying its minimax conclusions to a response-dependent covariance target would require a separate argument.

SURE-selection optimism is discussed in \citep{tibshiranirosset2019}, with corrected linear-estimator guarantees in \citep{cauchois2021}. For the nonlinear GMM map, we derive a pathwise parameter-loss comparison. Related results on affine aggregation \citep{bellec2018} and ordered-smoother aggregation \citep{bellecyang2020} show that tuning or aggregation can have small cost under suitable structure. The actual fits are affine in the moment parameter but generally nonlinear in $Y$ because of the estimated weights. Applying those aggregation results requires additional structure; our selection bounds likewise allow tuning to improve on the default.

For $C_U-C_L$, Theorem~\ref{thm:fs-primitive-improvement} proves the favorable calibration sign for the three-point design and its local perturbations. The same calculation gives an adverse regime, so broader design conditions remain an open problem. Other questions include a broadly verifiable remainder budget and comparisons across different residual metrics. The results here characterize the target, establish conditional risk bounds and assess the default on the stated experimental designs.

\subsection{Numerical validation designs}

The numerical validation in Section~\ref{sec:finite-sample} uses Gaussian covariates with $\Sigma_{ij}=0.6^{|i-j|}$, independent unit Gaussian response errors, and a unit-length slope proportional to $\operatorname{linspace}(1,0.5,d)$. The external predictor is $f(x)=x^\top\alpha$, where $\alpha=\beta+r\mathbf1/\sqrt{\mathbf1^\top\Sigma\mathbf1}$. The eight laws combine $d\in\{2,5\}$, $n\in\{40,160\}$ and $r\in\{0.2,0.5\}$, with $N=5n$. Covariates are shared across paired discrepancies. All candidate strengths use the same correction, spectral cap and covariance ridge. Repeated responses at a fixed design estimate conditional risk; independently sampled designs measure its variation across covariate samples.

The feasible-selection comparison uses eight independent designs per law and, for each design, 400 training responses and 400 independent evaluation responses. Its penalty coefficient is prespecified. The SURE selector uses observed data, whereas the fixed-strength diagnostic reference also uses the generating parameter. Training and evaluation responses are kept separate when comparing the resulting risks.

The fixed-path comparison in Table~\ref{tab:fs-path-space} uses 16 designs per law. Two independent streams of 4,000 responses train the fixed-strength references on a 101-point grid; a third stream of 8,000 responses evaluates risk. The 128 law--design combinations therefore use 2,048,000 response datasets. Independent reference training assesses selection variability; grid approximation and Monte Carlo uncertainty are treated separately, as described in Supplementary Section~\ref{app:path-proofs}.

The conditional projection calculations use four fixed designs with $d\in\{3,8\}$. For each design, eight OLS residual draws are combined with three independent streams of 1,536 Gaussian responses per residual. The generating parameter is used to evaluate the theoretical decomposition. These conditional calculations illustrate the projection identities and are separate from the eight-law risk comparison in Table~\ref{tab:fs-path-space}. Gaussian/chi-square quadrature at orders 40 and 80 is used for the numerical illustration following \eqref{eq:fs-shift-budget}; the analytic bounds establish a nonempty prediction-error tolerance without relying on a certified quadrature error.

\subsection{Simulation and real-data implementation}
\label{app:implementation}

Linear simulations use 1,000 paired response datasets for each design in the 144 design--configuration combinations of Section~\ref{sec:simulation}. Logistic simulations use 200 response datasets for each of 48 design--configuration combinations. Failed fits, including underidentified RePPI folds, remain in the reported comparison. The noisy-surrogate example evaluates 32 exact population configurations and solves the optimal diagonal correction problem rather than taking the diagonal of the optimal full matrix.

Real-data evaluation uses two seeds and 50 labeled subsets per seed. The prediction rule is fitted on the separate training region and held fixed across all inference subsets. The fixed-strength reference is trained on 50 further subsets using the full-region proxy parameter. Table~\ref{tab:real-preprocessing} specifies the outcomes, covariates and train--evaluation splits. Each inference design includes an intercept. Target covariates, including coded indicators, are standardized using means and standard deviations from the predictor-training region; a zero standard deviation is replaced by one. Outcomes retain their original scales. The predictor uses the unstandardized feature inputs.

\begin{table}[H]\centering\small
\caption{Real-data outcomes and preprocessing. Listed target covariates exclude the intercept. Evaluation-pool sizes and dimensions are given in Table~\ref{tab:real-datasets}.}
\label{tab:real-preprocessing}
\begin{tabular}{p{0.19\textwidth}p{0.73\textwidth}}\toprule
Dataset & Outcome, covariates and split\\\midrule
California Housing & Outcome: median house value. Target and predictor use median income, house age, average rooms, average bedrooms, population, average occupancy, latitude and longitude. Train where latitude is at least the full-data median; evaluate below that median.\\[3pt]
Bike Sharing & Outcome: hourly rental count. Target: temperature, humidity, wind speed, holiday and working-day indicators. The predictor additionally uses season, year, month, hour, weekday, weather and apparent temperature. Train on the first 70\% of rows in temporal order; use the first 5,000 remaining rows for evaluation.\\[3pt]
ACS Income & Outcome: annual income above \$50,000. Target: age, sex and weekly working hours. Predictor inputs: \texttt{AGEP, COW, SCHL, MAR, OCCP, POBP, RELP, WKHP, SEX, RAC1P}, with categorical codes retained as numeric inputs. Train on 2018; evaluate on the first 50,000 observations in the 2019 data.\\[3pt]
Wine Quality & Outcome: quality score. Target and predictor use all 11 physicochemical measurements. Train on red wine and evaluate on all white-wine observations.\\\bottomrule
\end{tabular}
\end{table}

The fixed XGBoost predictors contain 100 trees. The training configuration uses maximum depth 6, learning rate 0.1 and random seed 42, with squared-error regression for the three continuous outcomes and binary logistic loss for ACS Income. The prediction rule is not retrained when the labeled inference subset changes.

Linear pilots and supervised cross-fitting folds must have full column rank. Nonlinear score solving uses residual tolerance $10^{-8}$ and at most 150 iterations. GMM least squares initially permits 300 evaluations and accepts a gradient residual at most $10^{-7}$. A normally terminated fit with residual in $(10^{-7},10^{-5}]$ is refined on the same frozen objective, with objective and parameter tolerances $10^{-14}$, gradient tolerance $10^{-12}$ and at most 300 additional evaluations; the acceptance threshold remains $10^{-7}$. The final objective must not exceed its value at the pilot by more than $10^{-10}$. These criteria check the computed solution without establishing global nonlinear minimization. RePPI covariance pseudoinversion uses relative cutoff $10^{-6}$. The remaining numerical ridges are specified below.

\subsubsection*{Comparator definitions and numerical conventions}

PPI solves the unweighted prediction-corrected score equation. PPI++ uses the scalar plug-in in equation (8) of \citet{angelopoulos2024ppiplus}, evaluated at the supervised pilot and clipped to $[0,1]$. PSPA follows Algorithm 1 of \citet{miao2025pspa}, with its printed upper cap and the full corrected-score Jacobian; the additional lower cap is a separately named sensitivity variant. PDC follows the one-step Algorithm 1 of \citet{gan2024prediction}, with the sample-size factor $N/(n+N)$ and the same supplied prediction scores.

RePPI follows Algorithm 1 of \citet{ji2025predictions}: the imputed-score factor uses the total labeled ratio $n/N$, and the three rotated estimates are aggregated by evaluation-fold size. The real-data recalibration learner fits the outcome on the target covariates and supplied prediction; the simulation's linear prediction is already in the span of its target covariates, so the linear learner uses those covariates and an intercept. Linear recalibration uses a numerical ridge of $10^{-10}$; real-data score covariance inversion uses the stated pseudoinverse cutoff. These learner and stabilization choices are part of the implementation rather than claims that every implementation of RePPI is identical.

Linear PPI, PPI++, PSPA and PDC use their displayed equations without additional covariance ridges. In logistic regression, the pilot-Hessian inverse used for scalar and coordinate tuning and the PDC auxiliary covariance inverse use numerical ridges of $10^{-6}$; score Newton steps and one-step parameter solves use $10^{-8}$. CC-PPI uses only its specified covariance ridge $\eta=n^{-2}$ in the frozen GMM objective, followed by the spectral cap $\kappa=1/2$ and blend $\lambda=1/2$. Both logistic score blocks are sign-reversed relative to the loss-gradient convention, which leaves the covariance and quadratic criterion unchanged.

\subsection{Extended empirical diagnostics}
\label{app:empirical-diagnostics}

\subsubsection*{Logistic simulation at smaller labeled sizes}

\begin{table}[H]\centering\small
\caption{Failed logistic fits at $n=100$, $r=.2$, out of 400 responses across two designs. These counts diagnose the implemented solvers, not a universal statistical threshold.}
\label{tab:sim-logistic-failures}
\begin{tabular}{rrrrr}\toprule
$d$&Sup.&PDC&GMM&CC\\\midrule
2&0&0&0&0\\
5&0&0&0&0\\
10&0&0&0&0\\
20&0&0&2&2\\
\bottomrule\end{tabular}
\end{table}

\subsubsection*{Real-data size-grid diagnostics}

All 100 evaluation subsamples are retained at each size. A failed fit receives infinite extended computational loss, with successful-fit MSE separately saved. The supervised baseline must have full column rank in linear regression; nonlinear fits undergo the residual, gradient and objective checks described in Supplementary Section~\ref{app:literature-protocol}. If the baseline fails, its normalized ratio is unavailable. Main-table regimes begin at $n=25$ for California Housing and Wine Quality, $n=300$ for Bike Sharing and $n=75$ for ACS Income; the full grid below records the smaller sizes and all comparator failures.

\begin{table}[H]\centering\small
\caption{Failed fits out of 100 evaluation subsamples. Counts are method specific; no failed trial is replaced by zero loss.}
\label{tab:real-failures}
\begin{tabular}{rrrrrrrr}\toprule
Task&$n$&Sup.&PPI++&PDC&GMM&CC&RePPI\\\midrule
ACS&50&0&0&0&1&1&94\\
ACS&100&0&0&0&0&0&21\\
ACS&200&0&0&0&0&0&0\\
Bike&50&25&25&25&25&25&95\\
Bike&100&7&7&7&7&7&75\\
Bike&200&1&1&1&1&1&34\\
\bottomrule\end{tabular}
\end{table}

\begin{table}[H]\centering\small
\caption{Smallest tested labeled size from which all larger tested sizes have zero observed failures and MSE ratio at most 200 percent. This is a finite-grid diagnostic, not a population sample-size guarantee.}
\label{tab:real-stability}
\begin{tabular}{rrrrr}\toprule
Estimator&California&Bike&ACS&Wine\\\midrule
Supervised&25&300&50&25\\
PPI&25&300&75&25\\
PPI++&25&300&50&25\\
PSPA&50&300&50&100\\
PDC&75&300&50&75\\
GMM-PPI&25&300&75&25\\
CC-PPI&25&300&75&25\\
RePPI&200&1000&200&500\\
\bottomrule\end{tabular}
\end{table}

\begin{figure}[H]\centering
\includegraphics[width=\textwidth]{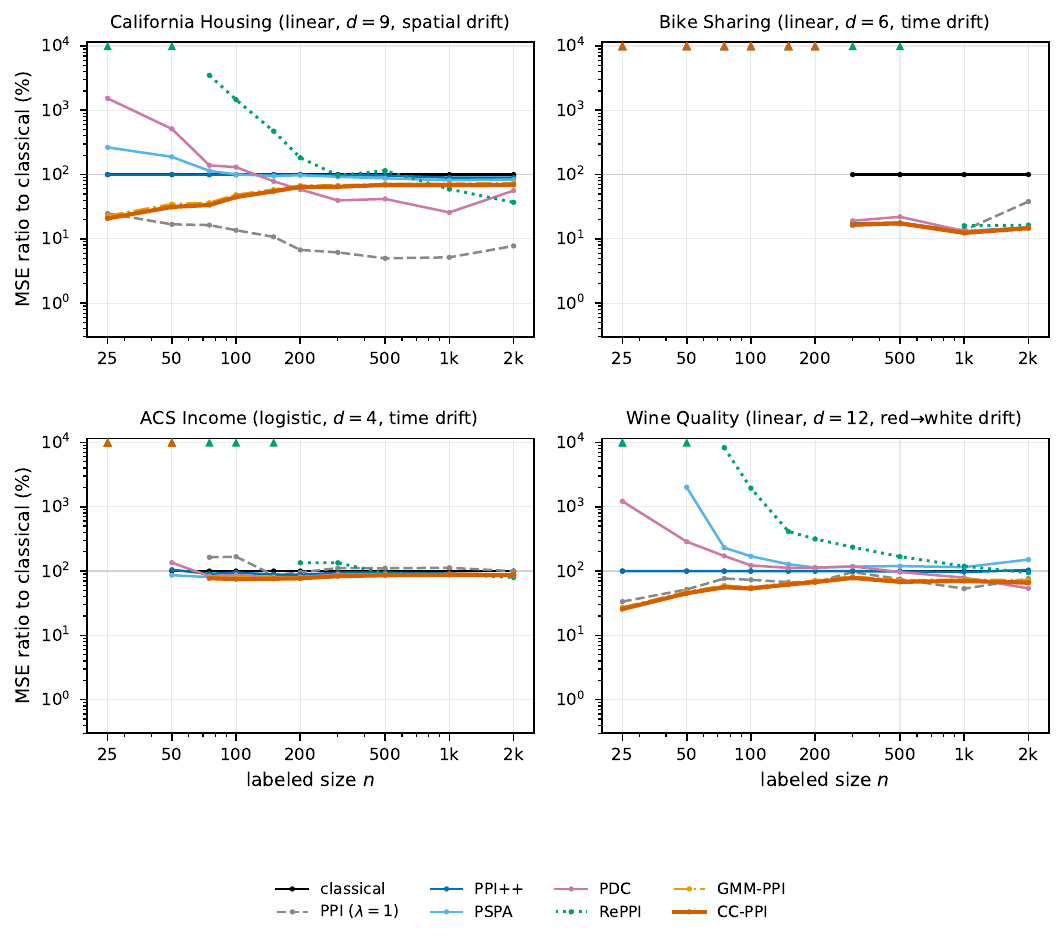}
\caption{Extended comparison: MSE ratio to the full-trial supervised estimate versus labeled size, on logarithmic axes. The horizontal reference is 100. Upward triangles at $10^4$ mark a failed fit, an unavailable denominator, or a truncated ratio; method-specific failure counts are reported in Table~\ref{tab:real-failures}.}
\label{fig:real-data-extended}
\end{figure}

\begin{table}[H]\centering\small
\caption{California Housing: MSE relative to supervised (percent).}
\label{tab:real-california-extended}
\begin{tabular}{rrrrrrrrrr}\toprule
$n$&Sup.&PPI&PPI++&PSPA&PDC&RePPI&GMM&CC&CC$^{\rm ref}$\\\midrule
25&100.0&24.9&100.0&264.7&$\ge10^3$&$\dagger$&22.9&20.8&19.4\\
50&100.0&16.8&99.9&188.7&513.7&$\ge10^3$&34.8&31.4&28.8\\
75&100.0&16.4&99.9&114.1&139.2&$\ge10^3$&36.5&33.6&31.3\\
100&100.0&13.6&99.8&100.6&130.5&$\ge10^3$&48.7&44.8&41.5\\
150&100.0&10.8&99.7&95.1&78.5&473.2&58.5&55.0&51.9\\
200&100.0&6.7&99.6&97.5&58.6&182.8&67.3&64.3&61.5\\
300&100.0&6.2&98.6&92.4&39.7&97.1&67.9&64.6&61.9\\
500&100.0&5.0&96.7&87.4&41.8&115.1&71.6&68.9&67.4\\
1000&100.0&5.2&89.5&81.8&25.8&59.4&71.8&68.6&66.1\\
2000&100.0&7.8&89.4&83.7&56.4&36.9&73.8&69.4&65.3\\
\bottomrule\end{tabular}
\end{table}

\begin{table}[H]\centering\small
\caption{Bike Sharing: MSE relative to supervised (percent).}
\label{tab:real-bike-extended}
\begin{tabular}{rrrrrrrrrr}\toprule
$n$&Sup.&PPI&PPI++&PSPA&PDC&RePPI&GMM&CC&CC$^{\rm ref}$\\\midrule
25&$\dagger$&$\dagger$&$\dagger$&$\dagger$&$\dagger$&$\dagger$&$\dagger$&$\dagger$&$\dagger$\\
50&$\dagger$&$\dagger$&$\dagger$&$\dagger$&$\dagger$&$\dagger$&$\dagger$&$\dagger$&$\dagger$\\
75&$\dagger$&$\dagger$&$\dagger$&$\dagger$&$\dagger$&$\dagger$&$\dagger$&$\dagger$&$\dagger$\\
100&$\dagger$&$\dagger$&$\dagger$&$\dagger$&$\dagger$&$\dagger$&$\dagger$&$\dagger$&$\dagger$\\
150&$\dagger$&$\dagger$&$\dagger$&$\dagger$&$\dagger$&$\dagger$&$\dagger$&$\dagger$&$\dagger$\\
200&$\dagger$&$\dagger$&$\dagger$&$\dagger$&$\dagger$&$\dagger$&$\dagger$&$\dagger$&$\dagger$\\
300&100.0&16.2&17.5&16.0&19.1&$\dagger$&16.8&16.6&16.5\\
500&100.0&17.1&17.8&17.8&22.0&$\dagger$&17.6&17.4&17.3\\
1000&100.0&13.7&13.1&12.9&13.2&16.0&12.5&12.4&12.3\\
2000&100.0&38.0&15.0&15.2&14.8&16.4&14.7&14.7&14.7\\
\bottomrule\end{tabular}
\end{table}

\begin{table}[H]\centering\small
\caption{ACS Income: MSE relative to supervised (percent).}
\label{tab:real-acs-extended}
\begin{tabular}{rrrrrrrrrr}\toprule
$n$&Sup.&PPI&PPI++&PSPA&PDC&RePPI&GMM&CC&CC$^{\rm ref}$\\\midrule
25&$\dagger$&$\dagger$&$\dagger$&$\dagger$&$\dagger$&$\dagger$&$\dagger$&$\dagger$&$\dagger$\\
50&100.0&$\dagger$&104.9&86.2&135.1&$\dagger$&$\dagger$&$\dagger$&$\ge10^3$\\
75&100.0&163.5&90.4&80.4&83.2&$\dagger$&80.5&77.3&76.1\\
100&100.0&167.1&94.9&84.4&87.2&$\dagger$&80.7&74.9&71.4\\
150&100.0&86.0&88.1&79.3&80.3&$\dagger$&79.8&75.9&74.1\\
200&100.0&95.6&87.6&80.2&80.8&134.5&79.3&77.0&76.0\\
300&100.0&111.2&94.1&86.9&88.1&134.6&85.9&83.3&81.8\\
500&100.0&110.3&94.3&89.0&88.8&91.6&87.8&86.0&85.0\\
1000&100.0&112.5&91.4&88.8&88.9&89.0&87.8&87.5&87.4\\
2000&100.0&99.6&86.7&85.8&85.9&79.7&86.9&87.2&87.8\\
\bottomrule\end{tabular}
\end{table}

\begin{table}[H]\centering\small
\caption{Wine Quality: MSE relative to supervised (percent).}
\label{tab:real-wine-extended}
\begin{tabular}{rrrrrrrrrr}\toprule
$n$&Sup.&PPI&PPI++&PSPA&PDC&RePPI&GMM&CC&CC$^{\rm ref}$\\\midrule
25&100.0&33.5&99.6&$\ge10^3$&$\ge10^3$&$\dagger$&27.3&25.7&24.5\\
50&100.0&51.7&99.3&$\ge10^3$&286.4&$\ge10^3$&47.5&45.1&43.4\\
75&100.0&76.5&99.5&230.9&171.8&$\ge10^3$&59.6&56.1&53.7\\
100&100.0&73.1&99.6&169.8&122.4&$\ge10^3$&55.7&53.4&51.8\\
150&100.0&67.2&99.8&127.7&112.1&412.1&64.8&61.7&59.5\\
200&100.0&66.6&99.9&113.3&112.8&315.8&71.4&67.4&64.6\\
300&100.0&96.5&99.6&117.3&118.0&235.1&82.3&78.1&75.4\\
500&100.0&74.9&98.5&119.3&96.5&167.8&71.6&67.9&65.9\\
1000&100.0&53.2&96.5&115.0&78.8&119.0&73.7&70.2&69.3\\
2000&100.0&76.0&103.7&150.7&53.8&94.0&71.9&65.8&62.8\\
\bottomrule\end{tabular}
\end{table}

\end{document}